\title{Instance-Optimal Imprecise Convex Hull}
\author{Sarita de Berg}{IT University of Copenhagen, Denmark}{debe@itu.dk}{0000-0001-5555-966X}{}
\author{Ivor van der Hoog}{IT University of Copenhagen, Denmark}{ivva@itu.dk}{https://orcid.org/0009-0006-2624-0231}{This project has received funding from the European Union's Horizon 2020 research and innovation programme under the Marie Sk\l{}odowska-Curie grant agreement No 899987}
\author{Eva Rotenberg}{IT University of Copenhagen, Denmark}{erot@itu.dk}{0000-0001-5853-7909 }{}{}
\author{Daniel Rutschmann}{IT University Copenhagen, Denmark}{daru@dtu.dk}{ https://orcid.org/0009-0005-6838-2628}{}
\author{Sampson Wong}{University of Copenhagen, Denmark}{sawo@di.ku.dk}{https://orcid.org/0000-0003-3803-3804}{This project has recieved funding from the European Union's Sk\l{}odowska-Curie Actions Postdoctoral Fellowship grant agreement No 101146276.}
\authorrunning{Sarita de Berg, Ivor van der Hoog, Eva Rotenberg, Daniel Rutschmann, Sampson Wong}
\keywords{convex hull, imprecise geometry preprocessing model, partial information}  %TODO mandatory; please add comma-separated list of keywords
\newcommand {\bE} {\mathbb {E}}
\newcommand{\retrieve}{ \hspace{-0.05cm }\ensuremath{\odot}_{\hspace{-0.05cm}\textit{\ssmall P}} \hspace{-0.03cm}}
\newcommand{\rfp}{r(F,P)}
\DeclareMathOperator{\CH}{CH}
\DeclareMathOperator{\OCH}{OCH}
\DeclareMathOperator{\MCD}{MCD}
\DeclareMathOperator{\PHT}{PHT}
\DeclareMathOperator{\poly}{poly}
\DeclareMathOperator{\band}{band}
\DeclareMathOperator{\precorder}{\preceq}
\begin{document}

%\nolinenumbers
%\hideLIPIcs

\maketitle

\begin{abstract}
Imprecise measurements of a point set $P = (p_1, \ldots, p_n)$ can be modelled by a family of regions $F = (R_1, \ldots, R_n)$, where each imprecise region $R_i \in F$ contains a unique point $p_i \in P$. A \emph{retrieval} models an accurate measurement by replacing an imprecise region~$R_i$ with its corresponding point~$p_i$. 

We construct the convex hull of an imprecise point set in the plane, by determining the cyclic ordering of the convex hull vertices of $P$ as efficiently as possible. Efficiency is interpreted in two ways: (i) minimising the number of retrievals, and (ii) the computation time to determine the set of regions that must be retrieved. 

Previous works focused on only one of these two aspects: either minimising retrievals or optimising algorithmic runtime. Our contribution is the first to simultaneously achieve both. Let $r(F, P)$ denote the minimal number of retrievals required by any algorithm to determine the convex hull of $P$ for a given instance $(F, P)$. For a family $F$ of $n$ constant-complexity polygons, our main result is a reconstruction algorithm that performs $\Theta(r(F, P))$ retrievals in $O(r(F, P) \log^3 n)$ time. 

Compared to previous approaches that achieve optimal retrieval counts, we improve the runtime per retrieval from polynomial to polylogarithmic. 
We extend the generality of previous results to simple $k$-gons, to pairwise disjoint disks with radii in $[1,k]$, and to unit disks where at most $k$ disks overlap in a single point. Our runtime scales linearly with $k$.
\end{abstract}

\newpage
\section{Introduction}
Imprecision is inherent in real-world data. It arises from rounding errors in floating-point computations, inaccuracies in measurement, and delayed sampling in GPS devices. In many scenarios, imprecise data can be refined at a cost by computing exact values or by taking additional samples. This is formalised in the model of \emph{imprecise geometry} introduced by Held and Mitchell~\cite{held2008triangulating}, and studied further in~\cite{devillers2011delaunay,held2008triangulating,van2010preprocessing,evans2011possible, ezra2013convex,loffler2010delaunay,loffler2013unions}.  

\subparagraph{Imprecise geometry.} An imprecise point set~\cite{held2008triangulating} is defined as a family of regions $F = (R_1, \ldots, R_n)$, where each region $R_i$ contains a unique but unknown point $p_i$. A \emph{realisation} $P \sim F$ is a sequence $P = (p_1, \ldots, p_n)$ where $p_i \in R_i$. An input instance is a pair $(F, P)$. A \emph{retrieval} operation reveals the precise location of a point $p_i$, replacing $R_i$ with $p_i$. 

Let $F \retrieve B$ denote the family~$F$ after retrieving all $R_i \in B$, where $B \subset F$. 
% A \emph{reconstruction algorithm} gains access to each $p_i$ only after retrieving $R_i$. 
The aim of a \emph{reconstruction algorithm} is to identify a subset $B \subset F$ such that for all realisations $P_1, P_2 \sim F \retrieve B$, the algorithm’s output (e.g., the cyclic order of the
region indices of the points around the convex hull) is identical for $P_1$ and~$P_2$. Figure~\ref{fig:problem} illustrates an example. After retrieving the subset $B = \{ R_3,R_4\}$, the cycling ordering of the convex hull vertices is identically~$(p_1,p_3,p_2,p_4)$ for all realisations of $F \retrieve B$.

% We evaluate reconstruction algorithms by three criteria: the preprocessing time on $F$, the \emph{reconstruction time} (i.e. the running time after the first retrieval), and, the total number of retrievals required.

We evaluate reconstruction algorithms by three criteria: the preprocessing time, the total number of retrievals, and the running time per retrieval. Next, we consider instance-optimal algorithms, which minimise the total number of retrievals in the strictest sense.

\begin{figure}[b]
    \centering
    \includegraphics[width = \linewidth]{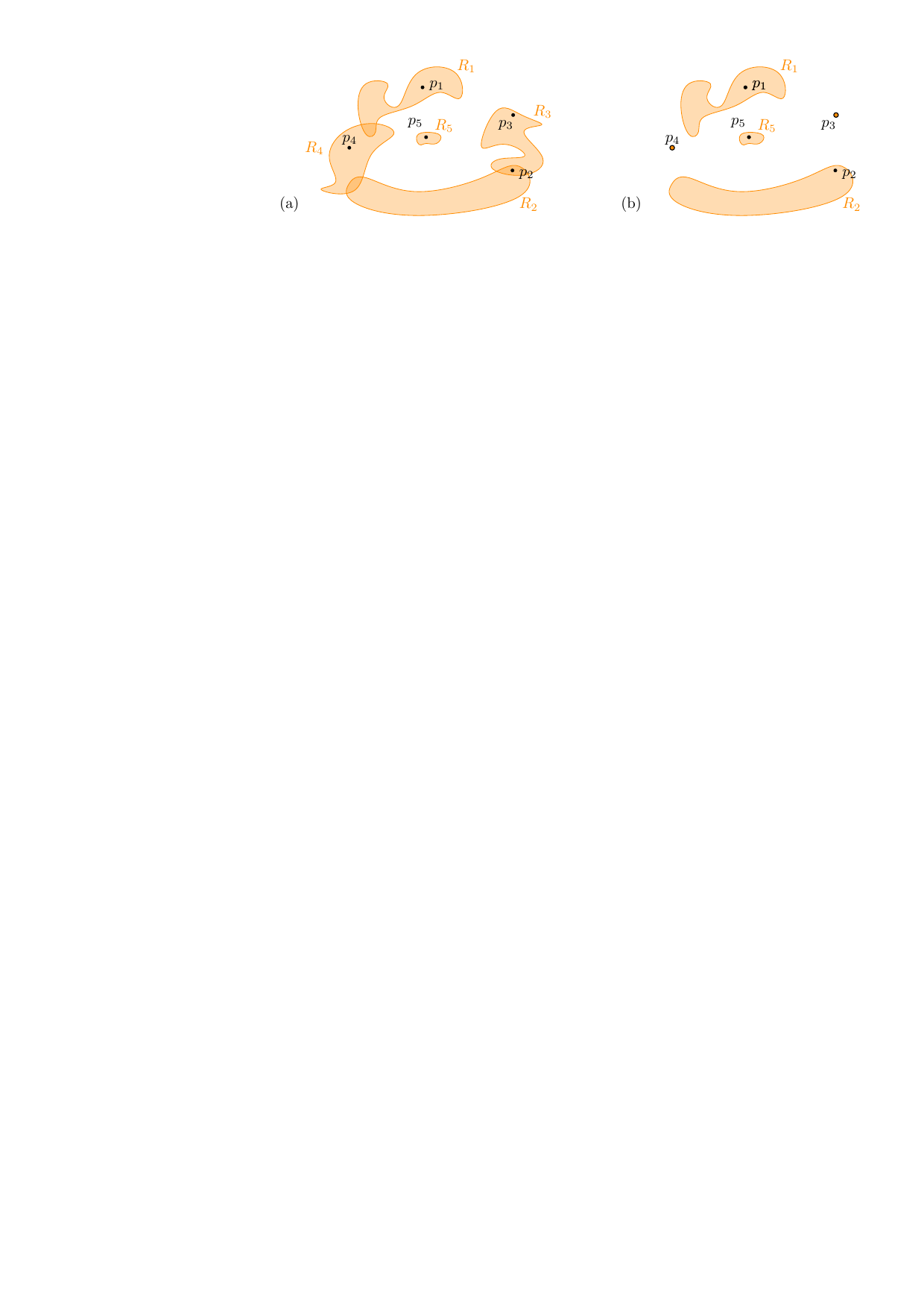}
    \caption{ (a) A family of regions $F = R_1,\ldots,R_5$ and a sequence $P = (p_1,\ldots,p_5)$ with $P \sim F$. 
    (b) If we retrieve $R_3$ and $R_4$ to obtain $F'$ then for all $P' \sim F'$ the convex hull equals $(p_1, p_3, p_2, p_4)$. Note that if $p_3$ would lie in $R_2 \cap R_3$ instead, then retrieving only $R_3$ and $R_4$ does not suffice.     
    }
    \label{fig:problem}
\end{figure}

\subparagraph{Worst-case results.}
Many geometric problems have been studied in imprecise geometry, including Delaunay triangulations~\cite{devillers2011delaunay,held2008triangulating,loffler2010delaunay,van2010preprocessing}, convex hulls~\cite{evans2011possible, ezra2013convex}, Gabriel graphs~\cite{loffler2010delaunay}, and onion decompositions~\cite{loffler2013unions}. 
For these geometric problems, it is known that any reconstruction algorithm must, in the worst case, use $\Omega(n)$ retrievals.
Most of the existing algorithms assume that the regions in $F$ are pairwise disjoint unit disks~\cite{devillers2011delaunay,held2008triangulating,loffler2010delaunay,van2010preprocessing, evans2011possible, loffler2013unions}. Under these assumptions, deterministic worst-case optimal algorithms exist with $O(n \log n)$ preprocessing time, $O(n)$ retrievals, and $O(n)$ reconstruction time.
Ezra and Mulzer~\cite{ezra2013convex} allow $F$ to consist of arbitrary lines and reconstruct  the convex hull using $O(n)$ retrievals and $O(n \cdot \alpha(n))$ expected time where $\alpha(n)$ denotes the inverse Ackermann function. 
Buchin, L\"{o}ffler, Morin and Mulzer~\cite{buchin2011preprocessing} permit overlapping unit disks of \emph{ply} $k$ (i.e., at most $k$ disks intersect each point in the plane). 
We summarize these results in the top three rows of Table~\ref{tab:my_label}. 

\subparagraph{Instance-optimality.}

In many instances, worst case bounds are overly pessimistic. Intuitively, an algorithm is instance-optimal if, for every input $(F, P)$, no other algorithm performs better on that instance.
Constructing instance-optimal algorithms is challenging, as they must match every alternative algorithm, including those tailored for specific instances. 

Afshani, Barbay, and Chan~\cite{afshani2017instance} proved that, for many geometric problems including constructing the convex hull, instance-optimal reconstruction time is unachievable. Likewise, it is easy to see that instance-optimality in the number of retrievals cannot be guaranteed in general:
E.g., let $F$ consist of two overlapping intervals $R_1$ and $R_2$ in $\mathbb{R}$, with $R_1 \setminus R_2 \neq \emptyset$ and $R_2 \setminus R_1 \neq \emptyset$. Define $P_1$ with $p_1 \in R_1 \setminus R_2$ and $p_2 \in R_1 \cap R_2$, and $P_2$ with $p_1 \in R_1 \cap R_2$ and $p_2 \in R_2 \setminus R_1$. In $(F, P_1)$, retrieving $R_1$ suffices; in $(F, P_2)$, retrieving $R_2$ suffices. No algorithm can distinguish between the two instances without making a retrieval, and hence must use two retrievals on one of the instances. Since exact instance-optimality cannot be obtained, we instead aim for \emph{asymptotic} instance-optimality. 

Formally, we denote for any input instance by $r(F, P)$ the minimum integer such that there exists a subset $B \subset F$ of size $r(F,P)$ where all realizations $P' \sim (F \retrieve B)$ have the same vertex-ordering along the convex hull of $P'$. 
We note that $r(F, P)$ is, equivalently, the optimal number of retrievals.
% consider the family of naive algorithms that each retrieve the regions in $F$ in a fixed, arbitrary order, and that after each retrieval checks whether the vertex-ordering of the convex hull is determined. 
% For any instance $(F, P)$, one of the algorithms terminates after $r(F, P)$ retrievals. 
We say that an algorithm is \emph{instance-optimal} if for all inputs $(F, P)$ it uses $\Theta(r(F, P))$ retrievals. 
We highlight that $r(F, P)$ depends on both the region family $F$ and the specific realisation $P$.
To illustrate, suppose $F$ consists of nested rectangles $R_n \subset R_{n-1} \subset \ldots \subset R_1$. If $P \sim F$ is such that the convex hull of $(p_1, p_2, p_3, p_4)$ contains all of $R_5$, then retrieving just those four suffices: $r(F, P) = 4$. But if all $p_i$ coincide then $r(F, P) = n$.

\subparagraph{Prior instance-optimal work.}
Only a few instance-optimal reconstruction algorithms are known (see Table~\ref{tab:my_label}).  Bruce, Hoffmann, Krizanc, and Raman~\cite{bruce2005efficient} presented the first such algorithm for the convex hull.
% and Pareto front, assuming that $F$ contains regions with piecewise $C^1$-smooth boundaries. 
Their approach is computationally expensive, using an unspecified but superlinear time per retrieval (we discuss this in the appendix). Subsequent works show that polylogarithmic time per retrieval is achievable, albeit on geometric structures that are much simpler than the convex hull.

Van der Hoog, Kostitsyna, L\"offler, and Speckmann~\cite{van2019preprocessing} introduced the first near-linear instance-optimal algorithm, solving the sorting problem for one-dimensional intervals. Their algorithm preprocesses the intervals in $O(n \log n)$ time and uses $\Theta(r(F, P))$ retrievals, with at most logarithmic time per retrieval.
Later, Van der Hoog, Kostitsyna, L\"offler, and Speckmann~\cite{van2022preprocessing} extended this approach to two-dimensional inputs for Pareto front reconstruction, under the assumption that $F$ consists of pairwise disjoint axis-aligned rectangles. Their method also guarantees at most logarithmic time per retrieval. 

Here, we show that polylogarithmic time per retrieval is achievable for reconstructing the convex hull with an instance-optimal number of retrievals. Moreover, our work generalises previous works in that we can handle overlapping regions in two dimensions, whereas previous works could only support one-dimensional inputs or disjoint two dimensional inputs. 

\subparagraph{Simultaneous work.}
Simultaneously and independently of this paper, L\"{o}ffler and Raichel~\cite{loffler2025preprocessing} developed an algorithm for reconstructing the convex hull when $F$ is a set of unit disks of ply~$k$. Their number of retrievals (Table~\ref{tab:my_label}) is not instance-optimal but rather worst-case optimal for every instance $F$. 
Formally, they use $O(w(F))$ retrievals for some region-dependent value $w(F)$ with $r(F, P) \leq w(F) \leq n$. 
We discuss their result in the appendix.

\subparagraph{Instance optimality in other fields.}
The study of instance-optimal algorithms extends beyond computational geometry. A prime example is sorting under partial information. Given a partial order $O$ over a set $X$ and an unknown linear extension $L$, the goal is to sort $X$ using a minimum number of comparisons, where a comparison queries the order of a pair $(x, y) \in X$ under $L$. This is directly analogous to retrievals.
Kahn and Saks~\cite{kahn_balancing_1984} introduced an exponential-time algorithm that is instance-optimal in the number of comparisons. Since then, significant progress has been made on improving the runtime of such algorithms~\cite{kahn_entropy_1992,cardinal_sorting_2013}, culminating in near-linear time algorithms that are instance-optimal in both runtime and comparisons~\cite{van2024tight, van2025simpler, Haeupler25}.
Another example is the bidirectional shortest path problem. Haeupler, Hlad{\'\i}k, Rozho{\v{n}}, Tarjan, and T{\v{e}}tek~\cite{haeupler2025bidirectional} show an algorithm that finds the shortest path between two nodes $s$ and $t$ using an instance-optimal number of edge-weight comparisons.

\subparagraph{Our contributions.}
We present the first algorithm for convex hull reconstruction that is instance-optimal while only requiring polylogarithmic time per retrieval. If~$F$ is a family of~$n$ constant-complexity simple polygons, we preprocess $F$ in $O(n \log^3 n)$ time and reconstruct the convex hull of any $P \sim F$ using $\Theta(r(F, P))$ retrievals and $O(r(F, P) \log^3 n)$ total time.
Our approach applies to simple $k$-gons with a linear factor in $k$ in space and time, to pairwise disjoint disks with radii in $[1, k]$, and to unit disks of ply $k$ (see Table~\ref{tab:my_label}). 
Compared to previous approaches for convex hulls that achieve optimal retrieval counts~\cite{bruce2005efficient}, we exponentially improve the runtime per retrieval from polynomial to polylogarithmic. Compared to near-linear time algorithms for convex hulls (or, Delaunay triangulations) \cite{devillers2011delaunay,held2008triangulating,loffler2010delaunay,loffler2025preprocessing,van2010preprocessing,evans2011possible, ezra2013convex}, we significantly reduce the number of retrievals used and broaden the generality of input families.
The latter comes at a cost of a polylogarithmic slowdown if $r(F, P)$ is linear in $n$.

\begin{table}[tb]
    \centering
    \begin{minipage}{\textwidth}
    \begin{tabular}{@{}llllllll@{}}
    \toprule
       Shapes 
       & Overlap & Structures & Preprocess  & Retrievals \hspace{3mm} & Reconstruction time & Source \\
    \midrule
             Unit disks  & No & many & $O(n \log n)$ & $O(n)$ & $O(n)$  & \hspace{-1.5cm}\cite{devillers2011delaunay, evans2011possible ,held2008triangulating, loffler2013unions, loffler2010delaunay, van2010preprocessing} \\
 Disks & $k$ & Delaunay & $O(n \log n)$ & $O(n)$ & \color{black}$O(n \log k)$ rand. & \cite{buchin2011delaunay} \\
 Lines & Yes & hull & $O(n \log n)$ & $O(n)$ & $(n \cdot \alpha(n))$ rand. & \cite{ezra2013convex} \\
            
      \noalign{\medskip}
    
           Intervals & Yes &  sorting & $O(n \log n)$ &  $\Theta( \rfp)$ & $O(\rfp \cdot \log n)$ & \cite{van2019preprocessing} \\
       Axis rect. & No & front & $O(n \log n)$ &  $\Theta( \rfp)$ & $O(\rfp \cdot \log n)$ & \cite{van2022preprocessing} \\      
      Smooth & Yes & front, hull & $O(\poly n) $ & $\Theta(\rfp)$ & $O(\rfp \cdot \poly n )$  & \cite{bruce2005efficient} \\

             \noalign{\medskip}
                    $O(1)$-gons &Yes & hull & $O(n \log^3 n)$ & $\Theta( \rfp)$ & $O(  \rfp \cdot \log^3 n)$ & Thm~\ref{lem:polygons_slow} \\
                    $k$-gons & Yes & hull & $O(kn \log^3 n)$ & $\Theta( \rfp)$ & $O(  \rfp \cdot k \log^3 n)$ & Thm~\ref{thm:polygons_improved} \\
       $[1, k]$-disks & No & hull & $O( kn \log^3 n )$ &  $\Theta( \rfp)$  & $O( \rfp \cdot k \log^3 n)$ & Thm.~\ref{thm:disks} \\
       Unit disks & $k$ & hull & $O(k n \log^3 n)$ & $\Theta(r(F, P))$ & $O( \rfp \cdot k \log^3 n)$ &  Thm.~\ref{thm:disks}  \\
       Unit disks & $k$ & hull & $O(k^3 n)$ & $O(w(F))$ & $O( k^3 \cdot w(F))$ & \cite{loffler2025preprocessing} 
    \end{tabular}
    \caption{
    % Previous results that compute a \emph{sorting}, the Pareto \emph{front}, or a convex \emph{hull}.  
    Previous results that compute a \emph{sorting}, Pareto \emph{front}, convex \emph{hull}, or \emph{Delaunay} triangulation.  
    % The first three rows contain results that are not instance-optimal. The last five rows contain new results. 
    The bottommost result is simultaneous and independent work from ours.    
}
    \label{tab:my_label}
    \end{minipage}
\end{table}

\subparagraph{Organisation.}
In Section~\ref{sec:preliminaries}, we provide some preliminaries.
Then, in Section~\ref{sec:witness_sets_optimal_query_strategy}, we present our general algorithm for reconstructing the convex hull 
% given any family $F$ of two-dimensional regions 
and prove its optimality (\cref{theo:finished}). 
In Section~\ref{sec:kgons}, we present a data structure for simple $k$-gons and give an instance-optimal algorithm that uses $O(kn)$ space and  $O( r(F, P) \cdot k^2 \log^3 (kn) )$ time (Theorem~\ref{lem:polygons_slow}).
In the full version we improve the dependency on $k$ to near-linear (Theorem~\ref{thm:polygons_improved}).
In the full version, we study pairwise disjoint disks with a radius in $[1,k]$, or, unit disks with ply $k$. 
 
\section{Preliminaries}\label{sec:preliminaries}
Recall that an imprecise point set is defined as a family of geometric regions $F = ( R_1, \ldots, R_n )$, and a realisation $P \sim F$ is defined as a sequence $P$ of $n$ points with $p_i \in R_i$.
% for all $1 \leq i \leq n$.
We explicitly allow for duplicate points in $P$ and do not assume general position.
The algorithm has two phases~\cite{held2008triangulating}: a \emph{preprocessing phase}, where only $F$ is available, and a \emph{reconstruction phase}, in which we may perform a \emph{retrieval} on any $R_i \in F$,  replacing $R_i$ with the point~$p_i$. 
%A naïve reconstruction algorithm retrieves each region in $F$ and computes the convex hull of the point set $P$. However, more refined strategies can reduce the number of retrievals.
% We study the problem of constructing the cyclic ordering of the vertices of the convex hull of~$P$. The input $F = (R_1,\ldots,R_n)$ consists of planar geometric regions.
To maintain generality, we require only that each $R_i$ is a point or a closed, connected, bounded region whose boundary is a simple closed piecewise-$C^1$ curve. Non-point regions must have connected interiors and coincide with the closure of their interiors; see Figure~\ref{fig:geometric_regions}.
Next, we define retrievals, convex hulls, vertices, edges, reconstruction strategies, and instance-optimality.

\begin{figure}[b]
    \centering
    \includegraphics{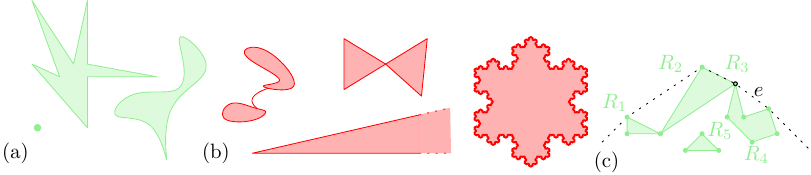}
    \caption{
        (a) Regions may have bends and sharp corners, and, be points.
        (b) Regions may not be unbounded, non-simple, connected by a line, or have infinitely many sharp corners.
        (c) Vertices between regions may coincide, and point regions ($R_3$) may coincide with vertices of other regions. 
    }
    \label{fig:geometric_regions}
\end{figure}

\subparagraph{Retrievals.}
Given a family of regions~$F$, a \emph{retrieval} operation on a non-point region~$R_i$ replaces $R_i$ with its associated point~$p_i$ (thereby updating $F$). 
We write $F \retrieve A$ for the family resulting from retrieving all regions in $A \subseteq F$. 
We denote by $F - A$ the family obtained by removing all regions in $A$ from $F$. 
This removal is used in our analysis when we identify regions whose corresponding points cannot appear on the convex hull.

\subparagraph{Critical assumptions.}
Since all regions in $F$ are closed, we claim that we must assume $P$ may contain duplicates and require that the convex hull of a point sequence includes all collinear and duplicate points. 
To illustrate, let $F$ consist of $n$ identical unit squares, and suppose $P \sim F$ includes four points forming the corners of one square. A lucky algorithm might retrieve these four points first and construct the convex hull~\cite{Kahan1991Model}. 
If we assumed general position or excluded duplicates from the convex hull, this algorithm could then conclude that all remaining points cannot lie on the convex hull and terminate early.
However, such behaviour cannot be guaranteed against an adversary. Since all regions are indistinguishable, an adversary may simply give these four points last. 
Alternatively, if $F$ consisted of open regions, we could assume general position and exclude duplicates.

\subparagraph{Upper quarter convex hull.}
We focus on the \emph{upper quarter convex hull} of $P$. The other three quarters can be constructed analogously, and the full convex hull results from combining them.
Let $\CH(P)$ denote the upper quarter convex hull: the boundary of the smallest convex set containing $P \cup \{(-\infty, -\infty), (+\infty, -\infty)\}$. 
We assume that $(R_{n+1}, R_{n+2}) = (\{(-\infty, -\infty)\}, \{(+\infty, -\infty)\})$, with corresponding points $(p_{n+1}, p_{n+2})$. These are always included in $F$ and $P$.
For any family of regions $F$, we denote by $\OCH(F)$ the \emph{upper quarter outer convex hull}: the boundary of the smallest convex area enclosing all regions in $F$. 
Thus, $\CH(P)$ and $\OCH(F)$ represent convex hulls over point sequences and region families, respectively.
As convex hulls define boundary curves, we may say that a point $p \in \mathbb{R}^2$ lies \emph{on}, \emph{inside}, or \emph{outside} $\CH(P)$ or $\OCH(F)$.

\subparagraph{Vertices and edges.}
We define convex hull vertices and edges of $\OCH(F)$. 
Although intuitive in simpler settings, these concepts require care due to the generality of $F$.

\begin{definition}
    \label{definition:vertex}   
    A point $p \in R$ is a \emph{vertex} of a region~$R$ if $p$ lies on the boundary of $R$ and every open line segment through~$p$ contains a point not on the boundary of $R$.
    For any $R \in F$, $V(R)$ is the (infinite) set of vertices of $R$. 
\end{definition}

Intuitively, the edges of $\OCH(F)$ connect successive vertices lying on its boundary. 
Due to potential overlaps between regions, vertices may coincide. 
As illustrated in Figure~\ref{fig:geometric_regions}(c), a single edge may therefore be considered to connect different pairs of overlapping vertices.
%, potentially leading to redundancy. 
To avoid such degeneracy, we define $V(F) = \bigcup_{R \in F} V(R)$ as a set, and define edges robustly:

 \begin{definition}
    \label{definition:edge}
    We define an \emph{edge} $(s, t)$ of $\OCH(F)$ to be a pair of distinct vertices in $V(F)$ where the subcurve of~$\OCH(F)$ from~$s$ to~$t$ contains no vertices in $V(F)$ other than~$s$ and~$t$.
\end{definition}

\noindent
Note that a disk has infinitely many vertices but no edges under these definitions.
We also define when a region appears on the outer convex hull:

\begin{definition}
    We say a region $R$ \emph{appears on} $\OCH(F)$ if there is a point $p$ on $\OCH(F)$ with $p \in R$.
    (Observe that we may always pick $p$ to be a vertex of $R$.)
\end{definition}

\subparagraph{Reconstruction algorithms.} We will retrieve regions until all realisations $P' \sim F$ yield the same ordering of points along their convex hull. 
We formalise this via a partial order:

\begin{definition}
    Given $P \sim F$, let $\precorder(\CH(P))$ be the partial order on $[n]$ induced by the left-to-right traversal of $\CH(P$), i.e.
    for $p_a, p_b \in P$, we have $a \prec b$ if and only if $p_a$ and $p_b$ both lie on $\CH(P)$ and $p_a$ lies strictly to the left of $p_b$.
\end{definition}

We say a family $F$ is \emph{finished} if all realisations $P_1, P_2 \sim F$ satisfy $\precorder(\CH(P_1)) = \precorder(\CH(P_2))$. 
A \emph{reconstruction algorithm} retrieves some $B \subseteq F$ such that $F \retrieve B$ is finished. 
Let $r(F, P)$ denote the minimum number of retrievals needed by any such algorithm.

\begin{observation}
\label{obs:trivial_lower_bound}
For any fixed $P \sim F$, let $A \subset F$ be a smallest subset of $F$ such that $F \retrieve A$ is finished. Then $|A|$ is a tight lower bound for $r(F, P)$. 
\end{observation}

A reconstruction algorithm is \emph{instance-optimal} if for all inputs $(F, P)$ it retrieves $\Theta(r(F, P))$ regions before $F$ is finished.
We distinguish two types of reconstruction algorithms:
\begin{itemize}
    \item A \emph{reconstruction strategy} is any such algorithm, analysed only by the number of retrievals.
    \item A \emph{reconstruction program} is a reconstruction algorithm executed on a pointer machine or RAM, and is analysed by both retrievals and instructions. 
\end{itemize}

\subparagraph{The previous instance-optimal reconstruction strategy.}
Bruce, Hoffmann, Krizanc, and Raman~\cite{bruce2005efficient} present an instance-optimal reconstruction strategy for region families $F$ consisting of points and closed piecewise-$C^1$ regions. 
They do not present a corresponding reconstruction program. 
When $F$ consists of disks or polygons, a reconstruction program may be derived, albeit with unspecified polynomial-time costs per retrieval. 
We further discuss their algorithm in the full version.
We rely on a key concept of their paper, i.e. a witness.

\begin{definition}
    Let $F$ be a family of regions and let $A \subseteq F$. 
    Any $A \subset F$ is a \emph{witness} if for all $P' \sim ( F - A)$, the family $A \cup P'$ is not finished.     
\end{definition}

\noindent
We observe that a reconstruction strategy is instance-optimal if, at each step, it 
% examines the current family of regions~$F$ and 
retrieves all regions from a constant-size witness set of~$F$.

\subsection{Explicit dynamic planar convex hull}

We rely on a data structure to maintain the upper quarter convex hull of a dynamic point set.
Overmars and van Leeuwen~\cite{DBLP:conf/stoc/OvermarsV80} provide the best-known solution, achieving deterministic update time $O(\log^2 n)$. 
Their data structure, known as the \emph{Partial Hull Tree} (PHT), is now textbook material. We follow their terminology:

\begin{definition}[PHT from~\cite{DBLP:conf/stoc/OvermarsV80}]
\label{definition:pht}
Given a two-dimensional point set $S$, the \emph{Partial Hull Tree} stores $S$ in a leaf-based balanced binary tree $T$ (sorted by $x$-coordinates).
For each interior node $\nu \in T$, denote by $\CH(\nu)$ the (upper quarter) convex hull of all points in the leaves of the subtree rooted at $v$.
For each $\nu \in T$, with children $(x, y)$ and parent $w$, the PHT stores:
\begin{itemize}
    \item  The \emph{bridge} $e(\nu)$  (the unique edge in $\CH(\nu)$ that is not in $\CH(x)$ and $\CH(y)$), and
    \item  a \emph{concatenable queue} $\mathbb{E}^*(\nu)$ (a balanced tree of the edges in $CH(\nu) - CH(w)$). 
\end{itemize}
  We denote by $\mathbb E(\nu)$  a balanced tree over $CH(\nu)$ and by $V[\nu]$ the vertices in the subtree of $\nu$.
\end{definition}

\noindent
Note that for the root $\rho$ of $T$, $\mathbb E(\rho)= \mathbb{E}^*(\rho)$. 

\section{Witnesses and an instance-optimal reconstruction strategy}\label{sec:witness_sets_optimal_query_strategy}

We introduce a new geometric classification over the edges of $\OCH(F)$. 
This leads to an alternative instance-optimal reconstruction strategy, detailed in Algorithm~\ref{alg:new_strategy}. 
Our results hold in full generality for any family $F$ of points and closed regions whose boundaries are simple, closed, piecewise-$C^1$ curves.
In particular, each region is connected and bounded, and any non-point region has a connected interior equal to its closure; see \cref{fig:geometric_regions}.
We define a classification over vertex pairs $(s, t)$, which we apply to edges of $\OCH(F)$.
However, we formulate the classification for arbitrary vertex pairs, as this generality will be useful for our data structures.
We say that two regions $R_a, R_b \in F$ are \emph{strictly vertically separated} if there exists a vertical line $\ell$ where $R_a$ and $R_b$ lie in opposite open half-planes defined by $\ell$.

\begin{figure}[H]
    \centering
    \includegraphics[page=2]{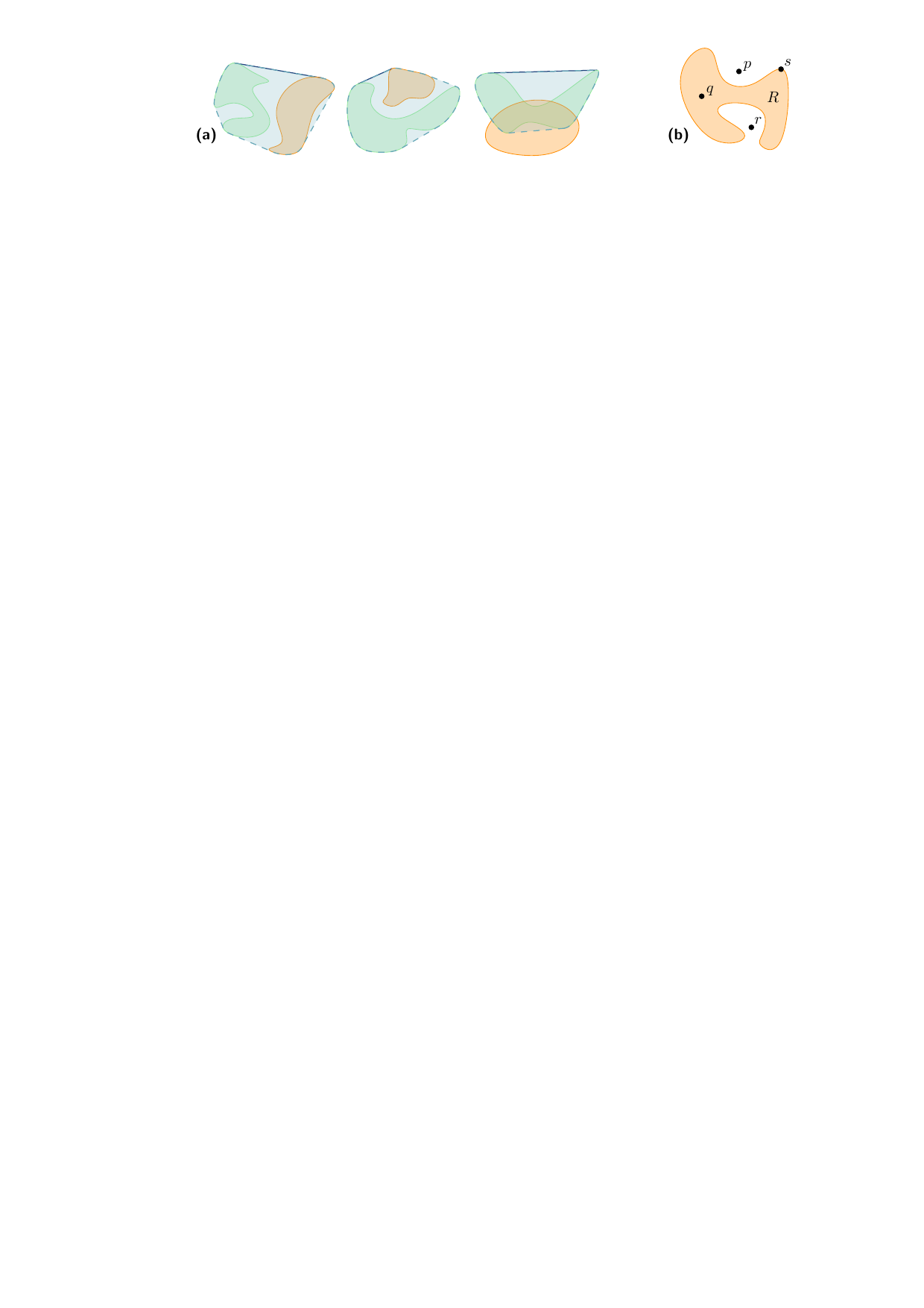}
    \caption{
    Three examples of a band (blue) between regions (green, brown).
    }
    \label{fig:definition_band}
\end{figure}
    
\begin{definition}[Band. Fig.~\ref{fig:definition_band}] \label{def:band}
    For any pair of regions $(R_a, R_b)$ (where $a = b$ is allowed)
we define $\band(R_a, R_b)$ as the area enclosed by the (full) outer convex hull of $\{ R_a, R_b \}$.
\end{definition}

\begin{definition}[Fig.~\ref{fig:canonical-dividing-occupied}]
\label{def:edge_label}
%    Let $s \in V(R_a)$ and $t \in V(R_b)$, with $s\neq t$, for regions $R_a,R_b \in F$.  We say $(s, t)$ is:
    We say $(s,t)\in V(R_a)\times V(R_b)$ with $s\neq t$ and $R_a,R_b \in F$, is:
    \begin{itemize}
        \item \emph{canonical} in $F$ if the following holds for all $x \in \{ s, t \}$.  Either:
        \begin{itemize}
            \item $x$ is exclusively a vertex of point regions, or 
            \item $x$ is a vertex of a unique region that is not a point region. 
        \end{itemize}
        \item \emph{dividing} in $F$ if $(s, t)$ is canonical and either:
        \begin{itemize}
            \item $a =  b$, or
            \item $R_a,R_b$ are strictly vertically separated. 
        \end{itemize}
        \item \emph{occupied} in $F$ if $(s, t)$  is dividing and:
        \begin{itemize}
            \item $\band(R_a, R_b)$ contains a vertex in $V(F - R_a - R_b) - \{ s  \} -  \{ t \}$.
        \end{itemize}
    \end{itemize}
\end{definition}

\noindent
In our reconstruction strategy, we retrieve regions until no non-canonical, non-dividing, or occupied edges remain. This alone does not guarantee that $F$ is finished, so we introduce a final characterisation based on convex chains. 
%Recall that we use the term \emph{inside} instead of \emph{open interior} since convex hulls are defined as closed curves.

\begin{definition}[Spanning chain. Fig.~\ref{fig:spanning-chain}]
    \label{def:spanning}
A convex chain $C = (q, r, \ldots, s, t)$ is \emph{spanning} in $F$ if the following three conditions hold:
    \begin{itemize}
        \item all edges on $C$ are dividing in $F$,
        \item $q \in V(R_a)$, $r, s \in V(R_b)$, $t \in V(R_c)$ where $R_b$ intersects the inside of $\OCH(\{R_a, R_c\})$, 
        \item all vertices of $C$ between and including $r$ and $s$ are in $V(R_b)$.
    \end{itemize}
\end{definition}

\begin{figure}[H]
    \centering
    \includegraphics{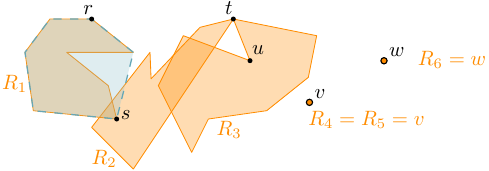}
    \caption{ We illustrate all (sub) cases of vertex pairs corresponding to Definition~\ref{def:edge_label}:
    \begin{itemize}
        \item All pairs that include $t$ are non-canonical, as $t$ is a vertex of more than one non-point region.
        \item All pairs in $\{r,s,u,v, w\}^2$ are all canonical, as $v$ is exclusively a vertex of two point regions.
        \item The pair $(u,v)$ is non-dividing, as $R_3$ and $R_4$ are not vertically separated.
        %\item The pairs that include $t$ are non-dividing as they are not canonical.
        \item  The pairs $(s, u)$, $(r,u)$, $(r,v)$, $(s,v)$, and any pair in $\{ r, s, u, v \} \times \{ w \}$ are dividing because their regions are vertically separated. $(r,s)$ is dividing because they are vertices of the same region.
        \item The pair     
        $(r,s)$ is not occupied since $\band(R_1,R_1)$ only contains vertices of $R_1$. The pair $(v,w)$ is not occupied since $\band(R_4,R_6)$ only contains vertices $v$ and $w$.
        \item  The pair $(u, w)$ is occupied as $V(F - R_3 - R_6)$ contains the vertex $t \in R_2$. The pairs $(s, u)$, $(r, u)$, $(r, v)$, $(s, v)$, $(r, w)$, and $(s, w)$ are also occupied.
    \end{itemize}
        }
    \label{fig:canonical-dividing-occupied}
\end{figure}

\begin{figure}[t]
    \centering
    \includegraphics[page=2]{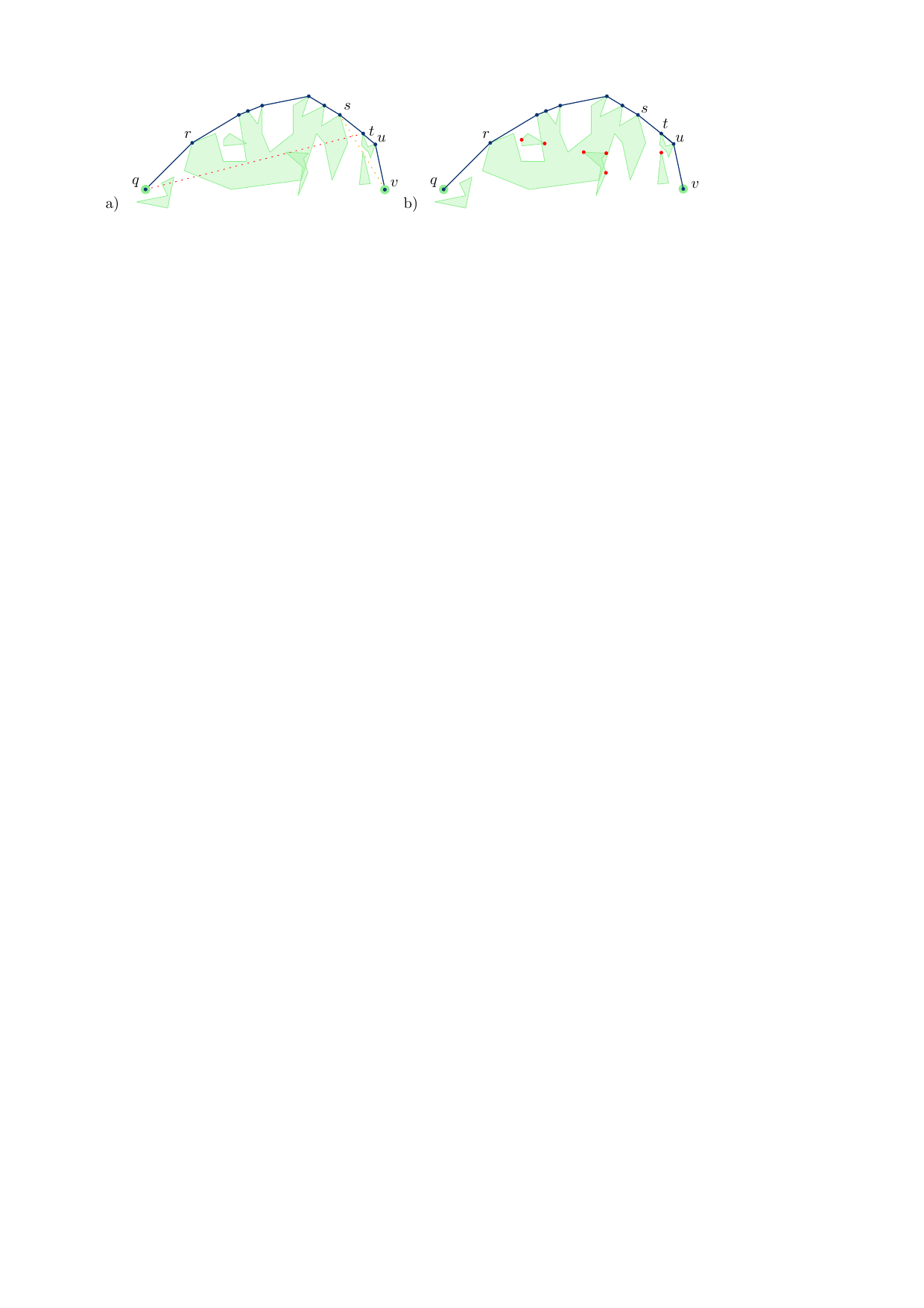}
    \caption{The convex chain from $q$ to $t$ is spanning. The chain from $r$ to $v$ is not spanning.
    }
    \label{fig:spanning-chain}
\end{figure}

\noindent
We use $\OCH(\{R_a, R_c\})$ rather than $\band(R_a, R_c)$ in Definition~\ref{def:spanning} to ensure meaningful behaviour even when $R_a$ and $R_c$ are point regions (since $\band(R_a, R_c)$ would then be empty).
Algorithm~\ref{alg:new_strategy} processes edges of $\OCH(F)$ by a priority based on Definitions~\ref{def:edge_label} and \ref{def:spanning}:

\[
\textnormal{non-canonical } \, \gg  \, \textnormal{ canonical but non-dividing } \,  \gg\,  \textnormal{ occupied } \,  \gg \, \textnormal{ in spanning chain.} 
\]

\begin{algorithm}
    \caption{A data structure friendly reconstruction strategy: Reconstruct$(F)$.}
    \label{alg:new_strategy}
    \begin{algorithmic}
        \IF{ $\exists$ an edge $(s, t)$ of $\OCH(F)$ that is non-canonical in $F$}
            \STATE Let $s \in V(R_a)$ and $s \in V(R_b)$ (or $t$) where $R_a$ is not a point region
            \STATE Reconstruct$(F \retrieve \{R_a, R_b \})$ 
        \ELSIF{ $\exists$ an edge $(s, t)$ of $\OCH(F)$ that is canonical but not dividing in $F$}
            \item Let  $s \in V(R_a)$ and $t \in V(R_b)$ for $a \ne b$ 
            \STATE Reconstruct$(F \retrieve \{R_a, R_b \})$
        \ELSIF{ $\exists$ an edge $(s, t)$ of $\OCH(F)$ that is occupied in $F$}
             \item Let  $s \in V(R_a)$ and $t \in V(R_b)$ for $a \ne b$
             \item Let $R_i$ with $R_i \ne R_a$ and $R_i \ne R_b$ be a region with $V(R_i) \cap \band(R_a, R_b) \neq \emptyset$
            \STATE Reconstruct$(F \retrieve \{R_a, R_b, R_i \})$
        \ELSIF{ $\exists$ a contiguous subchain $C = (q, r, \ldots,  s, t)$ of $\OCH(F)$ spanning in $F$}
        \STATE Let $R_a, R_b, R_c$ be as in the definition of spanning chains
            \STATE Reconstruct$(F \retrieve \{R_a, R_b, R_c\})$
        \ELSE
            \STATE \textbf{Return} $\OCH(F)$
        \ENDIF
    \end{algorithmic}
\end{algorithm}

\noindent
In the following two subsections, we first show that each case in \cref{alg:new_strategy} retrieves a witness. Then, we prove that  if \cref{alg:new_strategy} is instance-optimal.  

\subsection{ \texorpdfstring{Proving that \cref{alg:new_strategy} retrieves only witnesses}{Proving that the strategy retrieves only witnesses}}
\label{section:witnesses}

The first case of \cref{alg:new_strategy} considers a non-canonical edge, which implies that there is a point on $\OCH(F)$ that lies in two regions that are not both point regions.

\begin{restatable}{lemma}{doublepointWitness}

\label{lemm:wit_doublepoint}
    Let $s \in \OCH(F)$ be a point that lies in two regions $R_a, R_b$ that are not both point regions.
    Then $\{R_a, R_b\}$ is a witness.
\end{restatable}
\begin{proof}
    Let $P' \sim (F - R_a - R_b)$ be arbitrary.
    Since $R_a$ and $R_b$ appear on $\OCH(F)$, there are vertices $q, r$ on $\OCH(F)$ with $q \in R_a$ and $r \in R_b$.
    If $q = r$, \cref{lemm:wit_doublepoint} shows that $\{R_a, R_b\}$ is a witness.
    Otherwise, set $p_a = q, p_b = r$, then $R_a$ and $R_b$ appear at distinct points on $\CH(P' \cup \{p_a, p_b\})$.
    On the other hand, as $R_a$ and $R_b$ are not vertically separated, there exist $s \in R_a, t \in R_b$ with the same $x$-coordinate. Set $p_a' = s, p_b' = t$, then $R_a$ and $R_b$ cannot appear at distinct points
    on $\CH(P' \cup \{p_a', p_b'\})$. Indeed, if $s = t$, then $R_a$ and $R_b$ form duplicate points,
    and if $s \ne t$, then at most one of $R_a$ and $R_b$ appears on $\CH(P' \cup \{p_a', p_b'\})$.
    This shows that $P' \cup \{R_a, R_b\}$ is not finished.
\end{proof}

The second case of \cref{alg:new_strategy} considers an edge $(s,t)$ of $\OCH(F)$ that is canonical but not dividing in $F$. Let $s \in V(R_a)$ and $t \in V(R_b)$. Because $(s,t)$ is canonical but not dividing it holds that $a \neq b$ and $R_a$ and $R_b$ are distinct regions that are not strictly vertically separated. The following lemma implies that $\{R_a,R_b\}$ is indeed a witness.

\begin{restatable}{lemma}{verticallyseparatedWitness}
\label{lem:wit_vertically_separated}
    Let $R_a, R_b \in F$ with $R_a \ne R_b$ appear on $\OCH(F)$.
    If $R_a$ and $R_b$ are not strictly vertically separated, then $\{R_a, R_b\}$ is a witness.
\end{restatable}

\begin{proof}
    Let $P' \sim (F - R_a - R_b)$ be arbitrary.
    Since $R_a$ and $R_b$ appear on $\OCH(F)$, there are vertices $q, r$ on $\OCH(F)$ with $q \in R_a$ and $r \in R_b$.
    If $q = r$, \cref{lemm:wit_doublepoint} shows that $\{R_a, R_b\}$ is a witness.
    Otherwise, set $p_a = q, p_b = r$, then $R_a$ and $R_b$ appear at distinct points on $\CH(P' \cup \{p_a, p_b\})$.
    On the other hand, as $R_a$ and $R_b$ are not vertically separated, there exist $s \in R_a, t \in R_b$ with the same $x$-coordinate. Set $p_a' = s, p_b' = t$, then $R_a$ and $R_b$ cannot appear at distinct points
    on $\CH(P' \cup \{p_a', p_b'\})$. Indeed, if $s = t$, then $R_a$ and $R_b$ form duplicate points,
    and if $s \ne t$, then at most one of $R_a$ and $R_b$ appears on $\CH(P' \cup \{p_a', p_b'\})$.
    This shows that $P' \cup \{R_a, R_b\}$ is not finished.
\end{proof}

The third case of \cref{alg:new_strategy} considers an occupied edge. The next lemma shows that any choice of region that has a vertex in the occupied band gives rise to a witness.
\begin{restatable}{lemma}{occupiedWitness}\label{lem:wit_occupied}
    Let $(s, t)$ be an edge of $\OCH(F)$ that is occupied in $F$.
    Let $s \in V(R_a)$ and $t \in V(R_b)$ for $a \ne b$.
    Let $R_i$ be a region not equal to $R_a$ or $R_b$
    with a vertex $q \in V(R_i) \cap \band(R_a, R_b)$. 
    Then $\{R_a, R_b, R_i\}$ is a witness.
\end{restatable}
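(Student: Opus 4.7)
My plan mirrors the structure of \cref{lemm:wit_doublepoint,lem:wit_vertically_separated}: for an arbitrary $P' \sim (F - R_a - R_b - R_i)$, I exhibit two realisations of $(p_a, p_b, p_i) \in R_a \times R_b \times R_i$ whose upper hulls induce distinct partial orders, showing that $\{R_a, R_b, R_i\} \cup P'$ is not finished. Let $\ell$ denote the line through $s$ and $t$. Three preliminaries follow from the hypotheses: since $(s,t)$ is an edge of $\OCH(F)$, $\ell$ weakly dominates $F$ and hence every point of $P'$; since $(s,t)$ is dividing, the horizontal order of $p_a$ and $p_b$ is fixed; and combining $q \ne s, t$ (from occupancy) with the edge definition forces $q$ to lie strictly below $\ell$ in the relevant $x$-range.

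\emph{Realisation A.} Set $p_a = s$, $p_b = t$, $p_i = q$. Then $\overline{st}$ is an upper-hull edge of $P' \cup \{s, t\}$, and $p_i$ lies strictly below it, so no index interposes between $a$ and $b$ on the upper hull in the strip $[x(s), x(t)]$.

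\emph{Realisation B.} Since $q \in \band(R_a, R_b) = \OCH(\{R_a, R_b\})$, there exist $a' \in R_a$ and $b' \in R_b$ with $q \in \overline{a'b'}$. Set $p_a = a'$, $p_b = b'$, $p_i = q$. Now $p_i$ is collinear between $p_a$ and $p_b$, and the chord $\overline{a'b'}$ lies strictly below $\overline{st}$ at $x(q)$. The upper hull of $P' \cup \{p_a, p_b, p_i\}$ between $p_a$ and $p_b$ then contains at least one vertex strictly interior to the strip: either $p_i = q$ itself (whenever no point of $P'$ dominates $q$ at $x(q)$), or else some $P'$ vertex that rises above $\overline{a'b'}$ in order to dominate $q$. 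Either way, Realisation B's partial order contains an interposed index between $a$ and $b$, distinguishing it from Realisation A.

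The main obstacle is the case analysis in Realisation B: proving the lowered chord really does introduce an interposed index rather than reproducing Realisation A's order. The key reduction is that if $p_i = q$ fails to appear on the new hull, then the hull must pass strictly above $q$, and in the strip this forces some $P'$ vertex to sit above $\overline{a'b'}$ and below $\ell$, hence on the new hull. Secondary care is needed for degenerate configurations---one of $R_a, R_b, R_i$ being a point region, or $q$ lying on the boundary of $R_a$ or $R_b$---each handled by invoking canonicity of $(s,t)$ where relevant, by a direct case split, or by substituting a nearby vertex of $R_i$, which exists because $R_i$ has a piecewise-$C^1$ boundary.
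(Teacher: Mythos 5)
Your outline---two realisations against an arbitrary $P' \sim (F - R_a - R_b - R_i)$, with Realisation~A placing $p_a = s$, $p_b = t$, $p_i = q$ exactly as the paper does---is the right skeleton, but your Realisation~B rests on a claim that fails in the generality in which the lemma is stated and used: that $q \in \band(R_a, R_b)$ guarantees points $a' \in R_a$ and $b' \in R_b$ with $q \in \overline{a'b'}$. This is true when $R_a$ and $R_b$ are convex, since then the band is exactly the union of such segments, but the regions of Section~\ref{sec:preliminaries} (and the simple $k$-gons of Section~\ref{sec:kgons}) need not be convex, so $q$ may lie in $\CH(R_a) \setminus R_a$. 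Concretely, let $R_a$ be a thin region hugging the upper unit semicircle, let $R_b$ be a point region far to the upper right (strictly vertically separated from $R_a$), and let $R_i$ be a tiny region with vertex $q = (0, 1/2)$ in the concave pocket of $R_a$. Then $q \in \band(R_a, R_b)$, yet every segment from $R_b$ through $q$ continues into the lower half of the unit disk and misses $R_a$, so no chord $\overline{a'b'}$ through $q$ exists and Realisation~B is undefined. The paper's proof avoids exactly this by taking $p_a', p_b'$ to be the outermost supporting points $s', t'$ of the \emph{lower common tangent} of $R_a$ and $R_b$ (always points of the regions themselves) and using only that $q$, lying in the band, is on or above the line $s' t'$; that choice works for arbitrary, possibly non-convex, regions.

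Even where your chord exists, the closing dichotomy is too quick. What must be shown is that the two realisations induce different partial orders $\precorder$, and ``an interposed index between $a$ and $b$'' is only a statement about $\precorder$ if $p_a = a'$ and $p_b = b'$ themselves lie on the hull in Realisation~B, which your chord endpoints do not guarantee (that case is in fact easy---if one of them drops off the hull the orders already differ from~A---but it must be argued). Also, your claim that Realisation~A has no interposed index in the strip is not automatic under the paper's conventions: in degenerate collinear configurations (e.g.\ $R_b$ a point region and another region with a straight boundary edge ending at $t$ collinear with $st$) a point of $P'$ can lie on the open segment $st$, and collinear points lie on the hull; moreover the $P'$ vertex you produce in Realisation~B is only guaranteed to lie in the strip $[x(a'), x(b')]$, not in $[x(s), x(t)]$, so it does not directly contradict Realisation~A. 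The paper's case split---$q$ outside the slab of $s', t'$, hence strictly above the tangent line and one of $R_a, R_b$ off the hull; or $q$ on or above the segment $s' t'$, hence $(p_a', p_b')$ not a hull edge or $q$ coinciding with an endpoint---yields the distinguishing invariant directly, and is the part your proposal would need to reconstruct.
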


\begin{proof}
Figure~\ref{fig:occupied-edge} illustrates the proof.
Let $P' \sim (F - R_a - R_b - R_i)$. 
    Since $(s,t)$ is occupied (and canonical), both $s$ and $t$ are not in~$R_i$, so $q \notin \{s, t\}$.
    First, set $p_a = s, p_b = t, p_i = q$, then $R_a, R_b$ appear consecutively on $\CH(P' \cup \{p_a, p_b, p_i\})$. 
    Indeed, as $(s, t)$ is an edge of $\OCH(F)$, also $(p_a, p_b)$ is an edge of $\CH(P' \cup \{p_a, p_b, p_i\})$.
    Moreover, $p_i \ne p_a$ and $p_i \ne p_b$.

    \begin{figure}[ht]
        \centering
        \includegraphics{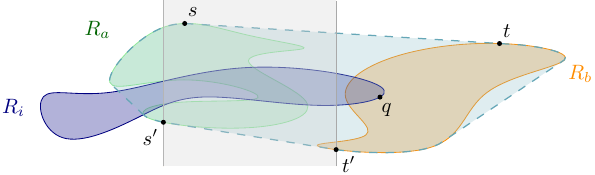}
        \caption{The region $R_i$ has a vertex $q$ inside $\band(R_a,R_b)$. The points $s'$ and $t'$ define the lower tangent. The proof considers two cases based on the location of $q$ with respect to the vertical slab.}
        \label{fig:occupied-edge}
    \end{figure}

    On the other hand, let $s' \in R_a, t' \in R_b$ be the outermost supporting points of the lower tangent of $R_a$ and $R_b$. Then $q$ lies on or above the line $s't'$.
    Set $p_a' = s', p_b' = t', p_i' = q$.
    If~$q$ lies outside of the vertical slab defined by $s'$ and $t'$, then it is strictly above the line $s' t'$ and
    $R_a$ and $R_b$ do not both appear on $\OCH(P' \cup \{p_a', p_b', p_i'\})$. 
    Otherwise, $q$ lies on or above the segment $s' t'$.
    But then either $(p_a', p_b')$ is not an edge of $\CH(P' \cup \{p_a', p_b', p_i'\})$,
    or $p_i' = p_a'$ or $p_i' = p_b'$.
    In all cases, $\CH(P' \cup \{p_a, p_b, p_i\}) \ne \CH(P' \cup \{p_a', p_b', p_i'\})$.
\end{proof}

\noindent
The final case of \cref{alg:new_strategy} considers any remaining spanning subchains of $\OCH(F)$.

\begin{restatable}{lemma}{subchainWitness}
\label{lem:wit_subchain}
    Let $(q, r, \dots, s, t)$ be a contiguous subchain of $\OCH(F)$ that is spanning in $F$. 
    Let $R_a, R_b, R_c$ be as in Definition~\ref{def:spanning}.
    Then $\{R_a, R_b, R_c\}$ is a witness.
\end{restatable}

\begin{proof}
    Let $P' \sim (F - R_a - R_b - R_c)$.
    Set $p_a =q, p_b = s, p_c = t$, then $R_b$ appears on $\CH(P' \cup \{p_a, p_b, p_c\})$.
    On the other hand,
    let $u \in R_a, w \in R_c$ be the outermost supporting points of the upper tangent of $R_a, R_c$.
    Let $v \in R_b$ be in the inside of $\OCH(R_a, R_c)$.
    Then~$v$ lies strictly below the line $u w$.
    More precisely, since $(q, r, \dots, s, t)$ is spanning,
    $R_b$ lies strictly to the right of $R_a$, and strictly to the left of $R_c$,
    hence $v$ lies below the line \emph{segment}~$u w$.
    Set $p_a' = u, p_b' = v, p_c' = w$, then $R_b$ does not appear on $\CH(P' \cup \{p_a', p_b', p_c'\})$.
\end{proof}

\subsection{\texorpdfstring{Proving that \cref{alg:new_strategy} is instance-optimal}{Proving that the algorithm is instance-optimal}}

We now prove that \cref{alg:new_strategy} is instance-optimal. That is, the algorithm retrieves the minimal number of regions (up to constant factors) necessary to determine the ordering of points on the convex hull, for any realisation $P \sim F$.
We first observe that \cref{alg:new_strategy} only terminates when $F$ is \emph{terminal}, meaning: every edge of $\OCH(F)$ is dividing,
no edge of $\OCH(F)$ is occupied, and
no contiguous subchain of $\OCH(F)$ is spanning.

For the remainder of this section, let $A$ denote the multiset of regions that appear on $\OCH(F)$. If $F$ is terminal, the regions in $A$ satisfy the following structure:

\begin{lemma} \label{lemm:order}
    If $F$ is terminal, then all regions in $A$ are strictly vertically separated from each other, with the exception of pairs of regions that are duplicates of the same point. 
\end{lemma}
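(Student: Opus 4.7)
The plan is to argue by transitivity of strict vertical separation along the $x$-monotone traversal of $\OCH(F)$. To each $R \in A$ I assign a \emph{witness vertex}: a canonical vertex $v \in V(R) \cap \OCH(F)$ at which $R$ is the unique non-point region with $v$ as a vertex (when $R$ is non-point), or at which $v = R$ and $v$ is exclusively a vertex of point regions (when $R$ is a point). Terminality of $F$ ensures such a witness exists, because every vertex of $V(F)$ on $\OCH(F)$ that is the endpoint of an edge of $\OCH(F)$ is canonical, and each region in $A$ must contribute such an edge-endpoint vertex. Suppose for contradiction that $R_a, R_b \in A$ are distinct, not strictly vertically separated, and not duplicates of the same point. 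Let $u, w$ be witness vertices of $R_a, R_b$, with $u$ weakly to the left of $w$ on $\OCH(F)$.

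If $u = w$, canonicality at $u$ permits at most one non-point region to have $u$ as a vertex. Any such region, being witnessed by $u$ for both $R_a$ and $R_b$, would have to equal both---contradicting $R_a \ne R_b$. Hence both $R_a$ and $R_b$ are point regions equal to $\{u\}$, a duplicate pair, contrary to assumption. In the remaining case $u$ lies strictly left of $w$, and I consider the sub-chain $\gamma$ of $\OCH(F)$ from $u$ to $w$. Every edge of $\gamma$ is dividing, so every vertex of $V(F)$ on $\gamma$ is canonical with a well-defined main region $M(v)$, and each edge $(v_i, v_{i+1})$ of $\gamma$ satisfies $M(v_i) = M(v_{i+1})$ or $M(v_i)$ strictly vertically separated from $M(v_{i+1})$. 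Collapsing runs of equal main regions yields a sequence $R_a = M_0, M_1, \dots, M_k = R_b$ of pairwise distinct main regions whose consecutive pairs are strictly vertically separated by vertical lines $x = c_i$.

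Finally I would invoke the natural transitivity: since the $M_i$'s appear along $\OCH(F)$ in $x$-increasing order, each $M_{i+1}$ lies strictly to the right of the separating line $x = c_i$ while $M_i$ lies strictly to its left, so for $1 \le i \le k-1$ each intermediate $M_i$ is a connected, bounded region contained in $\{c_{i-1} < x < c_i\}$. This forces $c_0 < c_1 < \dots < c_{k-1}$, and consequently $R_a = M_0 \subseteq \{x < c_{k-1}\}$ and $R_b = M_k \subseteq \{x > c_{k-1}\}$---so $R_a$ and $R_b$ are strictly vertically separated by $x = c_{k-1}$, contradicting our assumption. \textbf{The main obstacle} is the careful main-region bookkeeping at vertices shared between a non-point region and collocated point regions, and verifying that every region in $A$ really does admit a canonical witness vertex---subtle when $F$ contains regions with smooth boundaries (e.g., disks) whose arc on $\OCH(F)$ carries infinitely many vertices, but resolved by observing that the transitions between adjacent regions on $\OCH(F)$ always occur at edge endpoints which are canonical under terminality.
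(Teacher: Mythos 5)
Your overall plan (order the regions along $\OCH(F)$ and chain together strict vertical separations by transitivity) is the same engine as the paper's proof, which walks $\OCH(F)$ left to right and lets every dividing edge push it strictly rightwards. But there is a genuine gap: you never use the \emph{not occupied} half of terminality, and the steps you assert without proof are exactly where it is indispensable. Concretely, the claims that every region of $A$ admits a canonical, edge-endpoint witness vertex and that ``transitions between adjacent regions on $\OCH(F)$ always occur at edge endpoints'' are false if one only assumes that all edges are dividing and no spanning subchain exists. Take a disk $R_b$ internally tangent, at a point $q$ of the upper hull, to a larger disk $R_a$ (plus the two point regions at infinity): the only edges of $\OCH(F)$ are the two tangent segments towards the infinite point regions, they are canonical and dividing, and there is no spanning subchain; yet $R_b$ appears on $\OCH(F)$ only at $q$, which is incident to no edge at all (around $q$ the hull is an arc of $R_a$ carrying infinitely many vertices), and $R_a, R_b$ are neither strictly vertically separated nor duplicates of a point. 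The only thing excluding this configuration from being terminal is that the tangent edges are \emph{occupied}: $\band$ of an infinite point region with $R_a$ contains all of $R_a$, hence vertices of $R_b$. So a proof that, like yours, invokes only dividingness cannot be completed; the paper's walk uses non-occupiedness precisely here, to conclude that each region encountered on the hull is disjoint from all other regions, so that a new region can only be encountered after crossing an edge.

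A related flaw is treating the subchain $\gamma$ between the two witness vertices as a finite sequence of edges $(v_i, v_{i+1})$ with canonical endpoints. In the generality of this lemma, $\OCH(F)$ may contain arcs with infinitely many vertices and no edges, and terminality makes only \emph{edge endpoints} canonical; ``every edge of $\gamma$ is dividing, so every vertex of $V(F)$ on $\gamma$ is canonical'' is a non sequitur, and your main-region map $M(v)$ is not well defined on vertices interior to such arcs. Proving that along an edge-free portion of the hull all vertices belong to a single region (equivalently, that no other region touches the hull there) again requires the occupied condition. Once those facts are in place, your final collapsing/transitivity step is correct and coincides with the induction in the paper's proof.
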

\begin{proof}
    Walk along $\OCH(F)$ from left to right, considering edges $(s, t)$ with $s \in R_a$, $t \in R_b$, $a \ne b$. As $F$ is terminal, this edge is dividing, hence $R_a$ lies strictly to the left of $R_b$. All previously encountered regions lie strictly left of $R_a$, hence also strictly to the left of $R_b$. Moreover, this edge is not occupied, so $R_b$ is disjoint from all regions not equal to $R_b$.
\end{proof}

With Lemma~\ref{lemm:order}, we define for each region $R$ that appears on $\OCH(F)$ its left (or right) neighbour as the first region on $\OCH(F)$ that lies \emph{strictly} left (or right) of $R$. Note that two duplicate point regions on $\OCH(F)$ have the same neighbours. Next, we show that the points corresponding to regions in $A$ always appear on the convex hull (Lemma~\ref{lemm:outer_is_hull}), and that no points corresponding to regions in $F - A$ appear on the convex hull (Lemma~\ref{lemm:inner_regions_not_on_CH}).

\begin{lemma}\label{lemm:outer_is_hull}
   If $F$ is terminal, then for any $P'\sim A$ each point in $P'$ lies on $\CH(P')$.
\end{lemma}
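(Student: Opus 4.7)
The plan is to reduce the lemma to a purely combinatorial statement about convex chains. Using \cref{lemm:order}, I order the multiset $A$ from left to right as $R_1, R_2, \ldots, R_m$ (with the two infinity regions at the extremes), where any two consecutive regions are either strictly vertically separated or duplicates of the same point. For any $P' \sim A$, I want to show that the point sequence $p'_1, \ldots, p'_m$ forms a concave chain in the upper quarter hull sense; once that is established, every $p'_i$, including duplicates, lies on $\CH(P')$ by the paper's convention that the hull includes all duplicate and collinear points.

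The core step is local convexity: for each interior index $i$, I will prove that $p'_i$ lies on or above the chord $\overline{p'_{i-1}p'_{i+1}}$. When $R_i$ is a duplicate of $R_{i-1}$ or $R_{i+1}$ the claim is immediate, since $p'_i$ lies on a degenerate chord. Otherwise $R_i$ is strictly vertically separated from both neighbours, so the walking argument in \cref{lemm:order} shows that the vertices of $R_i$ on $\OCH(F)$ form a contiguous arc $r, \dots, s$ preceded on $\OCH(F)$ by a vertex $q \in V(R_{i-1})$ and followed by a vertex $t \in V(R_{i+1})$. The chain $C = (q, r, \ldots, s, t)$ is a contiguous subchain of $\OCH(F)$ whose edges are all dividing (terminal hypothesis) and whose internal vertices all lie in $V(R_i)$. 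Because $F$ is terminal, $C$ is not spanning, so the only remaining clause of \cref{def:spanning} that can fail is: $R_i$ does \emph{not} intersect the inside of $\OCH(\{R_{i-1}, R_{i+1}\})$.

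To translate this geometric fact about regions into the point-wise statement, I will exploit monotonicity of outer hulls. Since $\{p'_{i-1}, p'_{i+1}\} \subseteq R_{i-1} \cup R_{i+1}$, the closed region bounded by $\CH(\{p'_{i-1}, p'_{i+1}\})$ is contained in the closed region bounded by $\OCH(\{R_{i-1}, R_{i+1}\})$, and the same inclusion holds for their interiors. The previous paragraph gives $p'_i \in R_i$ and $R_i$ avoids the interior of $\OCH(\{R_{i-1}, R_{i+1}\})$, so $p'_i$ also avoids the interior of $\CH(\{p'_{i-1}, p'_{i+1}\})$. The strict vertical separation forces $p'_i$'s $x$-coordinate to lie strictly between those of $p'_{i-1}$ and $p'_{i+1}$; in that strip, the upper boundary of $\CH(\{p'_{i-1}, p'_{i+1}\})$ is precisely the segment $\overline{p'_{i-1}p'_{i+1}}$. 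Hence ``not in the interior'' is ``on or above the chord'', which is exactly local convexity at $p'_i$. Collecting the local conditions over all interior $i$ gives that the chain $p'_1, \ldots, p'_m$ is concave, i.e., every $p'_i$ lies on $\CH(P')$.

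The main thing I expect to need to check carefully is the boundary cases where one of the neighbouring regions is an infinity region $\{(-\infty,-\infty)\}$ or $\{(+\infty,-\infty)\}$. I believe these are absorbed by the upper quarter hull formalism: the containment $\CH(\{p'_{i-1}, p'_{i+1}\}) \subseteq \OCH(\{R_{i-1}, R_{i+1}\})$ is by monotonicity of closed convex hulls and does not require both regions to be bounded, so the interior argument goes through unchanged. The other fiddly point is bookkeeping for duplicate point regions, which show up as degenerate triples on the chain and as multiple coincident points on $\CH(P')$; both situations are handled by the paper's explicit convention that the hull keeps duplicates.
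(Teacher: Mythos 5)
Your main argument is the paper's own proof in contrapositive form: for three consecutive regions of $A$ you build the contiguous subchain $C=(q,r,\ldots,s,t)$ of $\OCH(F)$ whose interior vertices belong to the middle region, use terminality to conclude that $C$ is not spanning, deduce that the middle region avoids the inside of $\OCH$ of its two neighbours, and convert this into ``the realised point lies on or above the chord''. That part is sound, and your explicit monotonicity step ($\CH(\{p'_{i-1},p'_{i+1}\})$ contained in $\OCH(\{R_{i-1},R_{i+1}\})$, interiors included) is exactly what the paper leaves implicit.

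The gap is in the duplicate case together with the final ``collect the local conditions'' step. You prove local convexity only with respect to the immediate predecessor and successor in the multiset ordering of $A$, and you dismiss the case where $R_i$ duplicates $R_{i\pm1}$ because the chord degenerates. But a degenerate chord condition is vacuous, and triple-wise convexity with vacuous conditions at duplicates does not imply that the duplicated point lies on the hull: the sequence $(0,0),(1,-10),(1,-10),(2,0)$ satisfies every one of your local conditions, yet $(1,-10)$ is not on the upper hull. The convention that $\CH(P')$ keeps duplicates and collinear points does not rescue this; it only says that a duplicate of a hull point is also reported, not that the duplicated point is a hull point. The paper avoids the issue by defining the left/right neighbour of a region as the first region \emph{strictly} to its left/right (so duplicate twins are skipped), which makes its chord condition always non-degenerate. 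Your proof is repaired the same way: for a duplicated point region at $z$, run your non-spanning argument on the two-edge contiguous chain $(q,z,t)$ with $q,t$ vertices of the strict left and right neighbours (\cref{def:spanning} permits $r=s$), which gives $z$ on or above the chord of the strict neighbours' realised points; after collapsing duplicates the local-to-global inference is then valid. As written, however, the duplicate case is a genuine hole.
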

\begin{proof}
    Let $P' \sim A$ and $R_b \in A - \{(-\infty, -\infty), (\infty, -\infty)\}$ be arbitrary. Let $R_a$ and $R_c$ be the left and right neighbour of $R_b$ on $\OCH(F)$.
    Observe that these neighbours are well-defined as $(-\infty,-\infty)$ and $(\infty,-\infty)$ lie on $\OCH(F)$.
    \begin{figure}
        \centering
        \includegraphics[page=4]{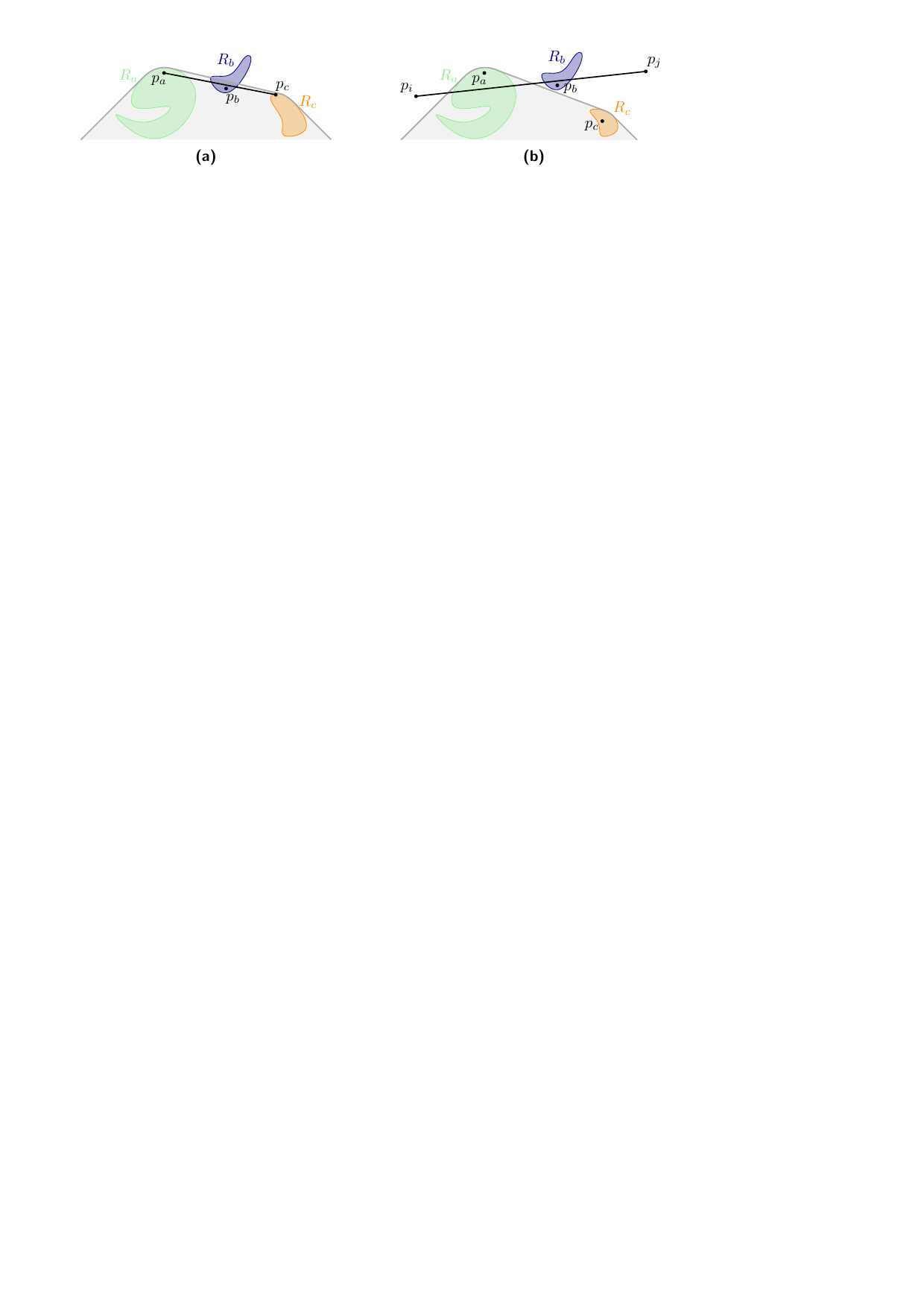}
        \caption{$p_b$ must be above $p_a p_c$, as otherwise there is a spanning contiguous subchain of $\OCH(F)$.}
        \label{fig:outer_is_hull}
    \end{figure}
    Suppose by contradiction that $p_b$ lies below the segment $p_{a} p_{c}$.
    Then $R_b$ intersects the inside of $\OCH(\{R_{a}, R_{c}\})$, see Figure~\ref{fig:outer_is_hull},
    so there is a contiguous subchain of $\OCH(F)$ that is spanning in $F$.
    This contradicts $F$ being terminal.
    Since $R_b$ was arbitrary, this shows that any three consecutive points in $P'$ form a convex-down angle.
    Hence, $P'$ spans a convex polygon, so all points in $P'$ appear on $\CH(P')$.
\end{proof}

\begin{lemma}\label{lemm:inner_regions_not_on_CH}
    If $F$ is terminal, then for any $P \sim F$ only points of regions in $A$ lie on~$\CH(P)$.
\end{lemma}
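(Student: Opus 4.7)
The plan is to argue by contradiction: suppose $p_b \in \CH(P)$ for some $P \sim F$ and some $R_b \notin A$; I will derive a contradiction with the non-occupied property guaranteed by $F$ being terminal.

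First, I would restrict attention to $P_A = \{p_a : R_a \in A\}$. By \cref{lemm:outer_is_hull}, every point of $P_A$ lies on $\CH(P_A)$, so there is a unique edge of $\CH(P_A)$ whose $x$-range contains the $x$-coordinate of $p_b$; call its endpoints $p_a \in R_a$ and $p_c \in R_c$ with $R_a, R_c \in A$. By \cref{lemm:order} the regions in $A$ are strictly vertically separated, so their left-to-right order along $\CH(P_A)$ agrees with their order along $\OCH(F)$. Hence $R_a$ and $R_c$ are adjacent on $\OCH(F)$, yielding an edge $(s, t) \in \OCH(F)$ with $s \in R_a$, $t \in R_c$. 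Since $p_b \in \CH(P)$ and $P \supseteq P_A$, the upper hull of $P$ at $p_b$'s $x$-coordinate is at least as high as that of $P_A$, so $p_b$ lies on or above the chord $p_a p_c$. Because $R_b \notin A$, the region $R_b$ is disjoint from $\OCH(F)$ and therefore lies strictly below the upper boundary of $\band(R_a, R_c)$ (which coincides locally with $\OCH(F)$). Combining this with $p_a p_c \subseteq \band(R_a, R_c)$ (by convexity of the band) sandwiches $p_b$ inside $\band(R_a, R_c)$.

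Finally, I would exhibit a vertex of $R_b$ inside $\band(R_a, R_c)$, contradicting non-occupancy (as $R_b \ne R_a, R_c$). Let $h^*$ be the topmost point of the compact non-empty set $R_b \cap \band(R_a, R_c)$. A case analysis on whether $h^*$ lies on $\partial R_b$ or on $\partial\, \band(R_a, R_c)$ locates the required vertex: $h^*$ cannot lie on the portion of $\partial\, \band$ formed by the edge $(s, t) \subseteq \OCH(F)$ since $R_b \cap \OCH(F) = \emptyset$; in each remaining case, either $h^*$ itself is a vertex of $R_b$ or, using that $R_b$ is the closure of its interior with piecewise-$C^1$ boundary, one finds a strictly higher vertex of $R_b$ still inside the band. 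The main obstacle will be the subcase where $R_b$ is polygonal and a straight boundary side of $R_b$ traverses $\band(R_a, R_c)$ with both endpoints outside the band; there I would use that $R_b$'s interior lies strictly below $(s, t)$ and on a fixed side of that straight boundary side to pin down a vertex of $R_b$ inside the band.
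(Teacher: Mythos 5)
There is a genuine gap in the sandwich step. You claim that, because $R_b \notin A$ is disjoint from the curve $\OCH(F)$, it must lie strictly below the upper boundary of $\band(R_a, R_c)$, ``which coincides locally with $\OCH(F)$''. That coincidence only holds on the middle portion of the band: between the hull-contact vertices of $R_a$ and $R_c$ the band's upper boundary is the upper tangent and does lie on $\OCH(F)$, but over the outer parts of the $x$-extents of $R_a$ and $R_c$ (left of the vertex where the hull edge from $R_a$'s \emph{left} neighbour $R_{a'}$ lands on $R_a$, and symmetrically on the right of $R_c$) the band's upper boundary is just the upper hull of $R_a$ (resp.\ $R_c$), while $\OCH(F)$ there is carried by the edge to the other neighbour and can be strictly higher. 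Since $p_a$ may lie anywhere in $R_a$, the chord $p_a p_c$ can reach into this range, and $p_b$ can sit above $\band(R_a, R_c)$ yet strictly inside $\OCH(F)$; your argument then produces no contradiction. The paper closes exactly this case: if the offending point lies directly above a point of $R_a$ (or $R_c$) but below $\OCH(F)$, then the vertical segment from that point of $R_a$ up to $\OCH(F)$ places it inside the \emph{neighbouring} band $\band(R_{a'}, R_a)$ (or $\band(R_c, R_{c'})$), so a different edge of $\OCH(F)$ is occupied, again contradicting terminality. Without this case analysis the proof is incomplete.

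A secondary issue is your final vertex-extraction step. The paper avoids it entirely by a one-line reduction at the start: if some point of $R_b$ lies on or outside $\CH(P')$, then some extreme point of $R_b$ does too (otherwise all vertices, hence all of $R_b \subseteq \mathrm{conv}(V(R_b))$, would be strictly inside), and extreme points are vertices by Definition~\ref{definition:vertex}. Working instead with the realized point $p_b$ and hunting for a vertex via the topmost point of $R_b \cap \band(R_a, R_c)$ is both more laborious and, as you set it up, incomplete: the problematic subcase is not the polygonal one you flag, but the one where $R_b$ exits the band through the portion of its upper boundary formed by the upper hull of $R_a$ or $R_c$ (rather than through the edge $(s,t)$), which is again the neighbouring-band situation above, and your sketch does not resolve it. I recommend restructuring along the paper's lines: reduce to a vertex of $R_b$ first, then argue that this vertex lies in $\band(R_a, R_c)$, $\band(R_{a'}, R_a)$, or $\band(R_c, R_{c'})$, each of which makes some edge of $\OCH(F)$ occupied.
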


\begin{proof}
    Let $P' \sim A$ be arbitrary. 
    Suppose by contradiction that there is a region $R_i \in F - A$ with a point $p_i \in R_i$ on or outside of $\CH(P')$.
    Since the inside of $\CH(P')$ is convex, we may require $p_i$ to be an extreme point of $R_i$, so $p_i \in V(R_i)$.
    By Lemma~\ref{lemm:order} and Lemma~\ref{lemm:outer_is_hull}, there exist consecutive regions $R_a, R_b \in A$ for which $p_i$ lies on or above $p_a p_b$.
    As $p_a p_b$ lies fully inside $\band(R_a, R_b)$, the point $p_i$ lies in or directly above $\band(R_a, R_b)$.
    We make a case distinction, where $p_i$ lies directly above $R_b$ (or $R_a$), or, on the upper tangent of $R_a, R_b$.
    
    Suppose first that $p_i$ lies directly above a point $q \in R_b$. 
    Let $R_a, R_c$ be the left and right neighbour of $R_b$ on $\OCH(F)$, respectively.
    Let $r$ be the point on $\OCH(F)$ directly above $p_i$.
    Then $r \in \band(R_b, R_c)$ (or $r \in \band(R_a, R_b)$),
    hence also $p_i \in \band(R_b, R_c)$ as bands are convex and $p_i$ lies on $q r$.
    Hence, $F$ is not terminal. Similarly, $p_i$ being directly above a point in $R_a$ implies that $F$ is not terminal.
    Furthermore, $p_i$ cannot lie above the upper tangent of $R_a, R_b$
    since $R_i \notin A$.
    This shows that $p_i$ does not lie above $\band(R_a, R_b)$, hence
    $p_i \in \band(R_a, R_b)$, so $F$ is not terminal.
\end{proof}

\begin{theorem} \label{theo:finished}
    \cref{alg:new_strategy} is an instance-optimal reconstruction strategy.
\end{theorem}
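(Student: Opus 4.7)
The plan is to split \cref{theo:finished} into correctness (when the algorithm halts, $F$ is finished) and instance-optimality (the algorithm uses $\Theta(r(F,P))$ retrievals). The lower bound $r(F,P)$ on the retrieval count is immediate from \cref{obs:trivial_lower_bound}, so only the matching upper bound requires argument.

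For correctness, I would observe that the algorithm terminates only when $F$ is \emph{terminal}, and then chain together the three structural lemmas about terminal families. Let $A$ denote the multiset of regions appearing on $\OCH(F)$. \cref{lemm:order} gives a left-to-right vertical order of the regions in $A$ that depends only on $F$. For any $P \sim F$, let $P_A$ be the subsequence indexed by $A$. \cref{lemm:inner_regions_not_on_CH} then ensures no point of $P \setminus P_A$ appears on $\CH(P)$, so $\CH(P) = \CH(P_A)$, while \cref{lemm:outer_is_hull} ensures every point of $P_A$ lies on $\CH(P_A)$. Combining these, $\CH(P)$ consists of exactly the points of $P_A$ arranged in the vertical order of \cref{lemm:order}, which depends only on $F$. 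Hence $\precorder(\CH(P))$ is the same for every $P \sim F$, so $F$ is finished.

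For instance-optimality, let $A_1, \dots, A_k$ be the (disjoint) sets of regions retrieved in successive iterations of \cref{alg:new_strategy}, each of size at most three, and let $F_i$ denote the family at the start of iteration $i$. By \cref{lemm:wit_doublepoint,lem:wit_vertically_separated,lem:wit_occupied,lem:wit_subchain}, each $A_i$ is a witness in $F_i$. Fix any optimal $B^* \subseteq F$ with $|B^*| = r(F,P)$ and $F \retrieve B^*$ finished. I would then prove the transversal claim $B^* \cap A_i \ne \emptyset$ for every $i$: setting $B^*_i = B^* \setminus (A_1 \cup \dots \cup A_{i-1})$, the family $F_i \retrieve B^*_i$ equals $F \retrieve (A_1 \cup \dots \cup A_{i-1} \cup B^*)$, which is finished because adding retrievals to a finished family only reduces ambiguity. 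If $B^* \cap A_i$ were empty, then $B^*_i \subseteq F_i - A_i$, and the witness property of $A_i$ in $F_i$ would furnish two realisations of $F_i \retrieve B^*_i$ with differing convex hull orders, a contradiction. Disjointness of the $A_i$ gives $k \le |B^*| = r(F,P)$, so the total number of retrievals is at most $3k \le 3\, r(F,P)$.

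The main technical obstacle is the cross-iteration transversal step: the witnesses produced by the four lemmas live in the evolving families $F_i$, whereas $r(F,P)$ is defined on the original $F$. The bridge is the monotonicity of finishedness under additional retrievals, which lets a constraint coming from a local witness in $F_i$ be lifted to a constraint on the global optimum $B^*$. The remainder of the argument is bookkeeping that recombines the already proven structural and witness lemmas.
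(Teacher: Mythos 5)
Your proposal is correct and follows essentially the same route as the paper: correctness via the terminal-family lemmas (\cref{lemm:order}, \cref{lemm:outer_is_hull}, \cref{lemm:inner_regions_not_on_CH}), and optimality by observing that the per-iteration witnesses are pairwise disjoint (since retrieved regions become point regions) and that any finished retrieval set must meet each of them. Your explicit transversal/monotonicity argument is simply a more detailed spelling-out of the step the paper dispatches with ``by the definition of witness,'' so no further changes are needed.
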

\begin{proof}
First, we show that when  \cref{alg:new_strategy} terminates, the set $F$ is finished.
Let $F$ be terminal. Let $A \subseteq F$ be the multiset of regions that appear on $\OCH(F)$.
Let $P' \sim A$ and $Q' \sim F - A$ be arbitrary.
By \cref{lemm:inner_regions_not_on_CH}, no points in $Q'$ are on $\CH(P' \cup Q')$.
By Lemma~\ref{lemm:outer_is_hull}, the points in $P'$ all appear on $\CH(P' \cup Q')$.
Moreover, together with Lemma~\ref{lemm:order}, this implies that for all $P_1 \sim A$ and $P_2 \sim A$, $\precorder(\CH(P_1)) = \precorder(\CH(P_2))$. So, $F$ is finished. Hence, \cref{alg:new_strategy} is a correct reconstruction strategy.

We now show instance-optimality.
By Lemmas~\ref{lemm:wit_doublepoint}, \ref{lem:wit_vertically_separated}, \ref{lem:wit_occupied} and \ref{lem:wit_subchain}, the algorithm retrieves a constant-size witness in each iteration.
Let $X_1, \dots, X_k$ be the sets of non-point regions of these witnesses. Since the witnesses have constant size, \cref{alg:new_strategy} does $O(k)$ retrievals.
A retrieval turns a region into a point region, hence the $X_i$ are pairwise disjoint.
On the other hand, if $(F\retrieve A)$ is finished for some $A \subseteq F$,
then $A$ needs to contain at least one element from each $X_i$ (by the definition of witness).
As the $X_i$ are disjoint, this forces $|A| \ge k$.
Hence, \cref{alg:new_strategy} is instance optimal by Observation~\ref{obs:trivial_lower_bound}.
\end{proof}

\section{ \texorpdfstring{Regions that are simple $k$-gons}{Regions that are simple k-gons}}
\label{sec:kgons}

We present a reconstruction program that executes our strategy from Algorithm~\ref{alg:new_strategy} in polylogarithmic time per retrieval. To this end, we restrict $F$ to a family of $n$ (possibly overlapping) simple polygons, each with at most $k$ vertices. Recall that $V(F) = (v_1, \ldots, v_m)$ denotes the set of all $m \in O(kn)$ \emph{distinct} vertices in $F$. %(i.e., we include only one copy of each unique vertex). 
%For any region $R \in F$, we denote its set of vertices by $V(R)$.
After each retrieval, we update~$F$ by replacing $R_i$ with $p_i$, and update $V$ by deleting $O(k)$ vertices from~$V$ and adding $p_i$ to $V$.

\begin{observation}
\label{obs:points_to_polygon}
    For any family of regions $F$, $\OCH(F)$ has the same edges as $\CH(V(F))$.
\end{observation}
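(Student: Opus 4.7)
My plan is to first show that $\OCH(F)$ and $\CH(V(F))$ coincide as closed curves in the plane, and then to deduce equality of their edge sets directly from Definition~\ref{definition:edge}.

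For the curve equality, I would first verify that every simple polygon $R \in F$ satisfies the sandwich $V(R) \subseteq R \subseteq \CH(V(R))$. The left inclusion is immediate. For the right inclusion, the boundary of $R$ is a concatenation of straight line segments between consecutive vertices in $V(R)$ and hence lies in $\CH(V(R))$; since $\CH(V(R))$ is convex and contains this boundary, it also contains the topological region bounded by it, namely $R$ itself. Taking the union over $R \in F$ and applying the (upper-quarter) convex hull operator, which is monotone under inclusion and idempotent, yields
\[
\CH(V(F)) = \CH\Big(\bigcup_{R \in F} V(R)\Big) \subseteq \CH\Big(\bigcup_{R \in F} R\Big) = \OCH(F) \subseteq \CH\Big(\bigcup_{R \in F} \CH(V(R))\Big) = \CH(V(F)),
\]
so $\OCH(F) = \CH(V(F))$ as curves.

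For the edge equality, I would treat the finite set $V(F)$ as a family of singleton point regions, so that the ``vertices of $F$'' used in Definition~\ref{definition:edge} coincide with the ``vertices of $V(F)$'' used to define edges of $\CH(V(F))$. Applying the definition to both hulls, a pair $(s, t)$ is an edge of $\OCH(F)$ iff $s, t \in V(F)$ are distinct and the subcurve between them on the hull contains no other element of $V(F)$; by the curve equality, this is exactly the condition for $(s, t)$ to be an edge of $\CH(V(F))$.

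I do not foresee any serious obstacles: the observation is essentially an unpacking of the definitions together with the standard fact that a simple polygon is contained in the convex hull of its vertices. The only minor subtlety is the sentinel point regions $\{(-\infty, -\infty)\}$ and $\{(+\infty, -\infty)\}$, but since these are themselves point regions their vertices already lie in $V(F)$ and the argument applies without modification.
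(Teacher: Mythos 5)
Your proposal is correct and follows exactly the reasoning the paper leaves implicit: the paper states this as an observation without proof, and your argument (each polygon satisfies $R \subseteq \CH(V(R))$, hull of a union equals hull of the union of hulls, hence the two hulls coincide as curves, and then Definition~\ref{definition:edge} applied to $V(F)$ viewed as point regions gives identical edge sets) is the intended unpacking of the definitions. The only caveat is that your containment step uses the polygonal structure of the regions, which is fine here since Section~\ref{sec:kgons} restricts $F$ to simple $k$-gons, the setting in which the observation is used.
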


\noindent
Consider a PHT $T$ of $V(F)$ (Definition~\ref{definition:pht}). After one retrieval, let $T'$ be the updated PHT. We define the \emph{recourse set} as the set symmetric difference of all bridges in $T$ and all bridges in $T'$. 
\emph{Recourse} is defined as the maximum size, across all possible retrievals, of any recourse set.
A PHT has $O(\log m)$ recourse per update, and hence $O(k \log m)$ recourse per retrieval. 

\subparagraph{Augmenting the Partial Hull Tree.}
Within each leaf of $T$, corresponding to a vertex $v \in V(F)$, we maintain two doubly linked lists.
The \emph{point region list} stores all point regions that are $v$. The \emph{non-point region list} stores the other regions $R \in F$ where $v \in V(R)$. 
Recall that Algorithm~\ref{alg:new_strategy} retrieves regions corresponding to the endpoint of edges that are:

\[
    \textnormal{non-canonical } \, \gg  \, \textnormal{ canonical but non-dividing } \,  \gg\,  \textnormal{ occupied } \,  \gg \, \textnormal{ in spanning chain}.
\]

\noindent
For every node $\nu \in T$ with bridge $e(\nu) = (s, t)$, we maintain pointers to the leaves containing $s$ and $t$, along with Boolean flags indicating whether $e(\nu)$ is canonical or dividing. 
An update to $F$ may make $O(n)$ edges occupied. So instead, we maintain a different property (Fig.~\ref{fig:spanning-chain-hit}):

\begin{figure}
    \centering
    \includegraphics[page=3]{spanning_chain}
    \caption{The convex chain from $q$ to $v$ is hit in $F$ since $(s, t)$ is occupied due to the red vertices.
    }
    \label{fig:spanning-chain-hit}
\end{figure}

\begin{definition}
    \label{def:hit}
A convex chain $C = (q, r, \ldots, s, t, \ldots, u, v)$ of vertices is \emph{hit} in $F$ if:
\begin{itemize}
    \item all edges of $C$ are dividing, 
    \item the edge $(s, t)$ is occupied in $F$, 
    \item $s \in V(R_a), t \in V(R_b)$ and $a \ne b$, and
    \item $q \notin V(R_a)$, $r, \dots, s \in V(R_a)$, $t, \dots, u \in V(R_b)$ and $v \notin V(R_b)$.
\end{itemize}
\end{definition}

Recall Definition~\ref{definition:pht}, where for  $\nu \in T$, $\mathbb{E}(\nu)$ denotes a balanced binary tree on $\CH(\nu)$.
The PHT stores in $\nu$ a \emph{concatenable queue}~$\mathbb{E}^*(\nu)$ which is $\mathbb{E}(\nu)$ minus all edges that are in $\mathbb{E}(w)$ where $w$ is the parent of $\nu$.
We further augment the data structure by maintaining four balanced trees associated with~$\mathbb{E}^*(\nu)$, where edges are ordered by their appearance in $\mathbb{E}^*(
\nu)$:

\begin{enumerate}
    \item a tree $\Lambda^*(\nu)$ storing all edges $(s, t) \in \mathbb{E}^*(\nu)$ that are non-canonical in $F$,
    \item a tree storing all edges $(s, t) \in \mathbb{E}^*(\nu)$ that are canonical and non-dividing in $F$, 
      \item a tree storing all contiguous subchains of $\mathbb{E}^*(\nu)$ that are spanning in $F$,
    \item a tree storing all contiguous subchains of $\mathbb{E}^*(\nu)$ that are hit in $F$. 
\end{enumerate}

\noindent
We only give the first tree a name since all others are maintained in similar fashion. 
We denote by $\Lambda_\nu$ a balanced tree on all non-canonical edges in $\mathbb{E}(\nu)$.

\begin{observation}[by Definition~\ref{def:edge_label}]
\label{obs:canonical}
For any canonical bridge $(s, t)$ in $T$, there exists a unique area  $R(s)$ that contains $s$. $R(s)$ is either a point, or a unique region in $F$ and can be found in $O(1)$ time by checking the regions lists stored at one of the leaves that $(s, t)$ points to. 
\end{observation}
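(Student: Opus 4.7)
The plan is to read the observation off directly from Definition~\ref{def:edge_label} applied to $s$, and then show that the augmentation of the PHT stores exactly the information needed to answer in $O(1)$.

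First, I would unpack what canonicity at $s$ means. Since $(s,t)$ is canonical in $F$, one of two cases holds: either every $R \in F$ with $s \in V(R)$ is a point region (in which case each such region is literally the point $s$), or there is a unique non-point region $R \in F$ with $s \in V(R)$ (while any other regions containing $s$ as a vertex are point regions equal to $s$). In the first case I would set $R(s) := \{s\}$; in the second case I would set $R(s) := R$. In either case $R(s)$ is well-defined, contains $s$, and is either a point or a unique region of $F$, which already gives the existence and uniqueness statement.

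Next, I would argue the $O(1)$ access. The PHT stores, with the bridge $(s,t)$, a pointer to the leaf $\ell_s$ representing $s$, and each leaf maintains the point-region list and the non-point-region list introduced at the start of the section. By the canonicity just unpacked, the non-point-region list at $\ell_s$ has size at most one. So the routine is: follow the bridge pointer to $\ell_s$, and inspect the head of its non-point-region list; if it is non-empty, return that region as $R(s)$, otherwise return the point $s$ stored at $\ell_s$ (equivalently, the head of its point-region list). This is a constant number of pointer dereferences.

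I do not expect a genuine obstacle; the main thing to be careful about is bookkeeping. One should make sure that whenever a bridge is (re)computed inside the PHT, its leaf pointer is set consistently with the current leaf contents, and that the invariant ``the non-point-region list at a leaf touched by a canonical bridge has at most one entry'' is a restatement of the canonical flag already maintained per bridge in the augmentation. Both are enforced by the same update mechanism that maintains $\mathbb{E}^*(\nu)$ and the auxiliary trees $\Lambda^*(\nu)$, so no extra work is required beyond what is already present in the data structure.
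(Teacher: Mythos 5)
Your proposal is correct and matches the paper's reasoning: the paper treats this as an immediate consequence of Definition~\ref{def:edge_label} together with the augmentation (each bridge points to the leaves of its endpoints, and each leaf stores a point-region list and a non-point-region list), which is exactly the unpacking you give. The only substance is noting that canonicity at $s$ forces the non-point-region list of the leaf for $s$ to have at most one entry, so one pointer dereference and a list-head check suffice, as you state.
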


We store for every region in $F$ their maximum and minimum $x$-coordinate, this way we can test if a pair of regions is vertically separated in constant time. 

\subparagraph{Candidate chains.} Our data structure maintains for each $\nu \in T$, a collection of special edges and special chains. 
We observe that these chains have a special structure:

\begin{restatable}{definition}{candidate}
    Let $\nu \in T$ and consider $\mathbb{E}(\nu)$ (not $\mathbb{E}^*(\nu)$).   Any contiguous subchain $C = (q, s, \ldots, r)$ of $\mathbb{E}(\nu)$ is a \emph{candidate chain} if all edges of $C$ are dividing in $F$ and:
    \begin{itemize}
        \item $C = (q, s, r)$ with $q \in V(R_a)$, $s \in V(R_b)$, $r \in V(R_c)$,  and $q, r \not \in V(R_b)$, or,
        \item all interior vertices of $C$ are in $V(R_b)$ and $q, r \not \in V(R_b)$,

        (note that this implies that all interior vertices are not in $V(R_x)$ for $x \neq b$).
    \end{itemize}
\end{restatable}

\begin{observation}
    Any subchain of $\mathbb{E}(\nu)$ that is  \emph{spanning} in $F$ is also a candidate chain.
\end{observation}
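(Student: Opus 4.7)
The plan is to unfold both definitions and verify, piece by piece, that every requirement of a candidate chain is already guaranteed by the spanning definition. Let $C = (q, r, \ldots, s, t)$ be a subchain of $\mathbb{E}(\nu)$ that is spanning in $F$, with $q \in V(R_a)$, $r, s \in V(R_b)$, $t \in V(R_c)$. I want to check three things for $C$ to be a candidate chain: (i) all edges of $C$ are dividing in $F$, (ii) all interior vertices of $C$ lie in $V(R_b)$, and (iii) the endpoints $q$ and $t$ do not lie in $V(R_b)$. (If $C$ has only three vertices, this degenerates into the first bullet of the candidate definition; otherwise it matches the second bullet.)

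Properties (i) and (ii) are immediate: they are literally two of the three conditions in Definition~\ref{def:spanning}. The only substantive work is establishing (iii), namely $q \notin V(R_b)$ and $t \notin V(R_b)$. My tool for this is that a dividing edge is, by Definition~\ref{def:edge_label}, canonical, so each endpoint of such an edge is either exclusively a vertex of point regions or is a vertex of a unique non-point region.

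I split on whether $R_b$ is a point region. If $R_b$ is non-point, then $r \in V(R_b)$ forces $r$ to be a vertex of the unique non-point region $R_b$. The dividing edge $(q, r)$ is canonical, so $q$ is either exclusively a vertex of point regions (in which case $q \notin V(R_b)$ since $R_b$ is non-point) or is a vertex of a unique non-point region, which must then be $R_a$, again giving $q \notin V(R_b)$. The same argument applied to the dividing edge $(s, t)$ yields $t \notin V(R_b)$. If instead $R_b$ is a point region, then $V(R_b) = \{p_b\}$ and $r = p_b$; since canonical edges have distinct endpoints, $q \ne r = p_b$, so $q \notin V(R_b)$, and symmetrically $t \notin V(R_b)$.

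The one subtle point to watch is the degenerate case $r = s$, where the chain has a single interior vertex; here the first bullet of the candidate-chain definition applies directly, and the argument above still goes through verbatim because it only ever uses the canonicity of the two boundary edges $(q, r)$ and $(s, t)$. Beyond that, the proof is just careful bookkeeping between the two definitions, so I do not expect any real obstacle.
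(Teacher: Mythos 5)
Your items (i) and (ii) are indeed immediate from Definition~\ref{def:spanning}, and your treatment of the case where $R_b$ is a point region is fine. The problem is the other case of (iii): from canonicity of $(q,r)$ you conclude that the unique non-point region having $q$ as a vertex ``must then be $R_a$, again giving $q \notin V(R_b)$''. That final inference silently assumes $R_a \neq R_b$ (and, symmetrically, $R_c \neq R_b$), and nothing in Definition~\ref{def:spanning} as literally written supplies this: the definition never says $R_a, R_b, R_c$ are distinct, the edge $(q,r)$ can be dividing via the ``$a=b$'' clause of Definition~\ref{def:edge_label} when $q$ and $r$ are vertices of the same region, and the requirement that $R_b$ intersect the inside of $\OCH(\{R_a,R_c\})$ is automatically met when $R_a = R_b$ has nonempty interior, since any such region meets the open area below its own upper hull. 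Concretely, take a square $R_b$ and a far-away point region $R_c$ strictly vertically separated from it, and let $C$ be the top edge of the square followed by the hull edge to $R_c$: every clause of Definition~\ref{def:spanning} is satisfied with $R_a = R_b$, yet $q \in V(R_b)$, so $C$ fails the candidate-chain endpoint condition. So as a derivation from the stated definitions your argument does not close at exactly the one step that required work. (A smaller issue of the same flavour: ``must then be $R_a$'' also needs the exclusive reading of canonical as in Observation~\ref{obs:canonical_test}; Definition~\ref{def:edge_label} alone does not obviously forbid $q$ from being simultaneously a vertex of a point region $R_a$ and of the non-point region $R_b$.)

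The paper states this as an observation with no proof because, under the intended reading of ``spanning'' --- the one depicted in Figure~\ref{fig:spanning-chain} and already relied upon in the proof of Lemma~\ref{lem:wit_subchain}, where it is asserted that $R_b$ lies strictly to the right of $R_a$ and strictly to the left of $R_c$ --- the three regions are distinct, hence $q,t \notin V(R_b)$ is part of (or forced by) the definition and the claim is immediate. You should either invoke that reading explicitly, in which case your case analysis collapses to a one-line check of the definitions, or, if you insist on the literal text of Definition~\ref{def:spanning}, supply an argument excluding $R_a = R_b$ and $R_c = R_b$ --- which, as the example above shows, cannot be done without strengthening the definition.
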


\begin{observation}
    Any subchain $(q, r, \ldots, s, t, \ldots, u, v)$ of $\mathbb{E}(\nu)$ is \emph{hit} in $\nu$ only if $(q, r, \ldots, s, t)$ and $(s, t, \ldots, u, v)$ are candidate chains. 
\end{observation}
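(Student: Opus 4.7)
The plan is to fix a chain $C = (q, r, \ldots, s, t, \ldots, u, v)$ of $\mathbb{E}(\nu)$ that is hit in $F$, let $R_a, R_b$ be the regions appearing in the hit definition, and verify that each of the two halves $C_1 = (q, r, \ldots, s, t)$ and $C_2 = (s, t, \ldots, u, v)$ matches one of the two bullets in the candidate-chain definition. By symmetry I focus on $C_1$. The first condition (all edges dividing) is immediate: the edges of $C_1$ are a subset of the edges of $C$, and the hit definition already requires every edge of $C$ to be dividing.

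The key technical step is to establish $t \notin V(R_a)$. Since $(s,t)$ is occupied it is canonical, so by Observation~\ref{obs:canonical} the regions $R_a$ and $R_b$ are uniquely determined: if $s$ (resp.\ $t$) is a vertex of some non-point region, that region is $R_a$ (resp.\ $R_b$); otherwise $R_a = \{s\}$ (resp.\ $R_b = \{t\}$) is a point region. Under this canonical choice, I would argue by cases. If $R_a$ is non-point and $t \in V(R_a)$, then the uniqueness clause forces $R_a$ to also play the role of $R_b$, contradicting $a \neq b$. If instead $R_a$ is the point region $\{s\}$, then $V(R_a) = \{s\}$, so $t \in V(R_a)$ would force $t = s$, contradicting the fact that $(s,t)$ is an edge with two distinct endpoints.

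With $t \notin V(R_a)$ in hand, I split on the length of $C_1$. If $C_1$ has exactly three vertices (the case $r = s$), then $C_1 = (q, s, t)$ matches the first bullet of the candidate-chain definition with middle vertex $s \in V(R_a)$ and endpoints $q, t \notin V(R_a)$, where $q \notin V(R_a)$ is given by the hit definition and $t \notin V(R_a)$ is the step just proved. If $C_1$ has four or more vertices, then its interior vertices $r, \ldots, s$ all lie in $V(R_a)$ by the hit definition, while both endpoints $q, t$ avoid $V(R_a)$, matching the second bullet. The analogous argument for $C_2$ uses $s \notin V(R_b)$ in place of $t \notin V(R_a)$.

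The main obstacle is the canonicality case analysis in the second step. It is tempting to dismiss $t \notin V(R_a)$ as obvious from $a \neq b$, but because $F$ may contain point regions that coincide geometrically with vertices of other (non-point) regions, one must invoke the uniqueness clause of canonicality (equivalently Observation~\ref{obs:canonical}) to pin down which region is meant by $R_a$ versus $R_b$. Once this convention is fixed, the remainder of the proof is a routine matching of the hit conditions to the candidate-chain bullets.
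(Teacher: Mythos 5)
Your proof is correct and matches the paper's (implicit) reasoning: the paper states this as a direct consequence of Definitions of candidate and hit chains, and your verification is exactly that definition-chasing, with the one genuinely non-trivial point---that canonicality of the occupied edge $(s,t)$ pins down $R_a$ and $R_b$ and hence forces $t \notin V(R_a)$ and $s \notin V(R_b)$---handled correctly via Observation~\ref{obs:canonical}.
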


\begin{lemma}
    \label{lemm:candidate_chain_find}
    Let $\nu \in T$.
    There are at most two candidate chains that contain an edge $(s, t)$ of $\mathbb{E}(\nu)$ and, given $\mathbb{E}(\nu)$ and our data structure, we can find these in $O(k)$ time.
\end{lemma}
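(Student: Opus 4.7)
The plan is to bound the number of candidate chains containing $(s,t)$ by a case analysis on the interior region $R_b$, and then to reconstruct each chain by a short walk in $\mathbb{E}(\nu)$. Since every edge of a candidate chain is dividing, $(s,t)$ itself is canonical, so each of $s$ and $t$ is either exclusively a vertex of point regions or a vertex of a unique non-point region. Because $(s,t)$ is a single edge of $\mathbb{E}(\nu)$ with no vertex of $\mathbb{E}(\nu)$ strictly between them, any candidate chain containing $(s,t)$ has $s$ or $t$ as an interior vertex, and hence $s \in V(R_b)$ or $t \in V(R_b)$.

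For the counting step, I would split on the type of $R_b$. If $R_b$ is a point region through $s$ (resp.\ $t$), then $V(R_b)=\{s\}$ (resp.\ $\{t\}$), so every such point region yields exactly the same three-vertex chain $(q,s,t)$ (resp.\ $(s,t,r)$), where $q$ is the predecessor of $s$ and $r$ the successor of $t$ in $\mathbb{E}(\nu)$. If $R_b$ is a non-point region, canonicality of $(s,t)$ forces $R_b$ to be the unique non-point region at $s$ or at $t$; if these two options coincide, they produce a single chain whose interior extends through both $s$ and $t$, otherwise they produce two distinct chains (one ending at $t$ on the right, one starting at $s$ on the left). Collapsing the point-region duplicates and merging the coincident non-point case leaves at most two candidate chains in total.

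For the reconstruction, I would read off the at most four candidate values for $R_b$ from the point- and non-point-region lists stored in the leaves of $T$ containing $s$ and $t$, in $O(1)$ time. For each $R_b$, I would walk in $\mathbb{E}(\nu)$: starting at $s$ if $s \in V(R_b)$, move to predecessors while the visited vertex lies in $V(R_b)$, then take one more step to obtain $q$; symmetrically from $t$ to obtain $r$. The walk uses the linked-list structure of $\mathbb{E}(\nu)$ to advance in $O(1)$, checks membership of a canonical visited vertex in $V(R_b)$ in $O(1)$ via its unique non-point region (or a coordinate comparison for point regions), and tests each traversed edge for being dividing in $O(1)$ using the canonical-eligibility information stored at vertices together with the cached $x$-extents of regions; the chain is discarded as soon as a non-dividing edge is encountered. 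Since a non-point region has at most $k$ vertices, each walk visits $O(k)$ vertices, so the total running time is $O(k)$.

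The main obstacle is the counting argument: one must observe that many coincident point regions at $s$ (or at $t$) all collapse to the same three-vertex chain and that if $s,t$ share a non-point region the two non-point options merge, so only two distinct chains remain in all cases. Once this is settled, the algorithmic analysis is routine, relying only on constant-time predecessor, membership and dividing primitives, and on the observation that any non-point $R_b$ bounds the walk length by $|V(R_b)|\le k$.
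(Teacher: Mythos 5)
Your proposal is correct and takes essentially the same route as the paper: canonicality of $(s,t)$ pins down at most two possible interior regions $R_b$ (one per endpoint, coincident and point-region duplicates merged — note this collapse uses that, per Observation~\ref{obs:canonical_test}, a canonical endpoint with a non-point region carries no coincident point regions), and each chain is then recovered by an in-order walk in $\mathbb{E}(\nu)$ of at most $|V(R_b)| \le k$ steps, checking per edge in $O(1)$ that it is dividing and that its endpoints lie in $V(R_b)$ via the leaf lists (Observation~\ref{obs:canonical}). The paper merely organises the case analysis by the position of $(s,t)$ within the chain (first/last edge versus interior edge) rather than by the type of $R_b$, but the counting argument and the $O(k)$ walk are the same.
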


\begin{proof}
    Suppose two candidate chains $C$ and $C'$ share $(s, t)$. By the definition of candidate chains, $(s, t)$ must be the final edge of $C$ and the first edge of $C'$.
    First, identify the up to two length-3 chains in $\mathbb{E}(\nu)$ containing $(s, t)$. Check each such chain in $O(1)$ time for the dividing property by consulting the Boolean flags. If all edges are dividing, we apply Observation~\ref{obs:canonical} to test whether the chain satisfies the conditions of a candidate chain.

    Next, consider candidate chains of length $\ge 4$ in which $(s, t)$ is an interior edge. Check, in $O(1)$ time via Observation~\ref{obs:canonical}, whether both $s$ and $t$ lie in a unique region $R_b$. If so, perform an in-order traversal of $\mathbb{E}(\nu)$—moving up to $O(k)$ steps left from $s$ and up to $O(k)$ steps right from $t$—to find the maximal contiguous chain with all edges dividing and interior vertices contained in $V(R_b)$. The chain ends when we encounter an edge whose endpoints do not lie entirely in $V(R_b)$. This identifies the unique candidate chain $C$ of length $\ge 4$ containing $(s, t)$ as an interior edge.
    To find chains where $(s, t)$ is the first or last edge, apply the above procedure to the edges immediately before and after $(s, t)$. 
\end{proof}

\subparagraph{Update time.}
After each retrieval, we update bridges for all nodes along $O(k)$ root-to-leaf paths in $T$. For each bridge $(s, t)$, we update its leaf pointers with constant overhead.
Corollaries~\ref{cor:canonical}, \ref{cor:dividing},  \ref{cor:chain}, and~\ref{cor:occupied} show how to determine whether a bridge is canonical, dividing, part of a spanning chain, or part of a hit chain in $F$, respectively, in $O(k \log^2 m)$ time. Since the recourse per retrieval is $O(k \log m)$, this yields an overall update time of $O(k^2 \log^3 m)$.

\begin{observation}[by Definition~\ref{def:edge_label}]
    \label{obs:canonical_test}
    For any node $\nu \in T$, its bridge $e(\nu) = (s, t)$ is \emph{canonical} in $F$ if and only if both leaves containing $s$ and~$t$ meet one of the following conditions:
    \begin{itemize}
        \item The non-point region list is empty, or,
        \item The point region list is empty and the region list contains exactly one region. 
    \end{itemize}
\end{observation}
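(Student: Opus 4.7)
The observation is essentially a reformulation of Definition~\ref{def:edge_label} under the augmentation of the PHT that stores, at each leaf for a vertex $v \in V(F)$, a point region list (the point regions equal to $v$) and a non-point region list (the non-point regions having $v$ as a vertex). The plan is therefore to verify the equivalence pointwise: the bridge $(s,t)$ is canonical in $F$ iff each endpoint $x \in \{s,t\}$ independently satisfies one of the two bullet conditions in Definition~\ref{def:edge_label}, so it suffices to match the two conditions stated at a single leaf in Observation~\ref{obs:canonical_test} with the two conditions in the definition.

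First I would handle the case ``$x$ is exclusively a vertex of point regions.'' Since the non-point region list at the leaf of $x$ stores precisely those non-point regions of $F$ that have $x$ as a vertex, this condition is equivalent to that list being empty, which is exactly the first bullet of the observation. Second I would handle ``$x$ is a vertex of a unique region that is not a point region.'' Unfolding uniqueness, this means that exactly one region of $F$ has $x$ as a vertex and that region is non-point; equivalently, the non-point region list contains exactly one region and the point region list is empty, since any point region coinciding with $x$ would itself be another region having $x$ as a vertex, contradicting uniqueness. This gives the second bullet.

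Conjoining the two cases for $x = s$ and $x = t$ (observing that the endpoints of a bridge are genuinely distinct vertices and can be examined independently) finishes the equivalence. There is no real obstacle; the only subtle point is the insistence in the second bullet that the point region list is \emph{empty}, which is what forces the non-point region witnessing canonicity to be truly unique and not merely the unique non-point element among several co-located regions.
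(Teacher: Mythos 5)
Your unfolding of Definition~\ref{def:edge_label} into the two leaf-list conditions is correct, and it matches the paper, which states Observation~\ref{obs:canonical_test} as an immediate consequence of that definition without further argument. In particular, you correctly identify the one subtlety (the second bullet requires the point region list to be empty, so the non-point region is unique among \emph{all} regions containing the vertex), which is exactly the intended reading.
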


\begin{corollary}
    \label{cor:canonical}
   We can dynamically maintain, in $O(k \log^2 m)$ time, for each node $\nu \in T$ a balanced binary tree $\Lambda^*(\nu)$ of all edges $(s, t) \in \mathbb{E}^*(\nu)$ that are non-canonical in $F$.     
\end{corollary}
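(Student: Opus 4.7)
The plan is to piggyback the maintenance of $\Lambda^*(\nu)$ on the standard Overmars--van Leeuwen maintenance of the PHT. A single retrieval of $R_i$ removes at most $k$ vertices from $V(F)$ and inserts the point $p_i$, inducing at most $k+1$ point updates on $T$. Each point update performs $O(\log m)$ bridge changes along a root-to-leaf path; each bridge change reshapes the concatenable queues $\mathbb{E}^*(\nu)$ via a constant number of split/concatenate operations, each of cost $O(\log m)$. The total PHT maintenance cost per retrieval is therefore $O(k\log^2 m)$.

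Each $\Lambda^*(\nu)$ will be stored as a balanced tree over $\mathbb{E}^*(\nu)$ supporting the same split and concatenate primitives, and I will shadow every split or concatenate performed on $\mathbb{E}^*(\nu)$ by the analogous operation on $\Lambda^*(\nu)$, at $O(\log m)$ apiece. For each newly installed bridge $(s,t)$, canonicity is tested in $O(1)$ time via the endpoint-to-leaf pointers together with Observation~\ref{obs:canonical_test}, and $(s,t)$ is inserted into $\Lambda^*(\nu)$ precisely when the test fails. Each removed bridge is likewise deleted from $\Lambda^*(\nu)$ in $O(\log m)$ time if present. Summed over all $O(k\log m)$ bridge changes in the retrieval, these contributions amount to $O(k\log^2 m)$.

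The main obstacle is that the canonicity of an edge $(s,t)$ can flip even when the bridge itself is not touched by the PHT update: the retrieval of $R_i$ modifies the non-point region list of every leaf in $V(R_i)$ that remains in $T$. To handle this I rely on the structural observation that every vertex is incident to only $O(\log m)$ edges in $\bigcup_\nu \mathbb{E}^*(\nu)$. Indeed, a vertex $v$ lies on $\CH(\nu)$ only for nodes $\nu$ along the path from its leaf to the root, contributes at most two incident edges at each such level, and each edge of the union belongs to exactly one $\mathbb{E}^*(\nu)$. For each of the $O(k)$ affected leaves, the plan is to enumerate these $O(\log m)$ incident edges by walking up the PHT, re-test canonicity in $O(1)$ per edge via Observation~\ref{obs:canonical_test}, and insert or delete in the corresponding $\Lambda^*(\nu)$ in $O(\log m)$. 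Summing this $O(k\log^2 m)$ contribution with the previous ones yields the claimed $O(k\log^2 m)$ bound.
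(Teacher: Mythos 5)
Your proposal is correct and follows essentially the same route as the paper: piggyback on the Overmars--van Leeuwen maintenance of the concatenable queues, test canonicity of each (re)computed bridge in $O(1)$ time via Observation~\ref{obs:canonical_test}, and keep $\Lambda^*(\nu)$ in sync with $\mathbb{E}^*(\nu)$ by shadowing the split/join operations along the $O(k)$ affected root-to-leaf paths, for $O(k\log^2 m)$ total. The one point where you diverge is your third paragraph: you are right that canonicity can flip at bridges the structural update never touches (leaves of $V(R_i)$ that survive, and the leaf of $p_i$ if it coincides with an existing vertex), and your fix via the observation that each vertex is incident to only $O(\log m)$ edges of $\bigcup_\nu \mathbb{E}^*(\nu)$ is valid and stays within the budget. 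The paper absorbs this case implicitly: every edge incident to an affected leaf is the bridge $e(\mu)$ of an ancestor $\mu$ of that leaf, hence lies on one of the $O(k)$ traversed root-to-leaf paths, where its Boolean is re-evaluated at no extra asymptotic cost; your per-vertex patching argument is a slightly more explicit (and equally correct) way to cover the same edges.
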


\begin{proof}
Test, using Observation~\ref{obs:canonical_test}, whether any bridge is canonical in constant time. Since the recourse is $O(k \log m)$, the set of all bridges can be updated without incurring asymptotic overhead.

    To maintain $\Lambda^*(\nu)$ for each node $\nu \in T$, we follow the technique from~\cite{DBLP:conf/stoc/OvermarsV80} for concatenable queues.
    For any node $\nu$ with child $x$, compute $\Lambda(x)$ from $(\Lambda(\nu),  e(\nu), \Lambda^*(x))$ in $O(\log m)$ time by splitting $\Lambda(\nu)$ at the bridge $e(\nu)$, and joining the result with $\Lambda^*(x)$. 
    Traverse all $O(k)$ updated root-to-leaf paths, using the split operation, in $O(k \log^2 m)$ total time. 
    Then, traverse the paths bottom-up.
    At each node $\nu$ with children $x$ and $y$, we have access to the newly updated trees $\Lambda(x)$ and $\Lambda(y)$.
    Compute $(\Lambda(\nu), \Lambda^*(x), \Lambda^*(y))$ by splitting $\Lambda(x)$ and $\Lambda(y)$ at the endpoints of the new bridge $e(\nu)$ and joining the result.
\end{proof}

\begin{corollary}
    \label{cor:dividing}
     We can dynamically maintain in $O(k \log^2 m)$ time for each node $\nu \in T$ a balanced binary tree on all edges $(s, t) \in \mathbb{E}^*(\nu)$ that are canonical but non-dividing in $F$.   
\end{corollary}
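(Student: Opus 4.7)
The plan is to mirror the proof of Corollary~\ref{cor:canonical} verbatim, with the only new ingredient being an $O(1)$-time test for the predicate ``canonical but non-dividing''. Once this test is available per bridge, the top-down split / bottom-up join scheme on the $O(k)$ dirty root-to-leaf paths in $T$ yields the claimed $O(k\log^2 m)$ bound with no further work.

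First, I would describe the $O(1)$ test. Given a bridge $(s,t)$, read the canonicality Boolean maintained by Corollary~\ref{cor:canonical}; if it is false, the predicate is false. Otherwise, by Observation~\ref{obs:canonical}, following the bridge's two leaf pointers gives in $O(1)$ time a unique region (or point) $R(s)$ and $R(t)$. By Definition~\ref{def:edge_label}, a canonical edge fails to be dividing precisely when $R(s)\neq R(t)$ and $R(s)$ and $R(t)$ are not strictly vertically separated. Equality of $R(s)$ and $R(t)$ is a pointer comparison, and the vertical-separation test is $O(1)$ using the per-region minimum and maximum $x$-coordinates we store at preprocessing. Hence the full predicate is $O(1)$.

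Second, I would maintain at every node $\nu \in T$ a balanced tree over the canonical-but-non-dividing edges of $\mathbb{E}^*(\nu)$, updated exactly as in Corollary~\ref{cor:canonical}. After a retrieval, $V(F)$ changes at $O(k)$ positions (the vertices of the retrieved polygon are removed and $p_i$ is inserted), triggering PHT updates along $O(k)$ root-to-leaf paths and changing $O(k\log m)$ bridges. I recompute the Boolean for each changed bridge in $O(1)$, and then propagate: at the root $\rho$ the tree for $\mathbb{E}(\rho)$ equals that for $\mathbb{E}^*(\rho)$, and for any node $\nu$ with child $x$, I obtain the tree for $\mathbb{E}(x)$ by splitting the tree for $\mathbb{E}(\nu)$ at the bridge $e(\nu)$ and joining with the tree for $\mathbb{E}^*(x)$. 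Each split/join costs $O(\log m)$, each path costs $O(\log^2 m)$, and there are $O(k)$ paths, giving $O(k\log^2 m)$.

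The main obstacle, and essentially the only nontrivial bookkeeping point, is checking that no canonical-but-non-dividing flag can flip outside the $O(k)$ dirty paths. This is immediate from Definition~\ref{def:edge_label}: the flag depends only on the two regions incident to the bridge endpoints. A retrieval of $R_i$ alters only the leaves that previously stored vertices of $R_i$ plus the new leaf for $p_i$, all of which lie on the $O(k)$ affected root-to-leaf paths; every bridge whose endpoint incidence or vertical-separation status can change is therefore revisited within the above budget, and the maintenance is correct.
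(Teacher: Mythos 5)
Your proposal is correct and follows essentially the same route as the paper's proof: use Observation~\ref{obs:canonical} plus the stored per-region minimum/maximum $x$-coordinates to decide ``canonical but non-dividing'' in $O(1)$ per bridge, then maintain the per-node balanced trees by the same split/join propagation as in Corollary~\ref{cor:canonical}. Your extra remarks (handling the $R(s)=R(t)$ case explicitly and arguing that no flag can flip off the dirty paths) only make explicit what the paper leaves implicit.
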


\begin{proof}
The set of bridges has $O(k \log m)$ recourse. 
Once a bridge is known to be canonical, we apply Observation~\ref{obs:canonical} to determine the unique areas $R(s)$ and $R(t)$ containing $s$ and $t$. We then test in constant time whether $R(s)$ and $R(t)$ are vertically separated, using stored $x$-coordinates. The tree is maintained using the same procedure as in Corollary~\ref{cor:canonical}.
\end{proof}

\begin{lemma}
    \label{lemm:chaincomputation}
     For any $\nu \in T$, given $e(\nu) = (s, t)$ and 
     $\mathbb{E}(\nu)$, we may find the at most two contiguous subchains $C$ of $\mathbb{E}(\nu)$ with $s, t \in C$ that are spanning in $F$ in $O( k \log m)$ time.
\end{lemma}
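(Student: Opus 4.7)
The plan is to reduce verification to the at most two \emph{candidate chains} of $\mathbb{E}(\nu)$ that contain the bridge $(s,t)$, as supplied by Lemma~\ref{lemm:candidate_chain_find}. Every spanning chain is a candidate chain (the two definitions differ only by the geometric ``intersects the inside'' condition of Definition~\ref{def:spanning}), so every spanning chain of $\mathbb{E}(\nu)$ that contains the edge $(s,t)$ must be one of the at most two candidate chains through $(s,t)$. First I would enumerate these candidate chains in $O(k)$ time via Lemma~\ref{lemm:candidate_chain_find}, and then test each individually for being spanning.

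Given a candidate chain $C$ with endpoints $u$ and $w$, I identify the triple $(R_a, R_b, R_c)$ appearing in Definition~\ref{def:spanning}. Because every edge of $C$ is dividing, every vertex of $C$ is canonical, so Observation~\ref{obs:canonical} assigns $u$ and $w$ to unique regions $R_a$ and $R_c$ in $O(1)$ time per endpoint, by inspecting the non-point region list at the corresponding leaf of $T$. The middle region $R_b$ is extracted analogously from any interior vertex of $C$ (which by the candidate-chain definition lies in $V(R_b)$ for a unique non-point region $R_b$).

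The remaining condition to verify is that $R_b$ intersects the inside of $\OCH(\{R_a, R_c\})$. Since $R_a$ and $R_c$ are each simple $k$-gons, $\OCH(\{R_a, R_c\})$ is a convex region of complexity $O(k)$ and can be computed in $O(k \log k)$ time. Testing whether the simple $k$-gon $R_b$ has a point strictly inside this convex region is a standard geometric operation: point-in-convex-polygon queries on the $O(k)$ vertices of $R_b$ combined with edge–boundary intersection tests, running in $O(k \log k)$ time. Summed over the at most two candidate chains, the total runtime is $O(k \log k) \subseteq O(k \log m)$, which matches the claimed bound.

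The hard part will be the case analysis for identifying the triple $(R_a, R_b, R_c)$ correctly: the bridge $(s,t)$ may be the first edge of a candidate chain, the last edge, or strictly interior to the $R_b$-run, and when both $s$ and $t$ already lie in a common non-point region the chain may extend on either side (which is exactly the source of the ``at most two'' count in Lemma~\ref{lemm:candidate_chain_find}). In every configuration, however, once the candidate chain and its region triple are extracted via Observation~\ref{obs:canonical}, the spanning test reduces to the single convex-polygon intersection query described above, so no new geometric ingredient is required beyond the candidate-chain machinery already in place.
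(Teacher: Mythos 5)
Your proposal follows essentially the same route as the paper: reduce to the at most two candidate chains through $(s,t)$ via Lemma~\ref{lemm:candidate_chain_find}, extract the triple $(R_a,R_b,R_c)$ with Observation~\ref{obs:canonical}, construct $\OCH(\{R_a,R_c\})$ in $O(k\log k)$ time, and decide spanning by testing whether $R_b$ meets its inside. The only divergence is the final test---the paper invokes the $O(\log m)$ convex-polygon intersection detection of~\cite{chazelle1980detection}, whereas you scan the $O(k)$ vertices and edges of $R_b$ directly---and both variants fit the claimed $O(k\log m)$ bound (just make sure the scan also handles the containment case where $\OCH(\{R_a,R_c\})$ lies entirely inside $R_b$, e.g.\ by one extra point-in-polygon test).
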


\begin{proof}
Apply Lemma~\ref{lemm:candidate_chain_find} to obtain all $O(1)$ candidate chains in $\mathbb{E}(\nu)$ containing $(s, t)$. Let $C = (q, \ldots, r')$ be such a chain. To test whether $C$ is spanning in $F$, use Observation~\ref{obs:canonical} to find the unique regions $R(q)$, $R_b$, and $R(r')$.
Construct $\OCH(R(q) \cup R(r'))$ in $O(k \log k)$ time and perform a convex hull intersection test with $R_b$ in $O(\log m)$ time using the algorithm of Chazelle~\cite{chazelle1980detection}. The chain is spanning if and only if this test returns true.
\end{proof}

\begin{corollary}
        \label{cor:chain}
         We can dynamically maintain in $O(k^2 \log^2 m)$ time for each node $\nu \in T$ a balanced binary tree of all subchains of $\mathbb{E}^*(\nu)$ that are spanning in $F$.   
\end{corollary}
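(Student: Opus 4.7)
The plan is to mirror the maintenance schema of Corollary~\ref{cor:canonical} exactly, replacing the per-bridge predicate by the per-bridge spanning-chain test supplied by Lemma~\ref{lemm:chaincomputation}. Conceptually, I first certify, for each bridge of $T$, the at most two spanning subchains of $\mathbb{E}(\nu)$ that pass through it, and then propagate this information along the $O(k)$ affected root-to-leaf paths using the standard concatenable-queue split/join recipe of Overmars and van Leeuwen~\cite{DBLP:conf/stoc/OvermarsV80}.

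Concretely, every edge of any $\mathbb{E}(\nu)$ is the bridge of some descendant, so each spanning subchain of $\mathbb{E}^*(\nu)$ necessarily contains at least one bridge stored in $\mathbb{E}^*(\nu)$. Whenever a bridge $e(\mu) = (s,t)$ changes during the PHT update, I invoke Lemma~\ref{lemm:chaincomputation} on $\mu$ in $O(k \log m)$ time to recover the at most two contiguous subchains of $\mathbb{E}(\mu)$ through $(s,t)$ that are spanning in $F$, and tag each such chain with $e(\mu)$ as its witness bridge. Since the PHT has $O(k \log m)$ recourse per retrieval, at most $O(k \log m)$ bridges require recertification, contributing $O(k^2 \log^2 m)$ work in total.

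To expose the advertised tree at every node, I augment each $\nu$ with a balanced binary tree $\Sigma^*(\nu)$ of the spanning chains whose witness bridge lies in $\mathbb{E}^*(\nu)$, ordered by the position of that bridge in $\mathbb{E}^*(\nu)$, and a parallel tree $\Sigma(\nu)$ for $\mathbb{E}(\nu)$. The invariant $\Sigma(\rho) = \Sigma^*(\rho)$ holds at the root, and for any node $\nu$ with child $x$, $\Sigma(x)$ is produced from $\Sigma(\nu)$ by splitting at $e(\nu)$ and concatenating with $\Sigma^*(x)$, exactly as in Corollary~\ref{cor:canonical}. Traversing the $O(k)$ updated root-to-leaf paths with $O(\log m)$ cost per split/join adds $O(k\log^2 m)$, which is dominated by the certification cost, yielding the claimed $O(k^2 \log^2 m)$ bound.

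The main obstacle is avoiding double-counting: a single spanning chain may contain several bridges and therefore be reported by Lemma~\ref{lemm:chaincomputation} at multiple $\mu$'s along the same path. I resolve this by canonically assigning each reported chain to its leftmost witness bridge on $\OCH(F)$, a comparison between $O(1)$ candidates that takes $O(1)$ time. Under this pinning, each spanning subchain is inserted into exactly one $\Sigma^*(\cdot)$, and the split/join propagation correctly transports it between trees as edges migrate between $\mathbb{E}^*$-queues during rebalancing, completing the maintenance argument.
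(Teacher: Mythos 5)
Your overall plan is the paper's: recertify spanning chains via Lemma~\ref{lemm:chaincomputation} at the $O(k\log m)$ affected nodes at $O(k\log m)$ cost each, and propagate the per-node trees with the same split/join scheme as in Corollary~\ref{cor:canonical}, for $O(k^2\log^2 m)$ total. However, your maintenance has a genuine gap on the \emph{deletion} side. You only ever add chains (re)certified at bridges that literally changed, and you pin each stored chain to a single witness bridge. A stored chain can become invalid while its witness bridge survives: for example, a newly retrieved point can appear on $\CH(\nu)$ strictly in the interior of a stored chain, so the chain is no longer a contiguous subchain of the hull, yet its leftmost edge (your key) is untouched and is never revisited. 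Nothing in your scheme removes such stale records, so the root's tree may report a ``spanning subchain of $\OCH(F)$'' that no longer exists; Algorithm~\ref{alg:new_strategy} would then retrieve a triple that is not a witness (the proof of Lemma~\ref{lem:wit_subchain} needs the chain to actually lie on $\OCH(F)$), breaking instance optimality. The paper addresses exactly this point: whenever a concatenable-queue split cuts across a stored chain, that chain is detected and removed, which requires records that know their extent along $\mathbb{E}(\nu)$ rather than a single pinned key.

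A second, related issue is that your recertification trigger is too narrow. You recertify only where the bridge changed (you even bound the count by the recourse, i.e.\ the symmetric difference of bridges), but a node $\nu$ can gain or lose a spanning subchain through an \emph{unchanged} bridge $e(\nu)$ when the child hulls adjacent to $e(\nu)$ change: the relevant chain crosses $e(\nu)$ and is a subchain of $\mathbb{E}(\nu)$ but of neither child's hull, so recertifying at the descendant whose bridge did change will not find it. The paper therefore invokes Lemma~\ref{lemm:chaincomputation} at every node whose set of leaves changed (still $O(k\log m)$ nodes, so the same bound), not only at changed bridges. Your deduplication-by-leftmost-witness idea is fine and is not needed in the paper's formulation, since there each chain is found only at the node where it first becomes a subchain (through that node's bridge) and is stored with the $\mathbb{E}^*$-queue it belongs to; but without chain removal and with the narrower trigger, your invariant that the trees contain exactly the current spanning subchains does not hold after updates.
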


\begin{proof}
    Whenever the set of leaves of a node $\nu \in T$ changes (which occurs for $O(k \log m)$ nodes), invoke Lemma~\ref{lemm:chaincomputation} on $e(\nu)$ to maintain all chains that are spanning in $F$ (and contiguous subchains of $\CH(\nu)$ but not of $\CH(x)$ or $\CH(y)$). 
    The balanced binary tree associated to $\mathbb{E}^*(\nu)$  can be maintained in an identical manner as previous corollaries. 
\end{proof}

%\begin{lemma}
%    \label{lem:occupationtesting}
%    For any node $\nu \in T$ and any edge $(s, t) \in \bE(\nu)$ that is dividing in $F$, we can test whether $(s, t)$ is occupied in $F$ in $O(k \log^2 m)$ time.
%\end{lemma}
%\begin{proof}

%\end{proof}

\begin{lemma} \label{lemm:hit_testing}
    For any node $\nu \in T$ and any edge $(x, y) \in \bE(\nu)$, we can find all contiguous subchains of $\mathbb{E}(\nu)$ that contain $(x, y)$ and are hit in $F$ in $O(k \log^2 m)$ time. 
\end{lemma}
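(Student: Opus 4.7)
The plan is to exploit the rigid structure of a hit chain: every hit chain~$C$ decomposes uniquely into two candidate chains joined at a central occupied edge~$(s,t)$, namely a left half~$(q,r,\ldots,s,t)$ with interior vertices in~$V(R_a)$ and a right half~$(s,t,\ldots,u,v)$ with interior vertices in~$V(R_b)$ (this is exactly the observation stated just after the definition of candidate chains). Hence the edge~$(x,y)$ of interest must play one of three roles in any hit chain containing it: (i)~$(x,y)$ is itself the central occupied edge~$(s,t)$, (ii)~$(x,y)$ lies in the left-half candidate chain, or (iii)~$(x,y)$ lies in the right-half candidate chain. I would enumerate all three cases, collect every hit chain that arises, and return the union.

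For case~(i), I would first test whether~$(x,y)$ is occupied using \cref{lem:occupationtesting} in $O(k\log^2 m)$ time. If it is, I would then, analogously to the traversal in the proof of \cref{lemm:candidate_chain_find}, walk left from~$x$ through the in-order predecessors in~$\mathbb{E}(\nu)$, as long as the edges are dividing and the endpoints lie in the unique region $R(x)$ identified by Observation~\ref{obs:canonical}, to locate the unique left-half candidate chain ending at~$x$; symmetrically, walk right from~$y$ to locate the unique right-half candidate chain starting at~$y$. Joining the two halves at~$(x,y)$ produces the unique hit chain with~$(x,y)$ as its central edge. Each of these two traversals takes $O(k)$ time since $R(x)$ and $R(y)$ each have $O(k)$ vertices.

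For cases~(ii) and~(iii), I would invoke \cref{lemm:candidate_chain_find} to obtain, in $O(k)$ time, all candidate chains~$C$ of $\mathbb{E}(\nu)$ containing~$(x,y)$; there are at most two such chains. For each such candidate chain~$C$ and each of the two orientations (treating $C$ as a left half or as a right half of the prospective hit chain), the central occupied edge must be the boundary edge of~$C$ on the appropriate side. I would test occupation of that boundary edge via \cref{lem:occupationtesting} and, if it is occupied, extend a walk in~$\mathbb{E}(\nu)$ from the opposite endpoint of that boundary edge (exactly as in case~(i), using Observation~\ref{obs:canonical} to identify the region whose vertices may be traversed) to locate the matching second half. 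Each candidate chain gives rise to at most two boundary-edge occupation tests, so in total $O(1)$ candidate-chain searches and $O(1)$ occupation tests suffice, dominated by the $O(k\log^2 m)$ cost per occupation test.

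The final output is the (at most constant-sized) collection of hit chains obtained across the three cases, which by construction is exhaustive. The main subtlety I expect is bookkeeping inside the occupation subroutine: each call to \cref{lem:occupationtesting} temporarily modifies~$T$ by deleting leaves, so I must ensure that between successive tests, these modifications are undone so that each test operates on the current~$T$ rather than on a residual of a prior test. This can be done by recording the $O(k\log^2 m)$ modifications made by \cref{lem:occupationtesting} on a stack and replaying the inverse operations after each test, keeping the total time at $O(k\log^2 m)$.
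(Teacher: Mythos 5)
Your proposal is correct and follows essentially the same route as the paper: it locates the $O(1)$ candidate chains around $(x,y)$ via \cref{lemm:candidate_chain_find} (or its traversal), identifies the possible central edges of a hit chain, and tests each for occupation with \cref{lem:occupationtesting}, for a total of $O(k\log^2 m)$ time. Your explicit three-case split and the remark about rolling back the temporary deletions inside the occupation test are just a more detailed spelling-out of the paper's argument, not a different approach.
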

\begin{proof}
Apply Lemma~\ref{lemm:candidate_chain_find} to find all candidate chains containing $(x, y)$. Then, for each candidate chain, consider the adjacent edge and apply the lemma again to construct all possible hit chains. For each resulting subchain, test whether the middle edge i$(s, t)$ is occupied as follows: 
 Construct $\band(R(s), R(t))$ in $O(k \log k)$ time~\cite{de2008computational}.
Temporarily remove $R(s)$ and $R(t)$ from $F$, and delete $s$ and $t$ from $V(F)$, in $O(k \log^2 m)$ time. Update the corresponding PHT $T'$ without updating auxiliary data. The hull $\CH(V(F) \setminus \{s,t\})$ is stored at the root of $T'$. By intersection testing between $\band(R(s), R(t))$ and $\CH(T')$ in $O(\log m)$ time~\cite{chazelle1980detection}, we can test if any hit chain is occupied.     
\end{proof}

\begin{corollary}
        \label{cor:occupied}
         We can dynamically maintain in $O(k^2 \log^3 m)$ time for each node $\nu \in T$ a balanced binary tree of all subchains of $\mathbb{E}^*(\nu)$ that are hit in $F$.   
\end{corollary}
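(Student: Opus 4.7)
The plan is to mirror the structure of the proof of Corollary~\ref{cor:chain}, replacing the appeal to Lemma~\ref{lemm:chaincomputation} by an appeal to Lemma~\ref{lemm:hit_testing}, and then paying for the higher per-bridge cost in the total bound. First I would observe that, after one retrieval, the set of bridges of $T$ has recourse $O(k \log m)$, so only the bridges $e(\nu)$ of $O(k \log m)$ nodes $\nu \in T$ change. For any node $\nu$ whose set of leaves (hence whose bridge) does not change, a contiguous subchain of $\mathbb{E}(\nu)$ that is hit in $F$ either already appears in the tree for a child of $\nu$, or was already handled on a previous update; thus we only need to reconsider hit subchains at the $O(k \log m)$ affected nodes, and exactly those hit subchains of $\mathbb{E}(\nu)$ that contain the new bridge $e(\nu)$.

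Next I would, for each affected node $\nu$, invoke Lemma~\ref{lemm:hit_testing} on $e(\nu)$ in $O(k \log^2 m)$ time to enumerate the $O(1)$ contiguous subchains of $\mathbb{E}(\nu)$ that contain $e(\nu)$ and are hit in $F$. Across all $O(k \log m)$ affected nodes this costs $O(k^2 \log^3 m)$ time and produces the complete list of newly created hit subchains, which dominates all other work.

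Then I would maintain, for each $\nu \in T$, the balanced binary tree $H^*(\nu)$ that stores the hit subchains lying in $\mathbb{E}^*(\nu)$, using exactly the concatenable-queue split/join scheme from the proofs of Corollaries~\ref{cor:canonical}, \ref{cor:dividing}, and~\ref{cor:chain}: for the root $\rho$ we have $H^*(\rho) = H(\rho)$, and for a node $\nu$ with child $x$ we obtain $H(x)$ from $H(\nu)$, $e(\nu)$, and $H^*(x)$ by one split at $e(\nu)$ and one join, each in $O(\log m)$ time. Traversing the $O(k)$ changed root-to-leaf paths bottom-up and performing the corresponding split/join operations costs $O(k \log^2 m)$, which is absorbed into the $O(k^2 \log^3 m)$ bound. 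As in Corollary~\ref{cor:chain}, whenever a split of $\mathbb{E}(\nu)$ into $\mathbb{E}^*$-pieces falls inside a hit subchain, we simply discard that chain from our data structure at no asymptotic overhead, since it will be re-inserted (if still hit) when the corresponding bridge above is processed.

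The main obstacle is bookkeeping rather than any new geometric argument: one has to check that a hit subchain of $\mathbb{E}(\nu)$ that survives the update without passing through the new bridge $e(\nu)$ is already correctly stored in one of the children's trees, so that it suffices to recompute only chains containing $e(\nu)$. This follows because each such surviving chain lies entirely inside $\mathbb{E}(x)$ or inside $\mathbb{E}(y)$ for a child with an unchanged leaf set, and is therefore preserved verbatim by the split/join reconstruction. Combining the $O(k \log m)$ calls to Lemma~\ref{lemm:hit_testing} at $O(k \log^2 m)$ each with the $O(k \log^2 m)$ cost of split/join maintenance yields the claimed $O(k^2 \log^3 m)$ update time.
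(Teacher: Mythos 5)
Your proposal is correct and follows essentially the same route as the paper: identify the $O(k\log m)$ nodes whose leaf sets change, call the $O(k\log^2 m)$-time hit-testing procedure (Lemma~\ref{lemm:hit_testing}, which the paper's proof effectively invokes via Lemma~\ref{lem:occupationtesting}) on each new bridge, and propagate the results through the concatenable-queue split/join scheme of the earlier corollaries. Your extra bookkeeping remark about chains not containing the new bridge is a harmless elaboration of the same argument.
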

\begin{proof}
As with Corollary~\ref{cor:chain}, we invoke \cref{lemm:hit_testing}  on $O(k \log m)$ nodes where the leaf set changes. For each, we update the associated trees using split and join operations.
\end{proof}

\begin{restatable}{theorem}{polygons}
    \label{lem:polygons_slow}
    Let $F$ be a family of $n$ simple polygons, where each region in $F$ has at most $k$ vertices. We can preprocess $F$ using $O(kn)$ space and $O(k^2 n \log^3 (kn) )$ time, such that given $P \sim F$ we reconstruct $\CH(P)$ using $O(kn)$ space and  $O( r(F, P) k^2 \log^3 (kn) )$ time.  
\end{restatable}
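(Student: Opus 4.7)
The plan is to execute \cref{alg:new_strategy} using the augmented Partial Hull Tree developed throughout \cref{sec:kgons}. The data structure is a PHT $T$ over the $m \in O(kn)$ vertices of $V(F)$, augmented at every node $\nu$ with the four auxiliary balanced trees supplied by \cref{cor:canonical,cor:dividing,cor:chain,cor:occupied}: they record, respectively, the subsets of $\mathbb{E}^*(\nu)$ that are non-canonical, canonical but non-dividing, spanning in $F$, or hit in $F$. Each leaf additionally carries the point-region and non-point-region lists described in \cref{sec:kgons}, and we cache for every $R \in F$ its extremal $x$-coordinates so that the classification tests run within the bounds asserted by those corollaries.

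For the preprocessing step, I would first build the base PHT on $V(F)$ bottom-up in $O(m \log m)$ time and then initialise the four auxiliary trees with a single bottom-up sweep of the maintenance routines of \cref{cor:canonical,cor:dividing,cor:chain,cor:occupied}. Because $\sum_\nu |\mathbb{E}^*(\nu)| \in O(m)$ and the most expensive per-edge classification is the occupation test, which by \cref{lem:occupationtesting} costs $O(k \log^2 m)$ per edge, the total construction time fits within $O(k^2 n \log^3(kn))$. Space is $O(m) = O(kn)$, as every auxiliary structure at $\nu$ stores a subset of $\mathbb{E}^*(\nu)$ or constant-size chains derived from it, and the $\mathbb{E}^*$-sets sum to $O(m)$.

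The reconstruction phase iterates \cref{alg:new_strategy} by querying the four trees at the root of $T$ in the priority order non-canonical, canonical-non-dividing, hit, spanning. A non-empty tree immediately exposes an edge or chain that certifies a witness of size at most three, by \cref{lemm:wit_doublepoint,lem:wit_vertically_separated,lem:wit_occupied,lem:wit_subchain}; for the occupied case, we additionally need a region $R_i \neq R_a, R_b$ with $V(R_i) \cap \band(R_a, R_b) \neq \emptyset$, which is extracted from the convex-hull intersection test inside \cref{lem:occupationtesting} without increasing its asymptotic cost. We retrieve every non-point region of the witness, which in $T$ replaces at most $k$ vertices by a single point; this triggers $O(k \log m)$ bridge recourse and, by \cref{cor:canonical,cor:dividing,cor:chain,cor:occupied}, $O(k^2 \log^3 m)$ work to restore the auxiliary structures, dominated by the hit-chain maintenance. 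When all four trees at the root are empty, $F$ is terminal, hence finished by \cref{theo:finished}, and we return $\OCH(F)$.

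Correctness follows because we faithfully simulate \cref{alg:new_strategy}, which is instance-optimal by \cref{theo:finished}; hence the algorithm performs $\Theta(r(F,P))$ retrievals, giving total reconstruction time $O(r(F,P) \cdot k^2 \log^3(kn))$ and reconstruction-phase space $O(kn)$. The main non-mechanical step, and the obstacle to keep honest, is checking that the occupation test of \cref{lem:occupationtesting} can be modified to return an explicit piercing vertex of $V(F - R_a - R_b) - \{s,t\}$ inside $\band(R_a, R_b)$ within the same $O(k \log^2 m)$ budget; once this is done, the theorem is a clean assembly of the preceding corollaries.
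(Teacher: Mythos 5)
Your proposal is correct and follows essentially the same route as the paper: maintain the augmented Partial Hull Tree via Corollaries~\ref{cor:canonical}, \ref{cor:dividing}, \ref{cor:chain}, and~\ref{cor:occupied} at $O(k^2\log^3(kn))$ per retrieval, execute Algorithm~\ref{alg:new_strategy} from the four root trees (using that once all edges are dividing, occupied edges correspond to hit subchains), and invoke Theorem~\ref{theo:finished} for instance-optimality. Your extra remarks on preprocessing cost and on extracting the piercing region $R_i$ from the intersection test in Lemma~\ref{lem:occupationtesting} are reasonable details that the paper's proof leaves implicit, and they do not change the argument.
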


\begin{proof}
    By Corollaries~\ref{cor:canonical}, \ref{cor:dividing}, \ref{cor:chain}, and \ref{cor:occupied}, we may maintain our augmented Partial Hull Tree in $O(k^2 \log^3 (kn))$ time per retrieval. 
    This data structure maintains at its root four balanced trees storing all: non-canonical edges, canonical but non-dividing edges;
    subchains of $\OCH(F)$ that are spanning in $F$, or hit in $F$.
    We observe that Algorithm~\ref{alg:new_strategy} only considers occupied edges if all edges on $\OCH(F)$ are dividing.
    Then, any occupied edge corresponds to a hit subchain.
    Thus, we may immediately use these trees to execute Algorithm~\ref{alg:new_strategy}. This procedure performs $O( r(F, P))$ retrievals and so the theorem follows. 
\end{proof}

In the full version, we note that our approach has two bottlenecks. 
First, the algorithm from  Lemma~\ref{lemm:candidate_chain_find} may `skip' over $O(k)$ edges to find the largest candidate chain.
By applying a technique similar to skip-lists, we improve its running time by a factor $k$.
Secondly, our approach frequently uses a subroutine where for any two regions $R(s)$ and $R(t)$ we construct $\band(R(s), R(t))$ in $O(k \log k)$ time. 
We show that by storing for each region $R_i$ its convex hull, this construction time becomes polylogarithmic and we obtain the following:

\begin{restatable}{theorem}{polygonsImproved}\label{thm:polygons_improved}
    Let $F$ be a family of $n$ simple polygons where each region in $F$ has at most~$k$ vertices. We can preprocess $F$ using $O(kn \log n)$ space and $O(k n \log^3 (kn) )$ time, such that given $P \sim F$ we reconstruct $\CH(P)$ using  $O( r(F, P) \cdot k \log^3 (kn) )$ time. 
\end{restatable}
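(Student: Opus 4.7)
The plan is to save the factor of $k$ in the per-retrieval cost of \cref{lem:polygons_slow} by replacing its vertex-level Partial Hull Tree on $V(F)$ with a \emph{polygon-level} PHT $\T$. The vertex-level tree has $m=\Theta(kn)$ leaves, so a single retrieval deletes $k$ leaves and inserts one, triggering $O(k\log m)$ bridge changes; since each auxiliary test along a changed bridge (hit, spanning, occupied) costs $\Omega(k)$ in the worst case because it manipulates convex polygons of size $k$, the per-retrieval work blows up to $\Theta(k^2\log^3(kn))$. My target is to cut the recourse to $O(\log n)$ bridges per retrieval while keeping each bridge test $O(k\,\polylog(kn))$.

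I would let $\T$ have one leaf per region $R_i$, with each leaf storing the convex hull of $R_i$ in an internal balanced tree sorted by $x$-coordinate. Each internal node $\nu\in\T$ still maintains the bridge $e(\nu)$ of the upper quarter outer hull of the polygons in its subtree together with a concatenable queue $\bE^*(\nu)$; a bridge is now a tangent between two convex polygonal chains and can be (re)computed in $O(\log(kn))$ time by the Overmars--Van Leeuwen tangent search on the two chains. Since a retrieval turns one leaf-hull into a single point, only $O(\log n)$ bridges change, so the PHT itself is updated in $O(\log^2(kn))$ time. The canonical and dividing Booleans of \cref{cor:canonical,cor:dividing} remain $O(1)$ per updated bridge.

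For spanning and hit subchains of $\bE^*(\nu)$, the crucial observation is that candidate chains are now delimited by leaves of $\T$, so the procedure of \cref{lemm:candidate_chain_find}, which used to traverse up to $O(k)$ vertices, can be replaced by $O(\log n)$ navigation steps through the augmented trees on $\bE^*(\nu)$ annotated with pointers to their originating leaves. Given the three candidate-chain regions $R_a,R_b,R_c$, assembling $\OCH(R_a\cup R_c)$ costs $O(k)$ by merging the two pre-sorted convex chains, and intersection testing against $R_b$ costs $O(\log k)$ by~\cite{chazelle1980detection}; this yields $O(k)$ spanning-chain classification per bridge, and $O(k\log^2(kn))$ per retrieval after pushing the updates through the $O(\log n)$ ancestors via the concatenable-queue machinery of~\cref{cor:chain}.

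The main obstacle is the occupation test of \cref{lem:occupationtesting}, which in the slow algorithm physically rebuilds $\T$ after deleting $R(s)$ and $R(t)$ at a cost of $O(k\log^2 m)$. I would avoid this rebuild by exploiting that every classified bridge is dividing: $R(s)$ and $R(t)$ are strictly vertically separated, hence the polygons whose vertices could lie in $\band(R(s),R(t))$ are precisely those whose leaves lie in the $x$-interval strictly between $R(s)$ and $R(t)$ in $\T$. These leaves are covered by $O(\log n)$ canonical subtrees of $\T$; each such subtree already stores its outer hull as a concatenable queue and can be tested for intersection with the convex region $\band(R(s),R(t))$ in $O(\log(kn))$ time by binary search on the hull, at a total cost of $O(k+\log^2(kn))$ per bridge, and hence $O(k\log^3(kn))$ per retrieval after the auxiliary-tree propagation of~\cref{cor:occupied}. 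Preprocessing is then obtained by simulating $n$ insertions from the empty family, yielding $O(kn\log^3(kn))$ time, while the $O(\log n)$ levels of $\T$ each hold $O(kn)$ total hull data, for a total of $O(kn\log n)$ space, completing the theorem.
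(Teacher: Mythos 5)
Your plan hinges on a polygon-level Partial Hull Tree with one leaf per region, and this is where it breaks: the Overmars--van Leeuwen machinery (single bridge per node, tangent search in $O(\log)$ time, concatenable queues) requires that the point sets of the two children of every node be separated by a vertical line. With one leaf per region this fails precisely in the regime the theorem is about, namely \emph{overlapping} polygons: regions may overlap arbitrarily in $x$-projection (think of the nested rectangles from the introduction, or two overlapping convex $k$-gons, one a slightly rotated copy of the other), so no ordering of the leaves makes subtrees $x$-separated. For two overlapping convex $k$-gons the upper hull of the union can alternate $\Theta(k)$ times between the two boundaries, so the "bridge" at a node of $\T$ is not a single edge computable by a tangent binary search, the $O(\log n)$-recourse claim collapses, and the whole concatenable-queue propagation you inherit from \cref{cor:chain,cor:occupied} no longer applies. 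A second, independent gap is the occupation test: it is not true that the regions with a vertex in $\band(R(s),R(t))$ are exactly those whose leaves lie in the $x$-interval strictly between $R(s)$ and $R(t)$. The band spans the full $x$-extent of $R(s)$ and $R(t)$ themselves, and an overlapping region with a huge $x$-range (whose leaf key lies far outside that interval) can dip a vertex into the band; so the $O(\log n)$ canonical-subtree covering misses exactly the witnesses the test must find. (A smaller issue: internal nodes of a PHT store only $\bE^*(\nu)$, not the full subtree hull, so even for the subtrees you do query, the hull is not available without first splitting down from the root.)

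The paper takes a different route that sidesteps both problems: it keeps the vertex-level PHT on $V(F)$ and accepts the $O(k\log m)$ bridge recourse per retrieval, but makes every per-bridge classification polylogarithmic. Concretely, it (i) maintains per node a \emph{compressed candidate chain} $\Gamma^*(\nu)$ so that candidate chains through an edge are found in $O(\log m)$ time instead of $O(k)$ (\cref{lemm:candidate_chain_find_fast}); (ii) stores each region's full hull $H(R)$ so that $\band(R(s),R(t))$ and $\OCH(R(q)\cup R(r))$ are obtained by bridge finding in $O(\log)$ time rather than rebuilt from scratch; and (iii) assigns each region a $\log n$-bit ID and maintains $O(\log n)$ auxiliary trees $\PHT(F(i,j))$, so that "does the band contain a vertex of some region other than $R(s)$, $R(t)$" reduces to $O(\log n)$ hull-intersection tests, giving the $O(\log^2 m)$ occupation test of \cref{lemm:occupied_fast}; these auxiliary PHTs are also what accounts for the $O(kn\log n)$ space. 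If you want to rescue your approach, you would need a region-level decomposition that tolerates overlap (the paper only does something in that spirit for \emph{disjoint} disks, via the Median Cut Decomposition in Appendix~\ref{sec:disks}, where disjointness and bounded radii bound the complexity at each cut line); as written, your structure and your band-locality claim are both invalid for overlapping $k$-gons, so the proof does not go through.
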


\section{\texorpdfstring{Improved running time for polygons with at most $k$ vertices }{Improved running time for polygons with at most k vertices}}\label{sec:improved_kgons}

Let $F$ be a family of $n$ polygons where each polygon has at most $k$ vertices. $V(F)$ denotes the set of all vertices in $F$ without duplicates and let $m = |V(F)|$. 
We note that our approach in Section~\ref{sec:kgons}  has two bottlenecks. 
First, the algorithm from  Lemma~\ref{lemm:candidate_chain_find} may `skip' over $O(k)$ edges to find the largest candidate chain.
By applying a technique similar to skip-lists, we improve its running time by a factor $k$.
Secondly, our approach frequently uses a subroutine where for any two regions $R(s)$ and $R(t)$ we construct $\band(R(s), R(t))$ in $O(k \log k)$ time. 
We show that by storing for each region $R_i$ its convex hull, this construction time becomes polylogarithmic

 To this end, recall the definition of a candidate chain:

\candidate*

\textcolor{white}{lipics is stupid} 
% \sarita{lol this white text is costing us a line, why is it here?}

Lemma~\ref{lemm:candidate_chain_find} states that for any edge $(x, y)$ and any node $\nu$, we can find all candidate chains of $\mathbb{E}(\nu)$ that contain $(x, y)$ in $O(k)$ time. 
We show that instead of this expensive test, we may simply maintain for all nodes $\nu \in T$ all candidate chains of $\mathbb{E}^*(\nu)$ in a concatenable queue. 

Corollary~\ref{cor:chain} aims to dynamically maintain for a node $\nu$ all subchains of $\mathbb{E}^*(\nu)$ that are spanning in $F$. We showed that given a candidate chain, we can compute in $O(k)$ time whether it is spanning in $F$.
In this section we show that this test can be sped up to polylogarithmic time. 
Corollary~\ref{cor:occupied} aims to dynamically maintain for a node $\nu$ all chains that are hit in $F$. 
Given an edge $(s, t)$ (that is shared between two candidate chains) we tested in $O(k \log^2 n)$ time whether $(s, t)$ was occupied in $F$.
We show that we can perform this test in polylogarithmic time instead.

\subsection{A new data structure}
We denote for any family of regions $F$ by $\PHT(F)$ a Partial Hull Tree on $V(F)$. 
The base of our data structure is $T = \PHT(F)$.
We augment $T$ by maintaining for each $\nu \in \PHT(F)$  four balanced trees associated with~$\mathbb{E}^*(\nu)$, where edges are ordered by their appearance in $\mathbb{E}^*(
\nu)$:

\begin{enumerate}
    \item a tree storing all edges $(s, t) \in \mathbb{E}^*(\nu)$ that are non-canonical in $F$,
    \item a tree storing all edges $(s, t) \in \mathbb{E}^*(\nu)$ that are canonical and non-dividing in $F$, 
        \item a tree storing all contiguous subchains of $\mathbb{E}^*(\nu)$ that are spanning in $F$,
      \item a tree storing all contiguous subchains of $\mathbb{E}^*(\nu)$ that are hit in $F$.
\end{enumerate}

\noindent
We assign to each region in $F$ a unique ID that is the bit-string of an integer in $[n]$. 
Whenever we retrieve a region in $F$, we assign to the retrieved point the same ID as the original region. 
For any pair of integers $(i, j) \in [\log n] \times \{ 0, 1 \}$ denote by $F(i, j)$ all regions in $F$ whose ID has at position $i$ the bit $j$. We store $O(\log n)$ additional Partial Hull Trees.
These are $\PHT(F(i, j))$ for all $(i, j) \in [\log n] \times \{ 0, 1 \}$.
Note that we do not augment these PHTs. 
Since a PHT over $m$ points has $O(\log^2 m)$ worst-case update time, and each retrieval in $F$ triggers $O(k)$ updates, we may immediately note the following:

\begin{theorem}
    \label{thm:PHT_copies}
Let $F$ be a family of $n$ polygons where each polygon has at most $k$ vertices. We may maintain for all $(i, j) \in [\log n] \times \{ 0, 1 \}$ the tree $\PHT(F(i, j))$ subject to retrievals in~$F$ using $O(kn \log n)$ space and $O(k \log^3 (kn))$ worst-case retrieval time. 
\end{theorem}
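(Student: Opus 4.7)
The plan is to derive both bounds by unpacking the bit-indexed construction and combining it with the standard cost of a PHT update. Since Theorem~\ref{thm:PHT_copies} is essentially a bookkeeping result, I don't expect any serious geometric obstacle; the work is entirely in reading off the right counts.

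For the space bound, I would first note that for each fixed coordinate $i \in [\log n]$, the families $F(i,0)$ and $F(i,1)$ partition $F$. Consequently, the total number of polygon vertices stored across the two trees $\PHT(F(i,0))$ and $\PHT(F(i,1))$ equals $|V(F)| \le kn$. A Partial Hull Tree occupies linear space in its number of stored points (Definition~\ref{definition:pht}), so the pair of trees for a fixed $i$ together uses $O(kn)$ space. Summing over the $\log n$ coordinates yields the claimed $O(kn \log n)$ bound.

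For the retrieval time, I would observe that a retrieval of a region $R_\ell$ affects exactly $\log n$ of the $2\log n$ trees: for each bit position $i$, the ID of $R_\ell$ determines a unique $j_i \in \{0,1\}$ with $R_\ell \in F(i, j_i)$, while the complementary family $F(i, 1-j_i)$ is untouched. In each affected tree we delete the (at most) $k$ vertices of $R_\ell$ from $V(F(i,j_i))$ and insert the single retrieved point $p_\ell$ (which is assigned the same ID, so it stays in the same bit buckets). This is $k+1$ updates per affected tree.

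Finally, I would invoke the deterministic $O(\log^2 m)$ worst-case update time of the Overmars--van Leeuwen PHT on $m$ points. Here $m \le kn$, so each insertion or deletion costs $O(\log^2(kn))$, and the $k+1$ updates needed per affected tree cost $O(k \log^2(kn))$. Summing across the $\log n$ affected trees and using $\log n \le \log(kn)$, the total worst-case time per retrieval is $O(k \log n \cdot \log^2(kn)) = O(k \log^3(kn))$, as claimed. The only subtlety to watch is the relationship between $\log n$ and $\log(kn)$ when $k$ is superpolynomial in $n$, but the inequality $\log n \le \log(kn)$ absorbs this cleanly into the stated bound.
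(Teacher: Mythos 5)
Your proposal is correct and matches the paper's own (very brief, inline) justification: the paper likewise observes that a retrieval touches the $\log n$ trees determined by the region's ID bits, triggers $O(k)$ point updates in each, and charges $O(\log^2 (kn))$ per Overmars--van Leeuwen update, with linear space per tree summed over $O(\log n)$ trees. Your accounting (including the $\log n \le \log(kn)$ absorption) is exactly the intended argument, so there is nothing to add.
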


\noindent
During preprocessing, we store additionally for each region $R \in F$ the edges of the full convex hull of $R$ in their cyclical order. 
We denote this convex hull by $H(R)$. Whenever we retrieve a region $R_i$ and obtain $p_i$, we set $H(R) = \{ p_i \}$. 
Finally, we define a new concept of candidate edges (which we will use to efficiently find the candidate chains of a node $\nu \in T$). 

\begin{definition}
    Any edge $(s, t)$ is a \emph{candidate edge} of a region $R_b \in F$ if $s, t \in V(R_b)$ and $s, t \not \in V(R_x)$ for $x \neq b$. 
\end{definition}

\begin{definition}
    Let $C$ be a convex chain of edges. The \emph{compressed candidate chain} of $C$ is a sequence of objects, such that each candidate edge $(s, t)$ (of a region $R_b$) in $C$ is stored in a unique object together with all edges on the maximal contiguous subchain of $\OCH(\nu)$ containing $(s, t)$ consisting of candidate edges of $R_b$ only.
\end{definition}

We augment $T$ once more by maintaining for each $\nu \in \PHT(F)$ a family of balanced binary trees. Specifically, we store:

\begin{enumerate}
    \setcounter{enumi}{4}
    \item the compressed candidate chain $\Gamma^*(\nu)$ of $\mathbb{E}^*(\nu)$ in a balanced binary tree (ordered along~$\mathbb{E}^*(\nu))$. Each leaf of this tree stores another balanced binary tree of candidate edges stored in the object (ordered along $\mathbb{E}^*(\nu)$). 
\end{enumerate}

\begin{corollary}
    We can dynamically maintain in $O(k \log^2 m)$ time for each node $\nu \in T$ the above data structure. 
\end{corollary}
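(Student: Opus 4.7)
The plan is to adapt the bottom-up split-and-join template used in Corollaries~\ref{cor:canonical}, \ref{cor:dividing}, \ref{cor:chain}, and \ref{cor:occupied} to the two-level structure of $\Gamma^*(\nu)$. The main new ingredient is the compression: each leaf of the outer tree stores an inner balanced tree of candidate edges of a single region $R_b$, so splits and joins can cut into an object or merge two adjacent objects of the same region.

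First, I would observe that being a candidate edge can be tested in $O(1)$ time. Given a bridge $(s, t)$ of $T$ we already store pointers to the leaves of $T$ holding $s$ and $t$, and each leaf stores the point-region and non-point-region lists described in Section~\ref{sec:kgons}. The edge $(s,t)$ is a candidate edge of $R_b$ precisely when both leaves have an empty point-region list, a non-point-region list equal to exactly $\{R_b\}$, and agree on this $R_b$. This is a constant-time check analogous to Observation~\ref{obs:canonical_test}. Because the PHT has $O(\log m)$ recourse per vertex update, a retrieval (one insertion plus $O(k)$ deletions in $V(F)$) touches only $O(k \log m)$ bridges, so the candidate/non-candidate status of each affected bridge is recomputed in $O(k \log m)$ total time without asymptotic overhead.

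Next I would maintain $\Gamma^*(\nu)$ exactly as the concatenable queues $\mathbb{E}^*(\nu)$ are maintained in~\cite{DBLP:conf/stoc/OvermarsV80}: along each of the $O(k)$ affected root-to-leaf paths, I perform the top-down splits at the new bridges and the bottom-up joins that merge the $O(k \log m)$ affected subtrees. The only difference from Corollary~\ref{cor:canonical} is at the level of a single node: when splitting $\Gamma(\nu)$ at its new bridge $e(\nu) = (s,t)$, the cut may land \emph{inside} a compressed object (a maximal run of candidate edges of some $R_b$), in which case I additionally split its inner tree at the cut position. Symmetrically, during a join, if the last object of one side and the first object of the other side both hold candidate edges of the same region $R_b$, I concatenate their inner trees into a single object. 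Both the outer split/join and any induced inner split/join take $O(\log m)$ time using standard balanced-tree operations.

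The main obstacle is verifying that the compression is preserved after splits and joins; however, by the argument above, the invariant is restored by a single conditional merge at each join and a single split inside at most one object at each split, each costing $O(\log m)$. Summing over the $O(k \log m)$ affected nodes along $O(k)$ root-to-leaf paths, the total update time is $O(k \log^2 m)$, matching the claimed bound.
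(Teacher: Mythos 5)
Your proposal matches the paper's own proof in both substance and cost accounting: a constant-time candidate-edge test at each affected bridge via the leaf lists (the paper invokes Observation~\ref{obs:canonical}, noting candidate edges are canonical), followed by maintaining $\Gamma^*(\nu)$ with the standard concatenable-queue top-down splits and bottom-up joins, where the two-level compression is repaired by splitting inside at most one object per split and conditionally merging boundary objects of the same region per join, giving $O(\log m)$ per node over $O(k\log m)$ affected nodes, i.e.\ $O(k\log^2 m)$ total. This is essentially the same argument as in the paper, so no further comparison is needed.
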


\begin{proof}
    Any candidate edge is canonical in $F$. In particular, we can use Observation~\ref{obs:canonical}
    to test whether any bridge $(s, t)$ in $T$ is a candidate edge, and of which region. 
    Thus, we may efficiently maintain the set of all bridges in $T$ that are candidate edges of some region in $F$. What remains is to recompute the corresponding binary trees during concatenable queue operations.
    We use a strategy identical to the maintenance of our other augmenting trees. 

    Formally, let $\nu \in T$ have children $x$ and $y$ and let $\Gamma(\nu)$ denote the compressed candidate chain of $\mathbb{E}(\nu)$ in a balanced binary tree. Let each leaf of this tree store another balanced binary tree of candidate edges stored in the object. 
    Given $(\Gamma(\nu), e(\nu), \Gamma^*(x))$ we may compute $\Gamma(x)$ by splitting $\Gamma(\nu)$ at $e(\nu)$ and joining the result with $\Gamma^*(x)$. 
    
    Given $(\Gamma(x), e(\nu), \Gamma(y))$ we can also compute $(\Gamma(\nu), \Gamma^*(x), \Gamma^*(y))$. Indeed, we may first split $\Gamma(x)$ at $e(\nu)$ to obtain $\Gamma^*(x)$ and $C_x$ (the prefix of $\Gamma(\nu)$ until $e(\nu)$). Let $e(\nu)$ be a candidate edge of a region $R_b$. 
    We insert $e(\nu)$ into $C_x$ in $O(\log m)$ time (creating either a new rightmost leaf, or adding it to the rightmost leaf of $C_x$ whenever $e(\nu)$ is a candidate edge of the same region $R_b$ as edges in that leaf. 
    Similarly, we obtain $\Gamma^*(y)$ and $C_y$.
    We then join $C_x$ and $C_y$. Let the rightmost leaf of $C_x$ contain candidate edges of some region $R_b$ and the leftmost leaf of $C_y$ contain candidate edges of some region $R_c$. If $b = c$ we join these two leafs in $O(\log m)$ time. 
   
    By applying this top-down and bottom-up traversal in the same manner as we maintain concatenable queues, the corollary follows. 
\end{proof}

\subsection{Speeding up our reconstruction algorithm}

We show that this new structure significantly speeds up Lemma~\ref{lemm:candidate_chain_find}:

\begin{lemma} \label{lemm:candidate_chain_find_fast}
    Let $\nu \in T$.
    There are at most two candidate chains that contain an edge $(s, t)$ of $\mathbb{E}(\nu)$ and, given $\mathbb{E}(\nu)$ and our data structure, we can find these in $O(\log m)$ time.
\end{lemma}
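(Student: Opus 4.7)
The plan is to use the augmented structure $\Gamma(\nu)$, the compressed candidate chain of $\mathbb{E}(\nu)$, to bypass the linear scan that the $O(k)$ proof of Lemma~\ref{lemm:candidate_chain_find} performs. First, I will check in $O(1)$ time whether $(s,t)$ is dividing in $F$ using the information maintained by our data structure; if not, no candidate chain contains $(s,t)$ and we are done. Otherwise, since dividing edges are canonical, I apply Observation~\ref{obs:canonical} to both endpoints to obtain in $O(1)$ time the unique (possibly point) regions $R(s), R(t)$ containing $s$ and $t$.

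The argument then splits into two cases. If $R(s)=R(t)=R_b$, then $(s,t)$ is itself a candidate edge of $R_b$, and since both its endpoints are uniquely in $V(R_b)$, it cannot appear as a boundary edge of a candidate chain with any other interior region. Hence at most one candidate chain contains $(s,t)$ in this case. I locate the object of $\Gamma(\nu)$ that contains $(s,t)$ by a single $O(\log m)$ search and read off the first and last edges of the maximal group of candidate edges of $R_b$ stored in the object's internal balanced tree. The candidate chain, if it exists, is obtained by extending this group by one edge on each side inside $\mathbb{E}(\nu)$; I verify using Observation~\ref{obs:canonical} that the two extension edges exist, are dividing, and that their outer endpoints lie outside $V(R_b)$.

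If instead $R(s)\neq R(t)$, then $(s,t)$ is not a candidate edge, so it can only be a boundary edge of a candidate chain, with interior region either $R(s)$ (extending leftward) or $R(t)$ (extending rightward), giving at most two chains. I handle each direction symmetrically. For the rightward case I look up the edge $(t,u)$ succeeding $(s,t)$ in $\mathbb{E}(\nu)$ in $O(\log m)$ time. If $(t,u)$ is dividing and $R(u)=R(t)$, then $(t,u)$ is a candidate edge of $R(t)$, and I proceed exactly as in the equal-region case to obtain the maximal group, extending once on the far side and prepending $(s,t)$. If $(t,u)$ is dividing with $R(u)\neq R(t)$ and $u\notin V(R(t))$, I check whether $(s,t,u)$ forms a length-3 candidate chain using Observation~\ref{obs:canonical}. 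The leftward direction is handled identically using the edge preceding $(s,t)$.

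The main obstacle is not algorithmic but bookkeeping: I must check each clause of the definition of a candidate chain (every edge dividing, outer endpoints not in the interior region, and the length-3 versus long-chain distinction) so that no valid chain is missed and no spurious one is returned. Since all of these checks reduce to $O(1)$ applications of Observation~\ref{obs:canonical} together with $O(\log m)$ lookups in $\mathbb{E}(\nu)$ and $\Gamma(\nu)$, the total running time is $O(\log m)$, as claimed.
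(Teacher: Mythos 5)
Your proof is correct and follows essentially the same route as the paper: it replaces the $O(k)$ walk of Lemma~\ref{lemm:candidate_chain_find} by an $O(\log m)$ lookup of the maximal run of candidate edges in $\Gamma(\nu)$, and handles the boundary/length-3 cases by checking the predecessor and successor edges in $\mathbb{E}(\nu)$ via Observation~\ref{obs:canonical}. Your explicit case split on $R(s)=R(t)$ versus $R(s)\neq R(t)$ is just a more detailed write-up of the paper's argument, which states the same idea by reference to the earlier lemma.
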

\begin{proof}
    The proof is identical to that of \cref{lemm:candidate_chain_find}, with one change: Instead of traversing $\bE(\nu)$ for $O(k)$ steps, we do a lookup in $\Gamma(\nu)$ to find the compressed candidate chain that contains $(s, t)$. Let $e$ and $e'$ be the first and last candidate chain in this object. Then, all edges in between $e$ and $e'$ are candidate edges and hence dividing, so we only need to find the predecessor of $e$ and successor of $e'$ in $\bE(\nu)$, and check that these edges each are dividing and have endpoints in two different regions.
\end{proof}

\begin{lemma}
    \label{lemm:occupied_fast}
    For any $\nu \in T$ and any edge $(s, t) \in \bE(\nu)$ that is dividing in $F$, we can test whether $(s, t)$ is occupied in $F$ in $O(\log^2 m)$ time. 
\end{lemma}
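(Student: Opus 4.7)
The plan is to avoid the $O(k \log^2 m)$ cost of \cref{lem:occupationtesting} by leveraging the $O(\log n)$ auxiliary Partial Hull Trees from Theorem~\ref{thm:PHT_copies}, so as to implicitly exclude $R_a$ and $R_b$ from $V(F)$ rather than deleting their $O(k)$ vertices one by one. First I would find in $O(\log n)$ time a bit position $i^*$ at which the IDs of $R_a$ and $R_b$ differ; letting $j_a \neq j_b$ be their respective bits, the tree $\PHT(F(i^*, j_a))$ contains $R_a$ but not $R_b$, $\PHT(F(i^*, j_b))$ contains $R_b$ but not $R_a$, and the union of their vertex sets equals $V(F)$.

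Second, I would represent $\band(R_a, R_b)$ implicitly. From the preprocessed convex hulls $H(R_a)$ and $H(R_b)$, compute the two outer common tangents between them in $O(\log k)$ time by standard tangent search on convex polygons. Since $R_a$ and $R_b$ are strictly vertically separated (because $(s,t)$ is dividing), $H(R_a)$ and $H(R_b)$ are disjoint, so $\band(R_a, R_b)$ decomposes as $H(R_a) \cup S \cup H(R_b)$, where $S$ is a convex ``strip'' bounded by the two tangent segments and chords of each hull. In particular, $\band(R_a,R_b) \setminus H(R_b) = H(R_a) \cup S$ is a union of two convex pieces, and symmetrically for $\band(R_a,R_b) \setminus H(R_a)$.

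Third, for each of the $O(1)$ convex pieces $P$ of $\band(R_a,R_b) \setminus H(R_b)$, I would apply Chazelle's $O(\log m)$-time convex-polygon intersection algorithm~\cite{chazelle1980detection} to test whether $P$ intersects the convex hull stored at the root of $\PHT(F(i^*, j_b))$, and symmetrically for $\PHT(F(i^*, j_a))$ against $\band(R_a,R_b) \setminus H(R_a)$. A positive answer certifies that some vertex of $V(F(i^*, j_b))$ lies outside $H(R_b)$ (hence outside $V(R_b) \subseteq H(R_b)$); combined with $R_a \notin F(i^*, j_b)$, this vertex lies in $V(F - R_a - R_b)$ and witnesses that $(s,t)$ is occupied. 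A negative answer from all four tests certifies that no such witness exists.

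The main obstacle will be correctly excluding $\{s, t\}$ and handling the degenerate case where the sole witness returned coincides with $s$ or $t$, which lie on the boundary of $\band(R_a,R_b)$. Since $(s, t)$ is dividing and hence canonical, Observation~\ref{obs:canonical} guarantees that $s$ and $t$ each lie in a uniquely determined non-point region or only in point regions, so $O(1)$ lookups into the leaves of $T$ storing $s$ and $t$ suffice to detect and, if necessary, iterate past these endpoints. Summing the $O(\log n)$ bit search, the $O(\log k)$ tangent computation, and the constant number of $O(\log m)$-time Chazelle intersection queries, together with at most one further $O(\log m)$-time descent through the auxiliary PHT to extract an explicit witness vertex, yields the claimed $O(\log^2 m)$ bound.
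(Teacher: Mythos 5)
There is a genuine gap, and it lies exactly where you deviate from the paper: you try to exclude the regions $R_a,R_b$ with a \emph{single} bit position $i^*$ plus a geometric carve-out of their hulls, and this fails because regions in $F$ may overlap. Consider a region $R_c\neq R_a,R_b$ whose ID agrees with $R_b$'s at position $i^*$ and which has a vertex $q$ strictly inside $H(R_b)$. Since $H(R_b)\subseteq\band(R_a,R_b)$, the edge $(s,t)$ is occupied by Definition~\ref{def:edge_label}. But your test for the tree $\PHT(F(i^*,j_b))$ only probes $\band(R_a,R_b)\setminus H(R_b)$, so it cannot see $q$, and your symmetric test probes $\PHT(F(i^*,j_a))$, which does not contain $R_c$ at all. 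Hence your procedure reports ``not occupied'' on occupied instances; nothing in canonicity/dividingness forbids this configuration (canonicity only constrains $s$ and $t$), and handling overlapping regions is precisely the regime the paper targets. This is why the paper's proof does \emph{not} use one differing bit: to exclude $R(s)$'s own vertices it queries \emph{all} $O(\log n)$ complementary trees $\PHT(F(i,\sigma[i]))$ (every other region differs from $R(s)$ in some bit, so it is covered, while $R(s)$ itself is in none of them), and it excludes the \emph{other} region $R(t)$ geometrically by splitting the band along a vertical line separating $R(s)$ from $R(t)$ — valid because $R(t)$ lies entirely on the far side of that line, not because its hull is subtracted. The $O(\log n)$ intersection tests at $O(\log m)$ each are exactly where the $O(\log^2 m)$ bound comes from; your $O(1)$-test shortcut is not recoverable without abandoning the single-bit idea.

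Two further, smaller problems in the details. First, your decomposition $\band(R_a,R_b)\setminus H(R_b)=H(R_a)\cup S$ is not exact: if $S$ is the chord-bounded quadrilateral it is convex but \emph{contains} the tangent vertices of $R_b$ (and of $R_a$), so the Chazelle test of $S$ against the root hull of $\PHT(F(i^*,j_b))$ (which contains $R_b$) always reports an intersection, giving systematic false positives; if instead you take the true set difference, its boundary follows the arc of $\partial H(R_b)$ and the pieces are not convex, so the $O(\log m)$ convex-intersection primitive does not apply. Second, the treatment of the excluded endpoints $\{s,t\}$ is vaguer than needed: the paper resolves it by noting that for a canonical edge a non-point $R(s)$ is the \emph{only} region with $s$ as a vertex (so no exclusion is needed), and in the point-region case it temporarily deletes the single leaf storing $s$ from $T$ in $O(\log^2 m)$ time before the intersection test; ``iterating past'' a reported witness is not well defined when the intersection oracle returns only a yes/no answer rather than an explicit vertex.
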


\begin{proof}
    By Observation~\ref{obs:canonical}, we obtain two unique areas $R(s), R(t)$ with $s \in V(R(s))$ and $t \in V(R(t))$ in constant time. The edge $(s, t)$ is occupied if and only if $\band(R(s), R(t))$ contains a vertex in $V(F - R(s) - R(t)) - \{s\} - \{t\}$. 
    Given $H(R(s))$ and $H(R(t))$, we obtain the edges of $\band(R(s), R(t))$ as a balanced binary tree in cyclical ordering in $O(\log k)$ time through the bridge-finding algorithm~\cite{DBLP:conf/stoc/OvermarsV80}.
    Since $R(s)$ and $R(t)$ are vertically separated, we may split $\band(R(s), R(t))$ into two convex areas: $C(s)$ containing $H(R(s))$ and $C(t)$ containing $H(R(s))$.
    The edge $(s, t)$ is occupied if and only if one of two conditions holds:
    
    \begin{enumerate}
        \item $C(s)$ contains a vertex in $V(F - R(s) ) - \{s \}$, or
        \item $C(t)$ contains a vertex in $V(F - R(t))  - \{ t \}$.
    \end{enumerate}

\noindent
    We show how to test for the first condition, testing the second condition is analogous. 
    In the special case that $R(s)$ is a point region we simply perform a deletion on the leaf storing $\{ s \}$ in $O(\log^2 m)$ time (we do not update our auxiliary data structures).
    Let $\rho$ be the root of the updated PHT $T$. 
    We do convex hull intersection testing between $C(s)$ and $\mathbb{E}(\rho)$ to find if $C(s)$ contains any vertex in $V(F) - \{ s \}$. 

    If $R(s)$ is not a point region then $s$ is uniquely a vertex of $R(s)$. What remains is to test whether $C(s)$ contains a vertex in $V(F - R(s) )$. 
    Let $\sigma$ be the bit-string corresponding to the ID of $R(s)$ and let $\sigma[i]$ denote inverse of the $i$'th bit in $\sigma$.
    Note that for all $i \in [\log n]$, $R(s) \not \in F(i, \sigma[i])$ and that for all $R_b \neq R(s)$ there exists an integer $i$ such that $R_b \in F(i, \sigma[i])$. 
    
    For all $i \in [\log n]$, we consider the root $\rho$ of $\PHT(F(i, \sigma[i]))$ and do convex hull intersection testing in $O(\log m)$ time between $C(s)$ and $\mathbb{E}(\rho)$ to determine whether there is a vertex in $V(F(i, \sigma[i]))$ that is contained in $C(s)$. 
    It follows that in $O(\log^2 m)$ time we can test whether $C(s)$ contains a vertex in $V(F - R(s) ) - \{s \}$ and so the lemma follows.    
\end{proof}

\begin{lemma} \label{lemm:hit_testing_fast}
    For any node $\nu \in T$ and any edge $(x, y) \in \bE(\nu)$, we can find all contiguous subchains of $\mathbb{E}(\nu)$ that contain $(x, y)$ and are hit in $F$ in $O(\log^2 m)$ time. 
\end{lemma}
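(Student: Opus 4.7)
My plan is to transport the structure of the original Lemma~\ref{lemm:hit_testing} proof verbatim, replacing each subroutine with its polylogarithmic counterpart from this appendix. Recall the observation preceding Lemma~\ref{lemm:hit_testing} that any subchain $(q,r,\ldots,s,t,\ldots,u,v)$ of $\mathbb{E}(\nu)$ hit in $F$ decomposes as two candidate chains $(q,r,\ldots,s,t)$ and $(s,t,\ldots,u,v)$ sharing the occupied edge $(s,t)$. So every hit chain containing $(x,y)$ is pinned down by locating (a) the (at most two) candidate chains containing $(x,y)$ and (b) each candidate chain abutting one of those at an endpoint edge, then (c) testing occupation of the shared edge.

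First I would call Lemma~\ref{lemm:candidate_chain_find_fast} on $(x,y)$ to obtain the at most two candidate chains $C_1,C_2$ of $\mathbb{E}(\nu)$ containing $(x,y)$, spending $O(\log m)$ time. For each $C_i$ I take its first and last edge $e,e'$ and invoke Lemma~\ref{lemm:candidate_chain_find_fast} again on the predecessor of $e$ and the successor of $e'$ in $\mathbb{E}(\nu)$ (located via the concatenable queue structure in $O(\log m)$ time). This yields an $O(1)$-size list of pairs $(C,C')$ of candidate chains meeting at a common edge $(s,t)$ such that $(x,y)$ lies in $C\cup C'$. All of this bookkeeping costs $O(\log m)$ time in total.

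For each such candidate pair I verify in $O(1)$ time, using Observation~\ref{obs:canonical} on the endpoints of $(s,t)$, that the regions $R_a,R_b$ supporting the two candidate chains are distinct (so $s\in V(R_a)$, $t\in V(R_b)$, $a\ne b$) and that the flanking edges $q$-to-$r$ and $u$-to-$v$ are dividing (a boolean already maintained in the data structure). Then I invoke Lemma~\ref{lemm:occupied_fast} on $(s,t)$ to decide occupation in $O(\log^2 m)$ time. The pair corresponds to a contiguous subchain of $\mathbb{E}(\nu)$ hit in $F$ iff this occupation test returns true, and the chain itself is $C\cup C'$, which I can report as two pointers into $\mathbb{E}(\nu)$.

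Summing up, there are $O(1)$ candidate-chain lookups at $O(\log m)$ each and $O(1)$ occupation tests at $O(\log^2 m)$ each, so the total cost is $O(\log^2 m)$. The only delicate step, rather than a real obstacle, is ensuring that no hit chain containing $(x,y)$ escapes enumeration: this follows from the decomposition observation, because the shared edge $(s,t)$ of any such chain is either an edge interior to $C_1\cup C_2$ (caught in step one) or an endpoint edge of one of them (caught in step two), exhausting all possibilities.
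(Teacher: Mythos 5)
Your proposal is correct and matches the paper's proof, which is exactly the argument of Lemma~\ref{lemm:hit_testing} with Lemma~\ref{lemm:candidate_chain_find} replaced by Lemma~\ref{lemm:candidate_chain_find_fast} and the occupation test replaced by Lemma~\ref{lemm:occupied_fast}, giving $O(\log m)$ for the $O(1)$ candidate-chain lookups plus $O(\log^2 m)$ per occupation test. Your extra $O(1)$ checks of region distinctness and dividing flanking edges are already implied by the candidate-chain structure, so they are harmless but not needed.
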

\begin{proof}
    The proof is identical to \cref{lemm:hit_testing}, using \cref{lemm:candidate_chain_find_fast,lemm:occupied_fast}.
\end{proof}

\begin{corollary} \label{cor:occupied_fast}
     We can dynamically maintain in $O(k \log^3 m)$ time for each node $\nu \in T$ a balanced binary tree of all subchains of $\mathbb{E}^*(\nu)$ that are hit in $F$.   
\end{corollary}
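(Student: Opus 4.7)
The plan is to reuse the template of \cref{cor:occupied} but substitute its $O(k \log^2 m)$ occupation test with the faster $O(\log^2 m)$ hit-test from \cref{lemm:hit_testing_fast}. First I would bound how many nodes of $T$ can see their hit-subchain tree change after a single retrieval. Since a retrieval triggers $O(k)$ PHT updates and each update has $O(\log m)$ bridge recourse, only $O(k \log m)$ nodes $\nu \in T$ experience a change to their leaf-set. For any such node with children $x,y$, any hit subchain of $\mathbb{E}(\nu)$ that is absent from both $\mathbb{E}(x)$ and $\mathbb{E}(y)$ must contain the bridge $e(\nu)$, so the new or destroyed hit subchains at $\nu$ can be identified by inspecting $e(\nu)$ alone.

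For each such affected $\nu$ I would then invoke \cref{lemm:hit_testing_fast} on $e(\nu)$ to enumerate the $O(1)$ hit subchains that contain this bridge, at a cost of $O(\log^2 m)$ per call. Summed over all affected nodes this gives $O(k \log^3 m)$ time in total. The resulting changes are pushed into the augmenting balanced trees attached to each $\mathbb{E}^*(\nu)$ via exactly the same split/join template used in \cref{cor:canonical,cor:dividing,cor:chain}: on the top-down pass, split the parent augmenting tree at $e(\nu)$ and concatenate with the appropriate child's tree; on the bottom-up pass, reverse the operation. Each split or join costs $O(\log m)$, contributing only $O(k \log^2 m)$ additional work, which is subsumed by the enumeration bound.

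The main obstacle will be the fragmentation of hit subchains caused by split operations. When $\mathbb{E}(\nu)$ is split at $e(\nu)$, a hit subchain crossing $e(\nu)$ breaks into two fragments, neither of which need remain hit since the internal occupied edge may end up on only one side. I would resolve this exactly as in \cref{cor:chain}: simply drop such fragments at the split point, at no asymptotic overhead. Symmetrically, a bottom-up concatenation at $\nu$ may merge two candidate fragments from $\mathbb{E}(x)$ and $\mathbb{E}(y)$ into a new hit subchain of $\mathbb{E}(\nu)$, but every such chain contains $e(\nu)$ and is therefore caught by the \cref{lemm:hit_testing_fast} call of the enumeration step. Combining the $O(k \log^3 m)$ enumeration cost with the $O(k \log^2 m)$ tree-maintenance cost yields the stated $O(k \log^3 m)$ update time.
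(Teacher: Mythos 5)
Your proposal is correct and follows essentially the same route as the paper: the paper's proof simply instantiates the template of \cref{cor:occupied} (bounding the affected nodes by the $O(k\log m)$ bridge recourse, testing the $O(1)$ affected chains per bridge via \cref{lemm:hit_testing_fast} at $O(\log^2 m)$ each, and maintaining the augmenting trees with the usual split/join scheme). Your extra remarks on bridge-crossing chains and fragment handling match what the paper already does in \cref{cor:chain}, so there is no substantive difference.
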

\begin{proof}
    The proof is identical to \cref{cor:occupied}, using \cref{lemm:hit_testing_fast}.
\end{proof}

What remains is to find for each $\nu \in T$ all spanning subchains of $\mathbb{E}^*(\nu)$. 

\begin{lemma}
    \label{lemm:chaincomputation_fast}
 For any $\nu \in T$, given $e(\nu) = (s, t)$ and 
     $\mathbb{E}(\nu)$, we may find the at most two contiguous subchains $C$ of $\mathbb{E}(\nu)$ with $s, t \in C$ that are spanning in $F$ in $O(\log^2 m)$ time.
\end{lemma}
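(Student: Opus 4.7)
The plan is to follow the same three-step template as the slow version in Lemma~\ref{lemm:chaincomputation}, but replace each step with a polylogarithmic primitive developed in this appendix. First, I would apply Lemma~\ref{lemm:candidate_chain_find_fast} to obtain the at most two candidate chains $C = (q, \ldots, r')$ of $\mathbb{E}(\nu)$ that contain $(s,t)$ in $O(\log m)$ time. For each such chain, Observation~\ref{obs:canonical} hands back the three relevant regions $R(q), R_b, R(r')$ in constant time, since $C$ is canonical and the interior vertices of $C$ all lie in $V(R_b)$.

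Next, for each candidate chain I must test whether $R_b$ intersects the inside of $\OCH(\{R(q), R(r')\})$, which is precisely the condition for $C$ to be spanning. The slow version rebuilt $\OCH(\{R(q), R(r')\})$ from scratch in $O(k \log k)$ time. Instead, I would use the precomputed convex hulls $H(R(q))$ and $H(R(r'))$ that our data structure already stores as balanced binary trees. Running the standard bridge-finding routine of~\cite{DBLP:conf/stoc/OvermarsV80} on these two hulls, I can locate the upper and lower tangents between them in $O(\log k)$ time and stitch them together into a balanced tree representation of $\OCH(\{R(q), R(r')\})$.

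With this representation in hand, I would invoke Chazelle's convex polygon intersection algorithm~\cite{chazelle1980detection} on $\OCH(\{R(q), R(r')\})$ and $H(R_b)$ in $O(\log m)$ time to decide whether their interiors meet. The chain $C$ is spanning in $F$ exactly when this test returns true; otherwise it is discarded. Since at most two candidate chains are produced and the per-chain cost is $O(\log m)$ for bridge finding plus $O(\log m)$ for intersection testing, the overall running time is comfortably within the claimed $O(\log^2 m)$ bound.

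The main obstacle I anticipate is cleanly handling degenerate cases: when $R(q)$ or $R(r')$ is a point region (so $H(\cdot)$ is a single vertex), or when both are points so that $\OCH(\{R(q), R(r')\})$ collapses to a segment with empty interior. In these cases one has to verify that the bridge-finding step still yields a valid balanced-tree representation (or short-circuit the test to report that no spanning chain exists, consistent with the spanning-chain definition requiring $R_b$ to meet a nonempty inside), and that the intersection test distinguishes interior intersection from mere boundary contact. All of these are standard adjustments that do not affect the asymptotic bound.
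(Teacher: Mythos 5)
Your proposal matches the paper's proof essentially step for step: invoke Lemma~\ref{lemm:candidate_chain_find_fast} to get the at most two candidate chains, extract the three regions via Observation~\ref{obs:canonical}, assemble $\OCH(R(q) \cup R(r'))$ from the stored hulls $H(\cdot)$ via bridge finding (valid since the dividing edges force the two outer regions to be strictly vertically separated), and decide spanning by a logarithmic convex-hull intersection test against $H(R_b)$. The extra attention to point-region degeneracies is a reasonable refinement but does not change the argument or the bound.
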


\begin{proof}
%     Per definition, two spanning subchains of a convex chain $\sigma$ may only share their outermost edges. Therefore, there are at most two contiguous subchains $C$ of $\mathbb{E}(\nu)$ with $s, t \in C$ that are spanning in $F$. \sarita{a bit strange that the 'at most two' is already in the lemma statement, I would remove it there}
%     \ivor{The problem is that we should also prove somewhere that there are at most two no?}
%     \dan{Isn't the ``at most two'' already part of \cref{lemm:chaincomputation}? The only new thing here is the faster running time.}

We invoke \cref{lemm:candidate_chain_find_fast} to find the at most two contiguous subchains $C$ of $\mathbb{E}(\nu)$ with $s, t \in C$,
Given any candidate chain $C = (q, \ldots, s, \ldots, r)$, we show how to test in $O(\log m)$ time whether it is spanning in $F$. 
We apply Observation~\ref{obs:canonical} to obtain three unique areas $R(q)$, $R(s)$ and $R(r)$. 
The chain is spanning if and only if $R(s)$ intersects the inside of $\OCH(R(q), R(r))$. 
Since $R(q)$ and $R(r)$ are vertically separated and we store their convex hulls in a data structure, we may obtain $\OCH(R(q) \cup R(r))$ as a balanced binary tree in $O(\log m)$ time.
We then do convex hull intersection testing between the convex hull of $R(s)$ and $\OCH(R(q) \cup R(r))$ in $O(\log m)$ time.

\end{proof}

\begin{corollary}
        \label{cor:chain_fast}
         We can dynamically maintain in $O(k \log^3 m)$ time for each node $\nu \in T$ a balanced binary tree of all subchains of $\mathbb{E}^*(\nu)$ that are spanning in $F$.   
\end{corollary}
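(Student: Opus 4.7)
The plan is to mirror the structure of the proof of \cref{cor:chain} but to replace the $O(k \log m)$-time test for whether a candidate chain is spanning with the much faster $O(\log^2 m)$-time test provided by \cref{lemm:chaincomputation_fast}.

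First, observe that for any $\nu \in T$ with children $x, y$, a contiguous subchain of $\mathbb{E}(\nu)$ that is spanning in $F$ but not a subchain of $\mathbb{E}(x)$ or $\mathbb{E}(y)$ must contain the bridge $e(\nu)$. Thus a node $\nu$ can only gain or lose such a ``new'' spanning subchain when its bridge changes, which by the recourse bound of the PHT (see \cref{definition:pht} and~\cite{DBLP:conf/stoc/OvermarsV80}) happens for $O(\log m)$ nodes per point update. Because a retrieval in $F$ triggers $O(k)$ point updates in $T$, only $O(k \log m)$ nodes are affected per retrieval.

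For each affected node $\nu$, we invoke \cref{lemm:chaincomputation_fast} with the new bridge $e(\nu) = (s,t)$ to find the at most two contiguous subchains of $\mathbb{E}(\nu)$ containing $e(\nu)$ that are spanning in $F$. This costs $O(\log^2 m)$ per affected node, for a total of $O(k \log^3 m)$ across all affected nodes. Thanks to the data structure maintained by \cref{thm:PHT_copies}, all the auxiliary convex hulls $H(R)$ and the PHTs $\PHT(F(i,j))$ that \cref{lemm:chaincomputation_fast} relies on are up to date within the same budget.

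Finally, we maintain the balanced binary tree $\Sigma^*(\nu)$ of all contiguous subchains of $\mathbb{E}^*(\nu)$ that are spanning in $F$, ordered as they appear in $\mathbb{E}^*(\nu)$, exactly as the analogous trees in \cref{cor:canonical,cor:dividing,cor:chain,cor:occupied} are maintained. We traverse the $O(k)$ affected root-to-leaf paths top-down, performing at each node the usual split of $\Sigma(\nu)$ at $e(\nu)$ and a join with $\Sigma^*(x)$ to obtain $\Sigma(x)$. The only subtlety is that a split may cut through an existing spanning subchain that straddles $e(\nu)$; we simply delete this chain from the data structure at $O(\log m)$ cost, since, by the argument of the first paragraph, any spanning chain containing the new bridge will be rediscovered by \cref{lemm:chaincomputation_fast}. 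We then traverse the paths bottom-up, and at each node with updated children we split $\Sigma(x)$ and $\Sigma(y)$ at the new bridge $e(\nu)$, discard any spanning chain crossing it, reinsert the (at most two) spanning chains returned by \cref{lemm:chaincomputation_fast}, and concatenate to produce $\Sigma(\nu)$, $\Sigma^*(x)$, $\Sigma^*(y)$. Each of the $O(k \log m)$ split/join operations costs $O(\log m)$, giving $O(k \log^2 m)$ for the tree-maintenance, dominated by the $O(k \log^3 m)$ cost of the spanning-chain tests. The hardest part here is simply to check that every place where a spanning chain might appear or disappear is caught by the bridge-recourse argument, which is precisely the same observation that justifies the maintenance of all the other augmenting trees in this section.
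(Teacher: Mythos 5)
Your proof is correct and follows essentially the same route as the paper, whose own argument is simply the proof of \cref{cor:chain} (count the $O(k\log m)$ nodes whose leaf sets change, invoke the chain-finding lemma at each bridge, and maintain the augmenting trees by the usual concatenable-queue splits and joins, deleting any spanning chain cut by a split) with \cref{lemm:chaincomputation_fast} substituted for \cref{lemm:chaincomputation}. The only nitpick is that the trigger should be phrased as ``the set of leaves of $\nu$ changes'' rather than ``its bridge changes,'' but since your traversal processes every node on the updated root-to-leaf paths anyway, this does not affect correctness or the $O(k\log^3 m)$ bound.
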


\begin{proof}
       This proof is identical to that of \cref{cor:chain}, invoking \cref{lemm:chaincomputation_fast} instead of \cref{lemm:chaincomputation}.
\end{proof}

\polygonsImproved*

\begin{proof}
    By Corollaries~\ref{cor:canonical}, \ref{cor:dividing}, \ref{cor:occupied_fast}, and \ref{cor:chain_fast}, we may maintain our augmented Partial Hull Tree in $O(k \log^3 (kn))$ time per retrieval. 
    This data structure maintains at its root four balanced trees storing all: non-canonical edges, canonical but non-dividing edges;
    subchains of $\OCH(F)$ that are hit in $F$, or spanning in $F$.
    We observe that Algorithm~\ref{alg:new_strategy} only considers occupied edges if all edges on $\OCH(F)$ are dividing.
    Then, any occupied edge corresponds to a hit subchain.
    We may immediately use these lists to execute Algorithm~\ref{alg:new_strategy}. This procedure performs $O( r(F, P))$ retrievals and the theorem follows.    
\end{proof}

\section{\texorpdfstring{Pairwise disjoint disks with a radius in $[1, k]$, or ply at most $k$}{Pairwise disjoint disks of bounded radius}}\label{sec:disks}

We simultaneously consider two scenarios.
We let either $F$ be a family of pairwise disjoint disks with radii in $[1, k]$,  or $F$ be a set of unit disks of ply $k$. 
We preprocess $F$ and, given access to some $P \sim F$, we reconstruct $\CH(P)$ in $O(r(F, P) \cdot k \log^3 n )$ time. 

\subparagraph{Core idea and approach.}
We cannot apply the approach of Section~\ref{sec:kgons}, since we no longer have a finite vertex set whose convex hull equals the convex hull of $F$.
Instead, for any family~$F$ of disks and point regions, let $V(F)$ denote the set of vertices obtained by including for each region $R \in F$ its centre point (we assume that all centre points lie in general position).
Our key idea is to maintain a balanced decomposition of vertical lines through $V(F)$. 

Each node $\nu$ in the decomposition has a corresponding line $\ell$ that has at most a constant fraction of the remaining regions on either side.
In particular, the line $\ell$ partitions the remaining regions into three families: disks strictly to the left of $\ell$, disks strictly to the right of $\ell$, and disks that intersect $\ell$.
We observe that, around any vertical line $\ell$, the upper quarter convex hull cannot be overly complex.
We prove this separately for our two cases of $F$. The remainder of this section then relies on this observation.

\begin{lemma}
\label{lemm:hull_complexity}
     For $F$ a family of pairwise disjoint disks with radii in $[1, k]$,  $A \subseteq F$, and $S$ a vertical slab of width $2k$, let $F' = \{ R \in F \retrieve A \mid R \subset S \}$. 
     Then $\OCH(F')$ has $O(k)$ edges. 
\end{lemma}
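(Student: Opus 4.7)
The plan is to bound the edge count of $\OCH(F')$ by first bounding the number of disks of $F'$ that appear on the upper hull, and then observing that the number of tangent-line edges is at most one less than this count. Throughout, I will exploit the three hypotheses of the lemma in tandem: pairwise disjointness, radii in $[1,k]$, and containment in a slab of horizontal width $2k$.

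First I would show that each disk $D \in F'$ that appears on $\OCH(F')$ contributes a single connected arc. For pairwise disjoint convex bodies, the set of outward normal directions $\theta \in [0,\pi]$ at which $D$ realizes the upper support function $H(\theta) = \max_{R \in F'} h_R(\theta)$ is a closed angular interval (since each $h_R$ is convex in $\theta$ on short intervals and two such support functions cross at most twice). Hence $\partial D \cap \OCH(F')$ is connected, and each edge of $\OCH(F')$ corresponds to a unique pair of consecutive contributing disks. Consequently it suffices to bound the number of disks in $F'$ that contribute an arc.

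Second, I would exploit the slab constraint together with the radii and disjointness conditions to assign each contributing disk a horizontal budget of $\Omega(1)$ inside $S$. For two disks $D_a, D_b$ that appear consecutively on the upper hull, a tangent-geometry calculation gives $p_a = c_a + r_a n$ and $p_b = c_b + r_b n$ where $n$ is the shared outward normal, and $p_b - p_a$ is along the tangent direction. Using $|c_a - c_b| \geq r_a + r_b \geq 2$ together with the fact that $D_a$ and $D_b$ are contained in a slab of horizontal width $2k$, one can argue that the combined horizontal extent of $D_a$'s arc and the subsequent tangent segment is at least a positive constant. Summing these contributions over all consecutive pairs on the hull and comparing to the total horizontal extent $2k$ yields an $O(k)$ bound on the number of contributing disks, and hence on the number of edges of $\OCH(F')$.

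The main obstacle is the per-pair $\Omega(1)$ horizontal lower bound: an individual arc of a disk on the hull can have arbitrarily small $x$-extent (e.g., a disk that barely pokes above the tangent between its two neighbors), so the $\Omega(1)$ budget cannot be attributed to the arc alone. The argument must combine the arc extent with the adjacent tangent segment extent, or pass to a dual parameterization by the outward-normal angle $\theta \in [0,\pi]$ and show that each contributing disk subtends an angular range of at least $\Omega(1/k)$ on the hull. In the angular view, the total range $\pi$ then directly forces $O(k)$ contributing disks, and the bound on edges follows.
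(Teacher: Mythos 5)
There is a genuine gap, on two levels. First, your preliminary claim that each disk contributes a single connected arc is false: take a disk of radius $k$ and a unit disk just above its apex; even on the upper quarter hull the large disk appears on both sides of the small one. The reason your support-function argument breaks is that the dominance region of one disk restricted to $[0,\pi]$ is the intersection of that interval with a circular arc, which can be \emph{two} intervals, not one. This part is repairable at the cost of a constant factor (the left-to-right sequence of regions on the hull contains no pattern $R_aR_bR_aR_b$ for disjoint convex regions, which is exactly how the paper bounds multiplicity), but it is not repairable the way you state it.

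The more serious gap is the counting step itself, which you acknowledge you have not established, and neither of your proposed fixes works as stated. The angular fallback is simply false: place unit disks with centers spaced $2$ apart along a circle of huge radius $N\gg k$ near its apex; they are pairwise disjoint, have radius in $[1,k]$, fit in a width-$2k$ slab, all appear on the quarter hull, yet each subtends a normal-angle range of only $\Theta(1/N)\ll 1/k$ (the total count is still $O(k)$, but not for the reason you give). The horizontal-budget route fails for the upper \emph{half} hull (a near-vertical common tangent between two stacked unit disks has length $\ge 2$ but arbitrarily small horizontal extent); it only becomes viable once you use that $\OCH$ is the upper \emph{quarter} hull, whose edges have slope in $[-1,1]$ because of the sentinel points --- an ingredient you never invoke. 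Even with that, your tangent-geometry and center-distance estimates ($|c_a-c_b|\ge r_a+r_b$) apply only to disk--disk edges, whereas $F\retrieve A$ contains retrieved \emph{point} regions, and edges incident to such points can be arbitrarily short, so there is no per-edge or per-pair $\Omega(1)$ bound at all. The paper's proof avoids both problems: it picks one representative hull vertex per appearing region, places a unit disk of the original radius-$\ge 1$ disk around each representative (these are pairwise disjoint even for retrieved points), argues by a packing bound that any seven consecutive representatives span hull length at least $2$, and uses that the quarter hull inside the slab has total length $O(k)$ by the slope bound; the $abab$-free argument then converts the $O(k)$ bound on distinct regions into an $O(k)$ bound on edges. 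Your proposal is missing precisely these ingredients.
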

\begin{proof}
    For each $R' \in F'$ that appears on $\OCH(F')$ we choose an arbitrary vertex $r' \in V(R') \cap \OCH(F')$. 
    This gives a (sub-) set $V = (v_1, \ldots, v_m)$ of the vertices of $\OCH(F')$, ordered along $\OCH(F')$. 
    We first show that $|V| \in O(k)$. 
    Given $r' \in V$ with $r' \in R'$, let $R$ be the corresponding region in $F$.
    We fix an arbitrary unit disk $D_{r'} \subseteq R$ that contains $r'$. 
    Doing this for all vertices in $V$  gives a set $D$ of $|V|$ pairwise disjoint unit disks. 
    We now claim that for any $i \in [m - 6]$, the subcurve of $\OCH(F')$ between $v_i$ and $v_{i+6}$ has length at least two.
    Indeed, if this subcurve has length less than two, then it lies inside a circle of radius one.
    However, seven pairwise disjoint unit disks cannot all intersect the same unit circle.
    Since $\OCH(F')$ is the \emph{upper quarter} outer convex hull of $F'$, the length of $\OCH(F') \cap S$ is $\Theta(k)$. So $|V| \in O(k)$, and thus the number of distinct regions in $F'$ on $\OCH(F')$ is in $O(k)$. 

    It may be that a region in $F'$ appears multiple times on $\OCH(F')$.
    Let $\sigma$ denote the sequence of regions that appear on $\OCH(F')$ from left to right. Since $F$ is a family of pairwise disjoint disks, this sequence cannot contain the subsequence $R_a, R_b, R_a, R_b$ for any distinct $a, b \in [n]$. 
    It follows that the total number of edges on $\OCH(F')$ is $O(k)$.
\end{proof}

\begin{lemma}
\label{lemm:hull_complexity_overlap}
     For $F$ a family of unit-size disks of ply $k$, let  $A \subseteq F$, and $S$ be a vertical slab of width $2$, let $F' = \{ R \in F \retrieve A \mid R \subset S \}$. 
     Then $\OCH(F')$ has $O(k)$ edges. 
\end{lemma}

\begin{proof}
    Let $z$ be the lowest line such that all regions in $F \cap S$ lie below $z$. 
    Let $Z$ be the subset of $S$ within distance $2$ of $z$. If  
    $p$ is a point in $S$ that appears on $\OCH(F')$ then, because we are computing the upper-quarter hull, $p$ must lie in $Z$. 
        Since all regions have unit size, and all points in the plane intersect at most $k$ regions in $F$, it follows that there are at most $O(k)$ regions $R'$ for which their corresponding point appears on $\OCH(F')$.    
\end{proof}

All remaining proofs work both when $F$ is a family of pairwise disjoint disks with radii in $[1, k]$, and when $F$ is a family of unit-size disks where each point in the plane intersects at most $k$ disks:

\begin{corollary}
\label{corr:hull_complexity}
    If $\ell$ is a vertical line and $F' = \{ R \in F \mid R \cap \ell \neq \emptyset \}$, then $|\OCH(F')| \in O(k)$. 
\end{corollary}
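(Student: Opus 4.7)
The plan is to reduce directly to Lemma~\ref{lemm:hull_complexity} by choosing an appropriate slab. I take $A = \emptyset$ and let $S$ be the vertical slab of width $2k$ centred on $\ell$. The first step is the geometric observation that every disk in $F'$ lies inside $S$: each disk in $F'$ has radius at most $k$ and intersects $\ell$, so its centre has horizontal distance at most $k$ from $\ell$, and since its radius is also at most $k$ the entire disk fits inside the width-$2k$ slab around $\ell$. Hence $F'$ is a pairwise disjoint family of disks with radii in $[1,k]$ all contained in $S$.

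The second step is to apply the argument of Lemma~\ref{lemm:hull_complexity} to $F'$ itself. I will note that that argument uses the input family only through three ingredients: pairwise disjointness (so the packed unit subdisks are disjoint), radii at least $1$ (so a unit subdisk actually exists inside each hull-appearing region), and containment in a width-$2k$ slab (so the $\Theta(k)$ length bound on $\OCH(F') \cap S$ applies). All three properties hold for $F'$, so the proof of the lemma goes through verbatim. It yields that $O(k)$ distinct regions appear on $\OCH(F')$, and the forbidden repeated subsequence $R_a, R_b, R_a, R_b$ argument (which again needs only pairwise disjointness of the disks) then converts this into $|\OCH(F')| \in O(k)$.

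The only subtlety, and the main---minor---obstacle, is that Lemma~\ref{lemm:hull_complexity} is phrased for the specific family of \emph{all} regions of $F \retrieve A$ contained in $S$, whereas our $F'$ may be a strict subfamily (for instance, a small disk that lies in $S$ but does not cross $\ell$ is excluded from $F'$). I resolve this by pointing out that the lemma's proof never uses maximality of the family within $F$; it only relies on the three structural properties above. Thus the conclusion of the lemma transfers immediately to $F'$, giving $|\OCH(F')| \in O(k)$ as required.
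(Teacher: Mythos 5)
Your approach is the intended one: the paper gives no separate proof of \cref{corr:hull_complexity}, which is meant to follow from \cref{lemm:hull_complexity} essentially as you describe, and your observation that the lemma's proof never uses that its $F'$ is the \emph{full} set of regions inside the slab (it only needs pairwise disjointness, radii in $[1,k]$, and containment in a slab of width $O(k)$) is correct and is indeed the point one must check.

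One geometric step is wrong as stated, however. A disk of radius $r\le k$ that intersects $\ell$ has its centre at distance at most $r\le k$ from $\ell$, so its points lie at distance at most $2r\le 2k$ from $\ell$; it does \emph{not} in general fit inside the slab of total width $2k$ centred on $\ell$ (a radius-$k$ disk tangent to $\ell$ reaches distance $2k$ from $\ell$). The correct containment is in the slab of width $4k$ (half-width $2k$) around $\ell$. This is harmless for the conclusion, since the proof of \cref{lemm:hull_complexity} only uses that the slab has width $\Theta(k)$ -- the length of the upper quarter hull inside the slab is still $\Theta(k)$, and the seven-disjoint-unit-disks packing argument is unaffected -- so rerunning the argument in the width-$4k$ slab (or invoking the lemma with $2k$ replaced by $4k$) still yields $|\OCH(F')|\in O(k)$. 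But as a literal reduction to \cref{lemm:hull_complexity} as stated, your premise $R\subset S$ fails for the width-$2k$ slab, so you should adjust the slab width before applying it.
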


\subparagraph{A new data structure.} \cref{corr:hull_complexity} leads to the following data structure.

\begin{definition}
    For a family $F$ of disks and point regions, we define the \emph{Median Cut Decomposition} (MCD) of $F$ as a recursive hierarchical decomposition $\MCD(F)$:
    \begin{itemize}
        \item $\nu$ is the root; a vertical line $\ell(\nu)$ whose $x$-coordinate is approximately the median of $V(F)$. 
               \item $\nu$ stores $\phi(\nu)$: all regions in $F(\nu)$ that intersect $\ell(\nu)$ ordered from top to bottom. 
        \item $F(\nu)$ denotes the input regions $F$.
        \item $L(\nu)$ denotes all regions in $F(\nu)$ strictly left of $\ell(\nu)$. The left child of $\nu$ is $\MCD(L(\nu))$. 
        \item $R(\nu)$ denotes all regions in $F(\nu)$ strictly right of $\ell(\nu)$. The right child of $\nu$ is $\MCD(R(\nu))$.
    \end{itemize}
\end{definition}

\begin{lemma}
    Let $F$ be a family of $n$ disk and point regions. We can construct $\MCD(F)$ in $O(n \log n)$ time. 
\end{lemma}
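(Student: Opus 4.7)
The plan is to build $\MCD(F)$ top-down while maintaining at each recursive node both an $x$-sorted and a $y$-sorted list of the regions currently under consideration, in the same spirit as the standard $O(n \log n)$ construction of $kd$-trees. First I would sort $F$ globally by the $x$-coordinates and by the $y$-coordinates of the centers in $O(n \log n)$ time, and hand the two resulting arrays to the root of the recursion.

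At a node $\nu$ whose input $F(\nu)$ is represented by an $x$-sorted list $X_\nu$ and a $y$-sorted list $Y_\nu$, I choose $\ell(\nu)$ as the vertical line through the median $x$-coordinate of the centers in $X_\nu$, which is accessible in $O(1)$ time. Each region $R \in F(\nu)$ with center $(c_x, c_y)$ and radius $r$ is classified as strictly left of, strictly right of, or intersecting $\ell(\nu)$ in $O(1)$ time by comparing $c_x - r$ and $c_x + r$ to the $x$-coordinate of $\ell(\nu)$. A single scan of $X_\nu$ then produces the $x$-sorted lists $X_{L(\nu)}$ and $X_{R(\nu)}$, and a single scan of $Y_\nu$ produces $\phi(\nu)$ already ordered top-to-bottom together with the $y$-sorted lists $Y_{L(\nu)}$ and $Y_{R(\nu)}$. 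Each scan runs in $O(|F(\nu)|)$ time and preserves the inherited orders, so there is no hidden sorting cost at internal nodes.

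Since $\ell(\nu)$ is chosen at the median $x$-coordinate, both $|L(\nu)|$ and $|R(\nu)|$ are at most $\lceil |F(\nu)|/2 \rceil$, so the recursion has depth $O(\log n)$. A region $R$ appears in $F(\nu)$ only for the ancestors of the unique node that places $R$ into its $\phi$-set, contributing $O(\log n)$ work per region over the whole construction. Summing across regions yields $O(n \log n)$ total time.

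The main thing to be careful about is ensuring a \emph{truly} linear per-node partition: the double-sorted-list invariant is the key point, since otherwise extracting $\phi(\nu)$ in top-to-bottom order would require an extra sort per node and push the bound to $O(n \log^2 n)$. With the invariant in place, the analysis reduces to the familiar ``each region is touched on an $O(\log n)$-length root-to-leaf prefix'' argument.
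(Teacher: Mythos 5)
Your construction is correct, and since the paper states this lemma without giving a proof, your presorted kd-tree-style argument (exact-median split, $O(1)$ classification of each disk against $\ell(\nu)$, linear per-node scans preserving the $x$- and $y$-orders, hence $O(\log n)$ depth and $O(\log n)$ work per region after one global sort) is exactly the routine construction the authors presumably intend. The only detail worth one extra sentence is the implicit claim that the $y$-sorted-by-center order of the regions in $\phi(\nu)$ equals their top-to-bottom order along $\ell(\nu)$: this holds because each region's centre $y$-coordinate is the midpoint of its (nonempty) chord with $\ell(\nu)$, and for pairwise disjoint regions these chords are disjoint intervals on the line.
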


\noindent
We dynamically maintain $\MCD(F)$ subject to retrievals in $F$. We then show that we can augment the Median Cut Decomposition so that each node $\nu \in \MCD(F)$ also maintains a \emph{concatenable queue} of $\OCH(F(\nu))$.
Finally, we use this data structure to obtain an instance-optimal reconstruction strategy in a manner analogous to \cref{sec:kgons}.

\subsection{\texorpdfstring{Dynamically maintaining $\MCD(F)$}{Dynamically maintaining MCD(F)}}

A retrieval operation selects a disk region $R_i$ in $F$ and replaces it with a point region $p_i$.
There exists some $\nu \in \MCD(F)$ where $R_i \in \phi(\nu)$. To perform the retrieval, we first delete~$R_i$ from $\phi(\nu)$ in $O(\log n)$ time. In the special case that $p_i$ lies on $\ell$, we insert $p_i$ into $\phi(\nu)$ in $O(\log n)$ time. Else, we recursively insert $p_i$ into $\MCD(L(\nu))$, or $\MCD(R(\nu))$, depending on whether $p_i$ is left or right of $\ell(\nu)$.
% In the remainder, we handle insertions into $\MCD(L(\nu))$ (or $\MCD(R(\nu))$). 

We want to maintain the property that $\MCD(F)$ is a balanced binary tree.
Many schemes for balancing binary trees rely on tree rotations.
However, when rotating around a node $\nu$ (making it the parent node of its parent) the family of regions $F(\nu)$ (and thereby $\phi(\nu)$) may have $O(n)$ recourse.
To avoid this issue, we use \emph{scapegoating}, introduced by Andersson~\cite{andersson1989improving}.
Instead of using rotations to maintain height-balance, scapegoating rebuilds the balanced tree on $\nu$ from scratch whenever the height of a node $\nu$ becomes too large compared to the size of $\nu$.
If constructing the tree takes $O(n \cdot f(n))$ time, then insertions under scapegoating have $O(f(n)  \log n )$ amortized update cost. 
We briefly note that when constructing $\MCD(F)$ as a perfectly balanced tree, the amortised potential in each node is zero. This implies:

\begin{theorem}
    Let $F$ be a family of $n$ disks and point regions. We can construct $\MCD(F)$ in $O(n \log n)$ time such that we can dynamically maintain $\MCD(F)$, subject to retrievals in $F$, in $O(\log^2 n)$ amortized update time. 
\end{theorem}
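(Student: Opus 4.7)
The plan is to invoke the scapegoating meta-theorem cited in the paragraph above, so it suffices to (i) show that $\MCD(F)$ can be built in $O(n\log n)$ time, (ii) show that a retrieval reduces to a deletion plus an insertion whose bare cost (ignoring balancing) is $O(\log^2 n)$, and (iii) verify that the scapegoat rebuild of a subtree of size $m$ runs in $O(m\log m)$ time so that Andersson's amortization yields the claimed bound.

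For construction, I would process the root as follows. Using a linear-time median selection on the $x$-coordinates of $V(F)$, choose $\ell(\nu)$; then scan once over $F$ to split it into $L(\nu)$, $R(\nu)$, and $\phi(\nu)$ in $O(|F(\nu)|)$ time. To build $\phi(\nu)$ as a sorted balanced binary tree, presort all of $V(F)$ by $y$-coordinate once up front in $O(n\log n)$ time; during the recursive partition each level inherits a sorted list, so assembling each $\phi(\nu)$ takes $O(|\phi(\nu)|)$ time and we can bulk-load it into a balanced tree in the same time. Recursing gives the standard $O(n\log n)$ bound on the total work since the tree has depth $O(\log n)$ and the work at every level is $O(n)$.

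For the update, a retrieval of $R_i$ consists of deleting $R_i$ from the tree and inserting the point region $p_i$. To delete, walk down from the root, at each node $\nu$ testing whether $R_i$ intersects $\ell(\nu)$ (constant time) and descending left or right otherwise; this locates the unique node with $R_i\in\phi(\nu)$ in $O(\log n)$ levels, after which a single deletion in $\phi(\nu)$ costs $O(\log n)$. The insertion of $p_i$ is analogous: walk down comparing $p_i$ to each $\ell(\nu)$ and stop when $p_i$ lies on the cut line (or reach a leaf and create a new one). Neither operation touches any $F(\nu)$ besides the ones on the search path, so the raw cost of an update without rebalancing is $O(\log^2 n)$.

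For balancing, apply Andersson's scapegoat scheme to $\MCD(F)$: whenever a node $\nu$ is detected as having depth too large relative to $|F(\nu)|$, discard the subtree rooted at $\nu$, collect its regions in $O(|F(\nu)|)$ time, and rebuild $\MCD(F(\nu))$ from scratch using the construction routine above in $O(|F(\nu)|\log|F(\nu)|)$ time. Plugging $f(n)=\log n$ into the meta-theorem quoted immediately before the statement yields the claimed $O(\log^2 n)$ amortized bound. The one subtlety worth checking is that rebuilding a subtree at $\nu$ really can be done in isolation: since $F(\nu)$ is precisely the disjoint union of $\phi(\mu)$ over descendants $\mu$ of $\nu$ (including $\nu$), a bottom-up traversal collects these in time proportional to $|F(\nu)|$, so the rebuild cost assumed by the scapegoat analysis is valid. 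This is the main (minor) obstacle, and once resolved the theorem follows directly.
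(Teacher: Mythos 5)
Your proposal is correct and follows essentially the same route as the paper: a retrieval is handled as a deletion from $\phi(\nu)$ plus a recursive insertion of the point region, and balance is restored via Andersson's scapegoating, whose meta-theorem with construction time $O(n\log n)$ (i.e.\ $f(n)=\log n$) gives the $O(\log^2 n)$ amortized bound, with your median-split/presort construction filling in details the paper leaves implicit. The only quibble is that the raw, rebalancing-free cost of a retrieval is $O(\log n)$ rather than the $O(\log^2 n)$ you state, but this overcount does not affect the claimed amortized bound.
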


\subsection{ \texorpdfstring{ Augmenting $\MCD(F)$}{Augmenting MCD(F)}}

We augment $\MCD(F)$ by adding several data structures to each node. 
First and foremost, we note that all regions in $\phi(\nu)$ intersect the same vertical line $\ell(\nu)$. Therefore, we may immediately store these regions in a Partial Hull Tree (where regions are ordered along $\ell(\nu)$) to dynamically maintain $\OCH(\phi(\nu))$.
A PHT has $O(\log^2 n)$ update time and $O(n \log n)$ construction time, so we may maintain this data structure by adding $O(\log^2 n)$ amortised update time to each update. 
We then add  \emph{concatenable queues}:

\begin{definition}
    Let $\nu \in \MCD(F)$ have children $x$ and $y$. We define $\mathbb{E}(\nu)$ as a balanced tree over the edges of $\OCH(F(\nu))$.  A \emph{bridge} of $\nu$ is any edge in $\mathbb{E}(\nu)$ that is not in $\mathbb{E}(x)$ or $\mathbb{E}(y)$. 
We define a \emph{concatenable queue} $\mathbb{E}^*(\nu)$ as a balanced tree over all edges in $\mathbb{E}(\nu)$ that are not in $\mathbb{E}(\gamma)$ for some ancestor $\gamma$ of $\nu$. 
\end{definition}

\begin{definition} \label{def:subchains}
    Let $\nu \in \MCD(F)$ have children $x$ and $y$ and let $S$ be the vertical slab of width $2k$ around $\ell(\nu)$ (if $F$ are overlapping unit disks, we use width $2$ instead). We define $C_x$, $C_y$ and $C^*_\nu$ as the contiguous subchains of $\mathbb{E}(x)$, $\mathbb{E}(y)$ and $\OCH(\phi(\nu))$ that intersect $S$. 
\end{definition}

\noindent
We store for each node $\nu$ with children $x$ and $y$:
\begin{itemize}
    \item the concatenable queue $\mathbb{E}^*(\nu)$, and
    \item the chains $C_x$ and $C_y$ as a balanced binary tree.
\end{itemize}

\noindent
This data structure uses $O(n \log n)$ space since any bridge in $\MCD(F)$ can be stored $O(\log n)$ times across $C_x$ for $x \in \MCD(F)$. Next we show that we can dynamically maintain it.

\begin{lemma}
    \label{lemm:bottomup}
    Let $\nu \in \MCD(F)$ have children $x$ and $y$. 
    Given $\mathbb{E}(x)$, $\mathbb{E}(y)$ and $\OCH(\phi(\nu))$, we may compute $(\mathbb{E}(\nu), \mathbb{E}^*(x), \mathbb{E}^*(y), C_x, C_y)$ in $O(\log n + k)$ time.
\end{lemma}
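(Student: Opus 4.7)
The plan is to construct $\mathbb{E}(\nu) = \OCH(F(\nu))$ as a three-way convex-hull merge of $\mathbb{E}(x) = \OCH(L(\nu))$, $\OCH(\phi(\nu))$, and $\mathbb{E}(y) = \OCH(R(\nu))$, exploiting that $\OCH(\phi(\nu))$ lies entirely inside the slab $S$ of width $2k$ centred on $\ell(\nu)$ and therefore has only $O(k)$ edges by Lemma~\ref{lemm:hull_complexity}.

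First I would extract the in-slab portions of $\mathbb{E}(x)$ and $\mathbb{E}(y)$: splitting the balanced trees at the abscissae $x_\ell - k$ and $x_\ell + k$ yields the subchains $C_x, C_y$ together with the out-of-slab remainders $\hat L_x, \hat R_y$ in $O(\log n)$ time, where $|C_x|, |C_y| = O(k)$ by applying Lemma~\ref{lemm:hull_complexity} to the regions contributing vertices to $\mathbb{E}(x), \mathbb{E}(y)$ inside $S$. Next, in $O(k)$ time, I would compute the upper convex hull of the three $O(k)$-sized chains $C_x, \OCH(\phi(\nu)), C_y$ by a standard left-to-right sweep, producing a middle chain $M$ together with the subsets of $C_x, C_y, \OCH(\phi(\nu))$ that are killed inside the slab. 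Finally, I would splice $M$ onto $\hat L_x$ and $\hat R_y$ via a constant number of balanced-tree joins.

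The delicate step is handling the transition between $M$ and the out-of-slab remainders $\hat L_x, \hat R_y$: the merge with $\OCH(\phi(\nu))$ can in general kill edges of $\mathbb{E}(x)$ arbitrarily far to the left of $S$ (for instance when a tall region of $\phi(\nu)$ overshadows a tall region of $L(\nu)$ whose centre is well to the left of $\ell(\nu)$). To deal with this I would compute the upper tangent between the leftmost vertex of $M$ and $\hat L_x$ by a binary search on the balanced tree $\hat L_x$ in $O(\log n)$ time; if the tangent is not supported at the current cut, I would split $\hat L_x$ (and if necessary $M$) at the new tangent point and fold the cut-off prefix of $\hat L_x$ into the tree returned as $\mathbb{E}^*(x)$. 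The right side is symmetric, and $C_x, C_y$ are read off directly from the initial splits. Summing the costs gives $O(\log n)$ for the constantly-many tree splits, joins, and tangent searches, plus $O(k)$ for the in-slab three-way merge, matching the claimed $O(\log n + k)$ bound.
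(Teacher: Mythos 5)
Your overall skeleton matches the paper's proof: extract the $O(k)$-edge in-slab chains ($C_x$, $C_y$ and the relevant part of $\OCH(\phi(\nu))$, with sizes bounded via \cref{lemm:hull_complexity}), merge them brute-force in $O(k)$ time, attach the out-of-slab remainders with logarithmic-time tangent computations, and finish with $O(1)$ splits and joins, keeping $C_x, C_y$ for later use in the top-down step. The difference, and the gap, is in the splice step.

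The attachment of $M$ to $\hat L_x$ (and symmetrically $\hat R_y$) requires the \emph{common} upper tangent of two $x$-separated convex chains, and this bridge need not be supported at the leftmost vertex of $M$: if $\hat L_x$ contains a very high vertex far to the left while the left part of $M$ is low and only rises near $\ell(\nu)$, the bridge lands well in the interior of $M$. A single binary search for the tangent through the leftmost vertex of $M$ therefore does not return the bridge, and your fallback (``split $\hat L_x$, and if necessary $M$, at the new tangent point'') is not a specified algorithm -- an alternating point-tangent iteration is not obviously $O(1)$ rounds, so neither correctness nor the $O(\log n)$ bound is established. Moreover, your two one-sided splices miss the case in which the final hull contains a single edge from $\hat L_x$ to $\hat R_y$ passing above the entire slab, so that \emph{no} vertex of $M$ survives; stitching $M$ in via two point-anchored tangents then produces a chain that is not convex, i.e.\ not $\mathbb{E}(\nu)$. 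The paper avoids both issues by treating $\mathbb{E}(x) - C_x$, the merged in-slab hull $C^*$, and $\mathbb{E}(y) - C_y$ as three $x$-separated convex chains and running the standard bridge-finding algorithm for $x$-separated hulls pairwise, obtaining the at most two bridges in $O(\log n)$ time; this computes the true common tangents wherever they touch $C^*$ and automatically covers the degenerate case where the only bridge joins the left and right remainders directly. Replacing your tangent-from-a-point searches by that pairwise bridge computation repairs the argument and yields the claimed $O(\log n + k)$ bound.
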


\begin{proof}
    We consider $C_x, C_y$ and $C_\nu^*$ from Definition~\ref{def:subchains}. 
    By \cref{lemm:hull_complexity} (or, Lemma~\ref{lemm:hull_complexity_overlap}), $C_x, C_y$ and $C_\nu$ have $O(k)$ edges, which we obtain in $O(k + \log n)$ time.      
We store a copy of $C_x$ and $C_y$ as a balanced binary tree. 
    We then apply merge sort and get $C^* = \OCH(C_x \cup C_y \cup C_\nu^*)$ as a balanced binary tree in $O(k)$ time.

    The chains $\mathbb{E}(x) - C_x$, $C^*$, and $\mathbb{E}(y) - C_y$ are three $x$-separated chains with at most $n$ vertices each.
    Thus, we may pairwise apply the bridge-finding algorithm for $x$-separated convex hulls to find the at most two bridges between these three hulls. 
    Given the two bridges, we split $\mathbb{E}(x)$ and $\mathbb{E}(y)$ at the respective bridges into two trees to obtain $\mathbb{E}^*(x)$ and $\mathbb{E}^*(y)$. 
    We then join the remaining $O(1)$ trees in $O(\log n)$ time to get $\mathbb{E}(\nu)$.     
\end{proof}

\begin{lemma}
    \label{lemm:top_down}
    Let $\nu \in \MCD(F)$ have children $x$ and $y$. Given $(\mathbb{E}(\nu), \mathbb{E}^*(x), \mathbb{E}^*(y), C_x, C_y)$, we may compute $\mathbb{E}(x)$ and $\mathbb{E}(y)$ in $O(\log n)$ time. 
\end{lemma}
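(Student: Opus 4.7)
The plan is to invert the bottom-up construction of Lemma~\ref{lemm:bottomup}. In that lemma, $\mathbb{E}(\nu)$ is assembled as the concatenation of three pieces along two bridges: a prefix $L$ coming from $\mathbb{E}(x)$ up to the \emph{left bridge} $b_L$ of $\nu$, a middle portion $M$ extracted from the merged chain $C^*$, and a suffix $R$ coming from $\mathbb{E}(y)$ starting at the \emph{right bridge} $b_R$ of $\nu$. Symmetrically, $\mathbb{E}^*(x)$ is precisely the complementary suffix of $\mathbb{E}(x)$ hanging off the left endpoint of $b_L$, and $\mathbb{E}^*(y)$ is the complementary prefix of $\mathbb{E}(y)$ hanging off the right endpoint of $b_R$. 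This decomposition is exactly what we need to undo.

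The first preparatory step is to augment each internal node $\nu$ with explicit pointers to its at most two bridges $b_L$ and $b_R$ inside the tree $\mathbb{E}(\nu)$. These bridges are produced as a byproduct of the bridge-finding step in Lemma~\ref{lemm:bottomup}, so maintaining the pointers adds no asymptotic overhead to any bottom-up pass. Given these pointers, the top-down algorithm is: (i) split $\mathbb{E}(\nu)$ at $b_L$ and $b_R$, obtaining balanced trees $L$, $M$, $R$; (ii) concatenate $L$ with $\mathbb{E}^*(x)$ at their common vertex (the left endpoint of $b_L$) to produce $\mathbb{E}(x)$; (iii) concatenate $\mathbb{E}^*(y)$ with $R$ at their common vertex (the right endpoint of $b_R$) to produce $\mathbb{E}(y)$. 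The middle tree $M$ is simply discarded, since we do not need $C^*$ for this lemma. Each of the $O(1)$ split/join operations on concatenable queues of size at most $n$ costs $O(\log n)$, for a total of $O(\log n)$ time.

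The main subtlety I expect is keeping the bridge pointers consistent. Since the only event that can change the identity of $b_L$ or $b_R$ is an invocation of Lemma~\ref{lemm:bottomup} on $\nu$ itself (every other operation leaves $\mathbb{E}(\nu)$ untouched), it suffices to rewrite $\nu$'s bridge pointers at the end of every such invocation, at no asymptotic cost. A secondary edge case is when $C^*$ dominates one side of $\mathbb{E}(\nu)$ entirely, so that one of $b_L, b_R$ is degenerate or absent: then the corresponding prefix $L$ or suffix $R$ is empty and the associated concatenation step is trivial, returning $\mathbb{E}^*(x)$ or $\mathbb{E}^*(y)$ directly. In all cases the reconstruction performs a constant number of split/join operations on balanced trees of size $O(n)$, yielding the claimed $O(\log n)$ bound.
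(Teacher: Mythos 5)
There is a genuine gap: your reconstruction never uses $C_x$ and $C_y$, and it relies on the claim that $\mathbb{E}^*(x)$ is ``precisely the complementary suffix of $\mathbb{E}(x)$ hanging off the left endpoint of $b_L$''. That claim is the standard Overmars--van Leeuwen picture, but it is false for this data structure. At a node $\nu$ the merge is not a single-bridge merge of two child hulls: inside the width-$2k$ slab $S$ around $\ell(\nu)$, the hull $\mathbb{E}(\nu)$ follows a portion of $C^* = \OCH(C_x \cup C_y \cup C_\nu^*)$, which interleaves edges originating from $\mathbb{E}(x)$, from $\mathbb{E}(y)$, and from $\OCH(\phi(\nu))$ (e.g.\ a tall disk of $\phi(\nu)$ can occlude some disks of $F(x)$ near $\ell(\nu)$ while other disks of $F(x)$ inside the slab still reach $\OCH(F(\nu))$ to the right of it). Consequently the part of $\mathbb{E}(x)$ hidden below $\mathbb{E}(\nu)$ is \emph{not} a contiguous suffix: within $C_x$, exposed and hidden edges alternate, and the exposed ones sit inside the middle portion $M$ of $\mathbb{E}(\nu)$ interleaved with foreign edges, so they can be extracted neither from $M$ by $O(1)$ splits nor from $\mathbb{E}^*(x)$, which by construction (Lemma~\ref{lemm:bottomup}) is only the hidden contiguous piece of $\mathbb{E}(x) - C_x$ below the bridge. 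Your two-piece join $L \cup \mathbb{E}^*(x)$ therefore returns $\mathbb{E}(x)$ with the entire slab-adjacent chain $C_x$ (up to $\Theta(k)$ edges) missing.

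This is exactly why Definition~\ref{def:subchains} stores explicit copies of $C_x$ and $C_y$ at $\nu$ and why the lemma takes them as input. The paper's proof splits $\mathbb{E}(\nu)$ by the slab $S$ (not by two bridge pointers), keeps the part $L$ strictly preceding $S$, trims it at the bridge to obtain the sub-piece $L'$ belonging to $\mathbb{E}(x)$, and then performs a \emph{three-way} join $L' \cup \mathbb{E}^*(x) \cup C_x$ (and symmetrically for $y$), which is where the stored $C_x$ is indispensable. If you instead redefine $\mathbb{E}^*(x)$ to be the full suffix of $\mathbb{E}(x)$ after the bridge, your top-down step becomes trivially correct, but it no longer matches how $\mathbb{E}^*(x)$ is produced in Lemma~\ref{lemm:bottomup} and used elsewhere (the auxiliary trees of non-dividing edges and spanning/hit subchains are attached to $\mathbb{E}^*(\nu)$), so you would have to redo that part of the structure; as written, the proposal does not prove the stated lemma.
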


\begin{proof}
     Let $S$ denote the vertical slab    around $\ell(\nu)$ of Definition~\ref{def:subchains}. 
     We split $\mathbb{E}(\nu)$ into three trees, containing edges that strictly precede $S$, intersect $S$, and strictly succeed $S$ in $O(\log n)$ time. 
     Let $L$ be the edges strictly preceding $S$.
     We split $L$ on $e(\nu)$ in $O(\log n)$ time to obtain the subset $L'$ of $L$ that is in $\mathbb{E}(\nu)$.
    We then combine $L'$, $\mathbb{E}^*(x)$ and $C_x$ in logarithmic time through the tree join operation to get $\mathbb{E}(x)$. The tree $\mathbb{E}(y)$ is obtained analogously. 
\end{proof}

\noindent
These lemmas and observations allow us to define and maintain concatenable queues.

\begin{lemma}
    \label{lemm:dynamic_c_queue}
    We can dynamically maintain our augmented $\MCD(F)$ subject to retrievals in $F$, using $O(n \log n)$ space and $O( k \log n + \log^2 n)$ amortized time.
\end{lemma}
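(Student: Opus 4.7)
The plan is to follow the Overmars--van Leeuwen template for dynamic convex hulls, lifted to the median-cut tree: a single retrieval affects only one root-to-leaf path of $\MCD(F)$, and along that path the auxiliary data can be rebuilt using the top-down decomposition of Lemma~\ref{lemm:top_down} and the bottom-up recurrence of Lemma~\ref{lemm:bottomup}.

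First I would describe the update procedure. Starting at the root, descend to locate both the node $\nu$ with $R_i \in \phi(\nu)$ and the node $\mu$ into which $p_i$ gets inserted; together these lie on a common root-to-leaf path of length $O(\log n)$. On the downward pass, apply Lemma~\ref{lemm:top_down} at each visited node to expand the stored $\mathbb{E}^*(\cdot), C_x, C_y$ into the children's full $\mathbb{E}(\cdot)$; this costs $O(\log n)$ per node and $O(\log^2 n)$ in total. Then update the two PHTs at $\phi(\nu)$ and $\phi(\mu)$ in $O(\log^2 n)$ each. Finally, perform an upward pass, invoking Lemma~\ref{lemm:bottomup} at each of the $O(\log n)$ visited nodes to recompute $(\mathbb{E}(\cdot), \mathbb{E}^*(x), \mathbb{E}^*(y), C_x, C_y)$ in $O(\log n + k)$ per node, contributing $O(\log^2 n + k\log n)$.

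Summing with the $O(\log^2 n)$ amortized cost of scapegoating to keep $\MCD(F)$ balanced yields the claimed $O(k\log n + \log^2 n)$ amortized time per retrieval. For space, the remark following Definition~\ref{def:subchains} already observes that each bridge is referenced in the $C_x, C_y$ trees of at most $O(\log n)$ ancestors; combined with the fact that the PHTs on the $\phi(\nu)$ partition $F$, this gives the $O(n\log n)$ total space bound.

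The main obstacle I anticipate is the interaction with scapegoat rebuilds: when a subtree of $\MCD(F)$ of size $m$ is rebuilt from scratch, all auxiliary structures inside it (PHTs of $\phi$, concatenable queues $\mathbb{E}^*$, and slab chains $C_x, C_y$) must also be rebuilt. I would verify that this can be done in $O(m(\log m + k))$ time by a postorder sweep applying Lemma~\ref{lemm:bottomup} at each of the $m$ new internal nodes, after which the standard scapegoat amortization (charging a rebuild of size $m$ against the $\Omega(m)$ retrievals that unbalanced the subtree) absorbs the extra work into the claimed amortized bound.
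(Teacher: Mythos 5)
Your proposal is correct and follows essentially the same route as the paper: a top-down pass using Lemma~\ref{lemm:top_down} and a bottom-up pass using Lemma~\ref{lemm:bottomup} along the affected root-to-leaf path, PHT updates for the affected $\phi(\cdot)$ lists, and scapegoat rebalancing whose rebuilds are paid for by the $O(m(k+\log m))$ reconstruction implied by Lemma~\ref{lemm:bottomup}, with the same $O(n\log n)$ space accounting via bridges appearing in $O(\log n)$ chains $C_x$. The "obstacle" you flag (rebuilding the auxiliary structures inside a rebuilt scapegoat subtree) is exactly how the paper's proof handles it, so no gap remains.
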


\begin{proof}
    Our update algorithm is similar to the algorithm that maintains a PHT, which first does top-down traversal followed by a bottom-up construction. We wish to maintain the invariant that $\MCD(F)$ is a balanced tree and that each node $\nu \in \MCD(F)$  with children $x$ and $y$ stores $\mathbb{E}^*(\nu)$, $\OCH(\phi(\nu))$ and $C_x$ and $C_y$. 
    Consider retrieving a region $R_i \in F$ which gives the point $p_i$. Let $\nu^*$ denote the unique node in $\MCD(F)$ where $R_i \in \phi(\nu^*)$. 
    In the special case where $p_i$ lies on $\ell(\nu^*)$, we update the PHT storing $\OCH(\psi(\nu))$ in $O(\log^2 n)$ time.
    Otherwise, we insert a new leaf $\lambda$ in the subtree rooted at $\nu^*$.

    \cref{lemm:bottomup} implies an $O( n \cdot ( k + \log n) )$ time construction algorithm for our augmented $\MCD(F)$.
    Thus, after inserting $\lambda$, we may use the scapegoating  amortised rebuilding technique to maintain the balance of $\MCD(F)$ in $O(k \log n + \log^2 n)$ amortised time. 
    Suppose our amortized rebuilding scheme rebuilds some subtree $\mu^*$ that contains $\lambda$ (if no subtrees were rebuilt, $\mu^*$ is the parent of $\lambda$).  Let $\pi$ denote the path in $\MCD(F)$ from the root to $\mu^*$. What remains is to restore $\MCD(F)$ along $\pi$. 

    We invoke \cref{lemm:top_down} over the top-down traversal of $\pi$.
    This way, we obtain for each $\nu \in \pi$,  with children $(\mu, \gamma)$, the trees $\mathbb{E}(\mu)$ and $\mathbb{E}(\gamma)$  before the update. If $\lambda$ is to be in the subtree rooted at $\mu$, we keep $\mathbb{E}(\gamma)$  in memory and invoke \cref{lemm:bottomup} on $\mu$ next.    
    We do this until we arrive at $\mu^*$, which has one new child and one unchanged child node, and we start a bottom-up traversal. 
    Let $\nu \in \pi$ be the next node in our bottom-up traversal and let $\nu$ have children $x$ and $y$. 
    Given the up-to-date trees $\mathbb{E}(x)$, $\mathbb{E}(y)$ and $\CH(\phi(\nu))$ we invoke \cref{lemm:bottomup} to compute $(\mathbb{E}(\nu), \mathbb{E}^*(x), \mathbb{E}^*(y))$ in $O(k + \log n)$ time. Since $\pi$ as length at most $O(\log n)$, the remaining update takes $O(k \log n + \log^2 n)$ time.
\end{proof}

\subsection{Concluding the argument.}
From here-on, we use our data structure in an identical manner to \cref{sec:kgons}. Let $\nu \in \MCD(F)$. We augment the concatenable queues of a node $\nu$ by storing four additional balanced trees:

\begin{enumerate}
    \item a balanced binary tree storing all edges $(s,t) \in \mathbb{E}^*(\nu)$ that are non-canonical in $F$,
    \item a balanced tree storing all edges $(s, t) \in \mathbb{E}^*(\nu)$ that are canonical and non-dividing in $F$, 
       \item a balanced tree storing  all contiguous subchains of $\mathbb{E}^*(\nu)$ that are spanning in $F$, and
    \item a balanced tree storing  all contiguous subchains of $\mathbb{E}^*(\nu)$ that are hit in $F$. 
\end{enumerate}

\begin{lemma}
    \label{lemm:candidate_disk}
     For any $\nu \in T$, there are $O(k)$ subchains of $\CH(F(\nu))$ that contain a bridge of $\nu$ and that are candidate chains.  Given $\mathbb{E}(\nu)$ and the Booleans in our data structure, we can find these in $O(k)$ time.
\end{lemma}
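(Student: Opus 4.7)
The plan is to use the fact that the bridges of $\nu$ all cluster within $O(1)$ contiguous pieces of $\mathbb{E}(\nu)$ and have total size $O(k)$, and then to sweep those bridges in order.

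First, I would enumerate the bridges of $\nu$. By \cref{lemm:hull_complexity}, the slab chain $C^*$ produced by \cref{lemm:bottomup} has only $O(k)$ edges, and every bridge of $\nu$ is either (i) one of those edges of $C^*$ that originates in $C_\nu^*$ or is created when merging $C_x$, $C_y$, $C_\nu^*$, or (ii) one of the at most two tangent bridges joining $\mathbb{E}(x) - C_x$, $C^*$, and $\mathbb{E}(y) - C_y$. This gives a total of $O(k)$ bridges, all lying within $O(1)$ contiguous subchains of $\mathbb{E}(\nu)$, and they can be read off from the augmenting structure at $\nu$ in $O(k)$ time.

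Next, the $O(k)$ bound on candidate chains follows by the same argument used in \cref{lemm:candidate_chain_find}: any fixed edge of $\mathbb{E}(\nu)$ lies in at most two candidate chains, hence $O(k)$ bridges are contained in at most $O(k)$ distinct candidate chains.

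To construct the list of such candidate chains in $O(k)$ time, I would walk through the bridges in their order along $\mathbb{E}(\nu)$, maintaining a currently open candidate chain. For each bridge $(s,t)$, the stored dividing Boolean is read in $O(1)$ time. Because the disks of $F$ are pairwise disjoint, every hull vertex lies on exactly one disk (or is a point region), so by \cref{obs:canonical} I recover the unique regions $R(s)$ and $R(t)$ in $O(1)$ time through the leaf pointers attached to $(s,t)$. If the next bridge is dividing and its interior vertex lies on the region of the currently open chain, I extend that chain; otherwise, I close it and start a new one. The $3$-vertex candidate chains in which a bridge is the first or last edge are caught by one additional $O(1)$ lookup at the adjacent edge of $\mathbb{E}(\nu)$.

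The main obstacle will be handling candidate chains that spill out of the bridge region into $\mathbb{E}(x) - C_x$ or $\mathbb{E}(y) - C_y$, where region information is not directly maintained at $\nu$. Only the $O(1)$ extreme bridges on each of the $O(1)$ contiguous bridge pieces can participate in such an extending chain, so I would resolve each case with an $O(1)$ neighbour lookup in $\mathbb{E}(\nu)$ at the bridge's endpoint, invoking \cref{obs:canonical} once more on the adjacent edge. This contributes only $O(1)$ extra work, yielding an $O(k)$-time algorithm as required.
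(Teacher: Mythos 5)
Your bridge count is fine (it matches the paper's use of \cref{lemm:hull_complexity}), but the algorithmic part has a real gap: you never establish the fact on which the whole lemma rests in the disk setting, namely that for pairwise disjoint disks \emph{every candidate chain has exactly two edges}. The reason is that no hull edge can have both endpoints on the same disk (such an edge would be a chord of the disk lying on the hull, impossible since the disk's arc bulges above any chord), so a candidate chain is always a bridge together with just its predecessor or successor edge. This is exactly what the paper's proof uses: with $O(k)$ bridges and predecessor/successor pointers in $\mathbb{E}(\nu)$, each candidate chain is found by one $O(1)$ lookup per bridge, and nothing more is needed.

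Without that fact, your sweep is not justified. Your rule ``if the next bridge is dividing and its interior vertex lies on the region of the currently open chain, extend the chain'' ignores the non-bridge edges of $\mathbb{E}(\nu)$ that may lie between two consecutive bridges inside the slab portion (bridges there interleave with surviving edges of $C_x$ and $C_y$, so they are not contiguous); a chain may only be extended across such edges if they are themselves candidate edges of the same region, which you never check, so your sweep can wrongly glue two separate chains together. Likewise, your treatment of chains that ``spill'' into $\mathbb{E}(x) - C_x$ or $\mathbb{E}(y) - C_y$ with a single $O(1)$ neighbour lookup only tells you whether the chain continues, not where it ends; if chains could be long (which only the missing two-edge observation rules out), you could not report their endpoints $q,r$ — needed later to test spanning — within the stated $O(k)$ budget. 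Adding the observation that disjoint disks admit no same-region hull edge repairs the argument and in fact collapses your sweep to the paper's simpler per-bridge lookup.
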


\begin{proof}
 Since regions in $F$ are pairwise disjoint disks, or 
 unit-size disks, any candidate chain has at most three edges.  
    Given the set of bridges of $\nu$ and $\mathbb{E}(\nu)$ we find for each bridge the preceding and succeeding edge in constant additional time (we do this by maintaining for each edge in $\mathbb{E}(\nu)$ pointers to its predecessor and successor). Recall that Lemma~\ref{lemm:hull_complexity} and \ref{lemm:hull_complexity_overlap} implies that there are only $O(k)$ of these bridges. 
\end{proof}

\noindent
What follows are three corollaries that we can maintain our augmentations to $\MCD(F)$ in $O(k \log^3 n)$ amortised time.

\begin{corollary}   \label{corr:dividing_disk}
         We can dynamically maintain in $O(k \log n + \log^2 n)$ amortised time for each node $\nu \in T$ a balanced binary tree on all edges $(s, t) \in \mathbb{E}^*(\nu)$ that are non-dividing in $F$.
\end{corollary}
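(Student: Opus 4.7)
The plan is to maintain an auxiliary balanced binary tree $\Delta^*(\nu)$ at each node $\nu \in \MCD(F)$, storing exactly the non-dividing edges of $\mathbb{E}^*(\nu)$ in the same cyclic order along $\mathbb{E}^*(\nu)$. Because all regions in $F$ are pairwise disjoint disks, every vertex of $\OCH(F)$ belongs to a unique region, so an edge $(s,t)$ with $s \in V(R_a)$, $t \in V(R_b)$ is canonical, and its dividing status reduces to a single question: are $R_a$ and $R_b$ strictly $x$-separated? By storing the minimum and maximum $x$-coordinate of each region, this test runs in $O(1)$ time.

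The data structure maintenance mirrors the strategy already used in \cref{lemm:dynamic_c_queue}: whenever the concatenable queue $\mathbb{E}^*(\nu)$ is split or joined during an update, the same operation is performed on $\Delta^*(\nu)$. Concretely, we would piggyback on the two core subroutines: the bottom-up reconstruction of \cref{lemm:bottomup} and the top-down decomposition of \cref{lemm:top_down}. In \cref{lemm:top_down}, $\mathbb{E}(\nu)$ is split at $e(\nu)$ and the $O(1)$ rejoins are performed in $O(\log n)$; for $\Delta^*(\nu)$, we locate the predecessor and successor of $e(\nu)$ among the non-dividing edges in $O(\log n)$ time, split $\Delta^*(\nu)$ there, and rejoin the pieces with the non-dividing trees inherited from the children's auxiliary structures. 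Each induced auxiliary operation costs $O(\log n)$.

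The more delicate step is the bottom-up merge in \cref{lemm:bottomup}, which rebuilds the slab chain $C^*$ of size $O(k)$. Here we process the $O(k)$ edges of $C^*$ in the order they appear, apply the $O(1)$ dividing test to each, and build a balanced binary tree on the non-dividing subset in $O(k)$ time. We then mirror the bridge-induced splits of $\mathbb{E}(x)$ and $\mathbb{E}(y)$ on their auxiliary trees (splitting at the predecessors/successors of the bridges), and perform the corresponding $O(1)$ joins on the auxiliary structures, all at $O(\log n)$ per split/join. The total per-node cost for maintaining $\Delta^*(\nu)$ is therefore $O(k + \log n)$, matching the cost of maintaining $\mathbb{E}^*(\nu)$ itself.

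Summed over the $O(\log n)$ nodes of the update path, and combined with the scapegoat rebuilding amortisation already set up in \cref{lemm:dynamic_c_queue}, the total amortised cost to keep every $\Delta^*(\nu)$ consistent is $O(k \log n + \log^2 n)$, as claimed. The main (mild) obstacle is verifying that the split/join positions in $\Delta^*(\nu)$ can always be located in $O(\log n)$ from the split positions in $\mathbb{E}^*(\nu)$; this is handled by cross-pointers between each edge stored in $\mathbb{E}^*(\nu)$ and its closest non-dividing neighbours in $\Delta^*(\nu)$, maintained without asymptotic overhead during each split/join.
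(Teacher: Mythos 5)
Your proposal is correct and follows essentially the same route as the paper: maintain the augmented $\MCD(F)$ as in \cref{lemm:dynamic_c_queue}, test each newly created edge for the dividing property in $O(1)$ time via stored $x$-extremes (canonicity being automatic for pairwise disjoint disks), and mirror the split/join operations of the concatenable queues on the auxiliary tree of non-dividing edges. Your write-up simply spells out the piggybacking details (including the $O(k)$ slab-chain rebuild) that the paper leaves implicit by referring to the earlier corollaries.
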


\begin{proof}
   We can maintain $\MCD(F)$ in amortised $O(k \log n + \log^2 n)$ time and we can test whether any newly created bridge is dividing in constant time.     
\end{proof}

\begin{lemma}
    \label{lemm:disk_chain_test}
    We can test if a candidate chain $C$ is spanning in $F$ in constant time. 
\end{lemma}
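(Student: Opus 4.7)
The plan is to exploit the constant descriptive complexity of three disks. By \cref{lemm:candidate_disk} (applied in the pairwise-disjoint-disk setting), every candidate chain has exactly two edges and three vertices $(q, s, r)$ with $q \in V(R_a)$, $s \in V(R_b)$, $r \in V(R_c)$ for distinct regions $R_a, R_b, R_c \in F$. By \cref{def:spanning}, the chain $C$ is spanning in $F$ iff (i) both edges of $C$ are dividing in $F$, and (ii) $R_b$ intersects the inside of $\OCH(\{R_a, R_c\})$. Condition (i) is stored as a Boolean on each edge by \cref{corr:dividing_disk} and thus can be read in $O(1)$ time. The third clause of \cref{def:spanning} (all vertices between $r$ and $s$ lie in $V(R_b)$) is vacuous when $r$ and $s$ coincide, as happens here.

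For condition (ii), I would use the explicit description of the convex hull of two disks. Since the two edges of $C$ are dividing, $R_a$ lies strictly to the left of $R_c$. The full convex hull of the two disks $R_a$ and $R_c$, given by their centers and radii, is bounded by two disk arcs together with the upper and lower external common tangent segments; the supporting points and tangent lines are computable in $O(1)$ arithmetic operations. Because $R_b$ already sits strictly in the open vertical slab between $R_a$ and $R_c$, and the disks are pairwise disjoint (so $R_b$ meets neither $R_a$ nor $R_c$), testing whether $R_b$ intersects the interior of this hull reduces to checking whether the center $c_b$ of $R_b$ lies at signed distance at most $\rho_b$ below the upper tangent line and at most $\rho_b$ above the lower tangent line, where $\rho_b$ is the radius of $R_b$. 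Each of these four signed-distance checks is $O(1)$, so the whole test is $O(1)$.

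The only subtlety — and the main thing to be careful about — is matching the reading of ``inside of $\OCH(\{R_a, R_c\})$'' in \cref{def:spanning} with the two-dimensional interior of the full convex hull of the two disks, and then confirming that for pairwise disjoint disks the intersection region of interest is exactly the open ``cap'' between the two tangent segments. Both points are immediate from the definitions, so the constant-time test follows directly.
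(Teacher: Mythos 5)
Your proposal is correct and matches the paper's proof, which likewise just identifies the three regions $R_a, R_b, R_c$ in $O(1)$ time and tests whether the middle disk intersects the convex hull of the two outer disks in constant time; you merely spell out the constant-time tangent-line primitive that the paper leaves implicit. (Only be careful to use strict inequalities in the distance tests, since spanning requires $R_b$ to meet the \emph{open} interior of $\OCH(\{R_a,R_c\})$.)
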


\begin{proof}
        The chain $C$ corresponds to three regions $R_a, R_i, R_b \in F$ which we obtain in $O(1)$ time. We then test whether $R_i$ intersects the convex hull of $(R_a, R_b)$ in constant time.    
\end{proof}

\begin{corollary}
    \label{corr:subchains_disk}
        We can maintain for each $\nu \in \MCD(F)$ all subchains chains of $\mathbb{E}^*(\nu)$ that are spanning in $F$ in a concatenable queue in $O( k \log n)$ amortized time.
\end{corollary}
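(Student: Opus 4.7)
The plan is to mirror the proof of Corollary~\ref{cor:chain} (polygon version) but exploit the structural simplicity of disks, namely that every candidate chain has exactly two edges, to keep all per-bridge work constant. The data structure we wish to attach to each $\mathbb{E}^*(\nu)$ is a concatenable queue whose elements are the spanning contiguous subchains of $\mathbb{E}^*(\nu)$, ordered as they appear along $\OCH(F(\nu))$. The scaffolding for maintaining such an auxiliary concatenable queue (top-down split of the parent tree at the bridge, bottom-up merge of the two children's trees glued by what crosses the bridge) is exactly the one used repeatedly in Section~\ref{sec:kgons}; we reuse it verbatim on $\MCD(F)$ instead of on a PHT.

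First, I would piggyback on Lemma~\ref{lemm:dynamic_c_queue}. A retrieval triggers at most $O(\log n)$ nodes on the update path of $\MCD(F)$, and each such node $\nu$ experiences $O(k)$ bridge changes in $\mathbb{E}(\nu)$ (by Lemma~\ref{lemm:hull_complexity} and Lemma~\ref{lemm:bottomup}); scapegoat rebuilds are charged amortically as in Lemma~\ref{lemm:dynamic_c_queue}. Next, for each created (or destroyed) bridge $(s,t)$ I would call Lemma~\ref{lemm:candidate_disk} to enumerate, in $O(1)$ time, the at most two candidate chains of $\mathbb{E}(\nu)$ that contain $(s,t)$; this is where the disk-specific fact ``a candidate chain has length two'' is crucial. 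Each such candidate chain is tested for being spanning in $O(1)$ time via Lemma~\ref{lemm:disk_chain_test}, and then inserted into or removed from the tree under construction.

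The concatenable-queue layer is then maintained in the standard way. For each $\nu$ on the update path with children $x,y$, during the top-down pass I split the parent's spanning-chain tree at the current bridge $e(\nu)$, pushing the split halves to the children together with their $\mathbb{E}^*(x), \mathbb{E}^*(y)$ updates; during the bottom-up pass, after recomputing $(\mathbb{E}(\nu), \mathbb{E}^*(x), \mathbb{E}^*(y))$ by Lemma~\ref{lemm:bottomup}, I identify the $O(1)$ spanning chains that are created by the new bridge(s), re-join the pieces with $\mathbb{E}^*(x), \mathbb{E}^*(y)$, and insert the newly spanning chains. Each split/join costs $O(\log n)$, and the total number of bridge changes charged to a retrieval is $O(k \log n)$ amortized; combined with the constant cost of Lemmas~\ref{lemm:candidate_disk} and~\ref{lemm:disk_chain_test} per affected bridge, the whole update fits inside the $O(k \log n)$ amortized budget claimed.

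The main obstacle is purely bookkeeping: a single bridge change at $\nu$ may simultaneously remove one candidate chain and create another one on the same edge (for instance, when a retrieved disk is replaced by a point that shifts which region is the ``middle'' of the chain), and the two associated records live in different concatenable queues along the ancestor path. I would handle this by recomputing, at each affected node, the full $O(1)$ list of candidate chains touching any changed bridge before any insertion, deleting stale chains, and only then inserting fresh ones; the scapegoat amortization of Lemma~\ref{lemm:dynamic_c_queue} transparently covers rebuilt subtrees because the auxiliary tree at a node can be rebuilt in time linear in the number of edges of $\mathbb{E}(\nu)$ by one bottom-up sweep.
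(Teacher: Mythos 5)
Your proposal matches the paper's proof in essence: identify the $O(\log n)$ affected nodes on the update path, use Lemma~\ref{lemm:candidate_disk} to find the $O(k)$ candidate chains touching changed bridges, test each for spanning in $O(1)$ time via Lemma~\ref{lemm:disk_chain_test}, and maintain the auxiliary trees with the same concatenable-queue split/join scheme as the earlier corollaries, with scapegoat rebuilds covered by the same amortization. The extra bookkeeping detail you add (per-bridge $O(1)$ candidate-chain lookup, deleting stale chains before inserting fresh ones) is just an unpacking of what the paper does, not a different route.
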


\begin{proof}
    For an update at a node $\mu$, only the $O(\log n)$ nodes on the root-to-leaf path of $\mu$ can gain or lose a contiguous subchain of $\mathbb{E}(\nu)$ that is spanning in $F$.
    For each such node $\nu$ with children $x, y$, we invoke \cref{lemm:candidate_disk} to find all $O(k)$ candidate chains that are contiguous subchains of $\CH(\nu)$ but not of $\CH(x)$ or $\CH(y)$.
    We use \cref{lemm:disk_chain_test} to check which of these chains are spanning in $F$.
    We then maintain the balanced binary tree associated to each concatenable queue $\mathbb{E}^*(\nu)$  in an identical manner as previous corollaries.
    This takes $O(k \log n)$ time per update. During scapegoating, we consider all $O(n)$ nodes $\nu$, so this takes amortized $O(k \log n)$ time too. 
\end{proof}

\begin{lemma} \label{lemm:occupied_disk}
     For any $\nu \in \MCD(F)$, for any $(s, t) \in \mathbb{E}(\nu)$, we can test whether $(s, t)$ is occupied in $F$ in $O(\log^2 n)$ time.
\end{lemma}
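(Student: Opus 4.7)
I follow the blueprint of \cref{lemm:occupied_fast}, substituting bit-indexed MCDs for bit-indexed PHTs. I will assume the preprocessing maintains, alongside $\MCD(F)$, an auxiliary $\MCD(F(i,j))$ for each $(i,j) \in [\log n] \times \{0,1\}$; by \cref{lemm:dynamic_c_queue} each such auxiliary is updated in polylog amortised time per retrieval, and the chain $\OCH(F(i,j))$ at its root is stored as a balanced binary tree supporting convex-region intersection queries against a constant-complexity convex region in $O(\log n)$ time via Chazelle's algorithm~\cite{chazelle1980detection}.

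Given $(s,t) \in \mathbb{E}(\nu)$ dividing in $F$, I read off the unique disks $R_a$ and $R_b$ carrying $s$ and $t$ in $O(1)$ via the pointers on the bridge, build $\band(R_a, R_b)$ in $O(1)$ as the convex hull of two disks, and cut it along a vertical line $\ell^{\ast}$ strictly between $R_a$ and $R_b$ into two constant-complexity convex regions $C_a \supset R_a$ and $C_b \supset R_b$ satisfying $R_a \cap C_b = \emptyset = R_b \cap C_a$. Since for a disk $R$ we have $\partial R \cap C \neq \emptyset$ iff $R \cap C \neq \emptyset$, the edge $(s,t)$ is occupied in $F$ iff some $R \in F - R_a$ meets $C_a$ or some $R \in F - R_b$ meets $C_b$; the exclusion on each side involves only one of $R_a, R_b$ because the other is disjoint from that side.

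For the $C_a$ half of the test, I iterate $i \in [\log n]$ and query whether $\OCH(F(i, \sigma_a[i]))$ intersects $C_a$, where $\sigma_a[i]$ flips $R_a$'s $i$-th ID bit; since $R_a \notin F(i, \sigma_a[i])$ for every $i$ and $\bigcup_i F(i, \sigma_a[i]) = F - R_a$, these $\log n$ queries collectively witness any qualifying disk. Symmetric queries on $\MCD(F(i, \sigma_b[i]))$ handle $C_b$. Each query costs $O(\log n)$, giving $O(\log^2 n)$ overall.

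The principal obstacle is arguing that each hull-region intersection query is an exact proxy for the existence of a disk in the queried family with a boundary point in the target convex region. The forward direction is routine: any qualifying disk is contained in its family's outer hull. The reverse direction is delicate because a sub-hull tangent between two disks outside $C_a$ could in principle pass through $C_a$; I plan to address this by exploiting that $(s,t)$ is an edge of $\OCH(F)$, which forces $C_a$'s upper boundary to coincide with the upper envelope of $F$ in $C_a$'s $x$-range and hence to upper-bound the queried sub-hull there. Combined with the placement of $\ell^{\ast}$ between $R_a$ and $R_b$, this pins any sub-hull point inside $C_a$ to either lie on the boundary of a disk in the queried family (directly certifying a qualifying vertex) or to force such a disk to itself reach into $C_a$. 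The degenerate cases $a = b$ and $R_a$ or $R_b$ being a retrieved point region shrink the band and are handled trivially by pairwise disjointness.
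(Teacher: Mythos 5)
Your reduction breaks at exactly the step you flag as ``delicate'': testing whether $\OCH(F(i,\sigma_a[i]))$ intersects $C_a$ is \emph{not} a valid proxy for the existence of a disk of $F(i,\sigma_a[i])$ meeting $C_a$, and no pinning argument can rescue it, because the queried families exclude $R_a$ --- the very disk whose boundary forms the upper-left boundary of $C_a$. Concretely, let $R_a, R_b$ be unit disks centred at $(0,0)$ and $(4,0)$, so $(s,t)$ is the tangent segment at height $1$, and let $R_c, R_d$ be unit disks centred at $(-4,-0.1)$ and $(8,0)$. All four disks are pairwise disjoint, $(s,t)$ is even an edge of the global hull $\OCH(F)$, and no disk other than $R_a,R_b$ meets $\band(R_a,R_b)$, so $(s,t)$ is \emph{not} occupied. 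Yet any subfamily containing $R_c$ and $R_b$ (or $R_c$ and $R_d$) but not $R_a$ --- which some $F(i,\sigma_a[i])$ will be --- has an upper tangent edge passing at height $\approx 0.95$ over $x\in[0,2]$, strictly inside $C_a$, while none of its disks touches $C_a$. Your test therefore reports ``occupied'' on a non-occupied edge. Note also that your justification sketch assumes $(s,t)$ lies on $\OCH(F)$, but the lemma is invoked for edges of $\mathbb{E}(\nu)$ at \emph{every} node $\nu \in \MCD(F)$ (via \cref{lemm:hit_disk} and \cref{corr:hit_disk}), where $(s,t)$ is only an edge of the sub-hull $\OCH(F(\nu))$; and, as the example shows, even on the global hull the premise ``$C_a$'s upper boundary upper-bounds the queried sub-hull'' fails once $R_a$ is removed from the query set.

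The paper's proof takes a different and sounder route that you should compare against: it does not introduce bit-indexed auxiliary structures at all, but extracts from the single $\MCD(F)$ the $O(\log n)$ convex chains attached to the two root-to-node paths of $R_a$ and $R_b$ (the children's hulls $L_a, L_b$ obtained via \cref{lemm:top_down}, the slab hulls $\Phi_a,\Phi_b$, and $H_a = \OCH(\phi(\mu)-R_a-R_b)$, $H_b$), argues that every vertex of $\OCH(F-R_a-R_b)$ appears on one of these chains, and then tests $\band(R_a,R_b)$ against each of the $O(\log n)$ chains at $O(\log n)$ per test. Besides correctness, this reuse of the existing structure matters for resources: maintaining $O(\log n)$ augmented copies $\MCD(F(i,j))$ (needed to expose each root hull as a balanced tree) costs $O(n\log^2 n)$ space by \cref{lemm:dynamic_c_queue}, exceeding the $O(n\log n)$ space claimed in \cref{thm:disks}. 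To make your bit-indexed approach work you would need a query primitive that certifies a disk of the subfamily \emph{inside} $C_a$ (not mere hull intersection), which is precisely the missing idea.
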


\begin{proof}
        Let $s \in R_a$ and $t \in R_b$.
        Let $\pi_a$ denote the path from the root to the node $\mu$ where $R_a \in \phi(\mu)$. 
        Let $L_a$ be the set of all outer convex hulls of nodes in $\pi_a$ and their children, i.e. $L_a = \{ \CH(F(\gamma)) \mid  \gamma \textnormal{ child of a node } \gamma' \in \pi_a  \}$.  
        Note that $L_a$ contains $O(\log n)$ convex hulls, and that we can obtain these convex hulls in $O(\log^2 n)$ time by invoking Lemma~\ref{lemm:top_down} along $\pi_a$.
        Let $\Phi_a = \{ \CH(\phi(\gamma)) \mid \gamma \in \pi_a \}$.  Then $|\Phi_a| \in O(\log n)$ and we obtain it through a top-down traversal of $\pi_a$. 
        Finally we define $H_a = \OCH(\phi(\mu) - R_a - R_b )$ which we obtain in $O(\log^2 n)$ by updating the PHT over $\OCH(\phi(\nu))$. We define $L_b$, $\Phi_b$ and $H_b$ analogously and also obtain it in $O(\log^2 n)$ time. 
        
        Any vertex of $\OCH(F - R_a - R_b)$ must also be a vertex that appears in a chain in $\bigcup \{  L_a, L_b, \Psi_a, \Psi_b, H_a, H_b \}$. 
        If follows that $(s, t)$ is occupied if and only if the convex hull of $(R_a, R_b)$ contains a vertex in a chain in $\bigcup \{  L_a, L_b, \Psi_a, \Psi_b, H_a, H_b \}$.
        There are $O(\log n)$ convex chains in this set.
        For each of these chains we use the standard algorithm to test whether two convex hulls intersect in $O(\log n)$ time~\cite{chazelle1980detection}. 
\end{proof}

\begin{lemma}
        \label{lemm:hit_disk}
     For any $\nu \in T$, there are $O(k)$ subchains of $\CH(F(\nu))$ that contain a bridge of $\nu$ and that are hit in $F$.
     Given $\mathbb{E}(\nu)$ and the Booleans in our data structure, we can find these in $O(k \log^2 n)$ time.
\end{lemma}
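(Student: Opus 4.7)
The plan is to adapt \cref{lemm:hit_testing_fast} to the disjoint-disk setting, substituting the disk-specific primitives \cref{lemm:candidate_disk} and \cref{lemm:occupied_disk} for their polygon counterparts.

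First, I would apply \cref{lemm:candidate_disk} to enumerate, in $O(k)$ time, the set of $O(k)$ candidate chains of $\mathbb{E}(\nu)$ that contain a bridge of $\nu$. I then use \cref{def:hit} to decompose any hit subchain $(q, r, \ldots, s, t, \ldots, u, v)$ into two overlapping candidate chains $(q, r, \ldots, s, t)$ and $(s, t, \ldots, u, v)$ sharing the occupied edge $(s, t)$. Because disks are pairwise disjoint (so every edge of $\OCH(F)$ has its endpoints on distinct regions), each candidate chain has exactly two edges and each hit subchain has exactly three edges. Consequently, any bridge of $\nu$ lying in a hit subchain must lie in at least one of its two constituent candidate chains, so every hit subchain containing a bridge is produced by combining a bridge-containing candidate chain with one of its two neighbors in $\mathbb{E}(\nu)$.

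Second, for each of the $O(k)$ bridge-containing candidate chains $C$, I would locate the candidate chain immediately to the left and to the right of $C$ inside $\mathbb{E}(\nu)$ via the predecessor/successor pointers maintained in the concatenable queue, at $O(\log n)$ per lookup. For each resulting candidate pair, I would invoke \cref{lemm:occupied_disk} to test in $O(\log^2 n)$ time whether the shared edge is occupied in $F$; the pairs that pass this test are exactly the hit subchains containing a bridge. This yields $O(k)$ such subchains and a total runtime of $O(k \log^2 n)$, dominated by the $O(k)$ occupation tests.

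The main obstacle is the bookkeeping required to ensure exhaustiveness without double-counting. Exhaustiveness follows from the observation above that the bridge lies in at least one of the two candidate chains of the hit subchain, so scanning both the left and right neighbour of every bridge-containing candidate chain suffices. Duplicates (arising when both constituent candidate chains contain a bridge) can be removed in $O(k \log k)$ time by sorting the resulting pairs by the position of their shared edge in $\mathbb{E}(\nu)$, which does not affect the asymptotic bound.
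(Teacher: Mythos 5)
Your proposal is correct and follows essentially the same route as the paper: invoke \cref{lemm:candidate_disk} to enumerate the $O(k)$ bridge-containing candidate chains (each of exactly two edges, by disjointness), extend each by one edge to its left/right neighbour, and decide hitness by testing the shared middle edge with \cref{lemm:occupied_disk}, giving $O(k)\cdot O(\log^2 n) = O(k\log^2 n)$ total. Your extra remarks on exhaustiveness and deduplication are harmless refinements of the same argument and do not change the bound.
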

\begin{proof}
    We invoke \cref{lemm:candidate_disk} to find all $O(k)$ candidate chains that contain a bridge of $\nu$.
    We extend each such chain by one edge to the left or right to a chain $C'$, and use \cref{lemm:occupied_disk} to check whether $C'$ is hit in $F$.
\end{proof}

\begin{corollary} \label{corr:hit_disk}
            We can maintain for each $\nu \in \MCD(F)$ all subchains chains of $\mathbb{E}^*(\nu)$ that are hit in $F$ in a concatenable queue in $O( k \log^3 n)$ amortized time.
\end{corollary}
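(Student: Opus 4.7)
The plan is to mirror Corollary~\ref{corr:subchains_disk} exactly, substituting the hit-subchain test of Lemma~\ref{lemm:hit_disk} for the spanning-chain test of Lemma~\ref{lemm:disk_chain_test}. After a retrieval, a contiguous subchain of $\mathbb{E}(\nu)$ that is hit in $F$ can only be created or destroyed at a node whose leaf-set has changed; these are exactly the $O(\log n)$ nodes on the affected root-to-leaf path of $\MCD(F)$, plus any subtree rebuilt during scapegoating. For each such $\nu$ with children $x$ and $y$, I would apply Lemma~\ref{lemm:hit_disk} to enumerate all $O(k)$ contiguous subchains of $\CH(\nu)$ that are hit in $F$ but do not already lie in $\CH(x)$ or $\CH(y)$, at a cost of $O(k \log^2 n)$ per node.

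Having obtained the set of hit subchains at each node, I would maintain the auxiliary balanced binary tree attached to each concatenable queue $\mathbb{E}^*(\nu)$ in the same top-down/bottom-up fashion used for the previous augmentations: on the downward pass I split the ancestor tree at the bridge to hand subchains to the child, and on the upward pass I split the child trees at the new bridge $e(\nu)$ and join the pieces with the newly discovered hit subchains to form the updated parent. Each split or join costs $O(\log n)$, and the number of such operations per node is $O(k)$, which matches the recourse of Lemma~\ref{lemm:bottomup}.

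Summing over the $O(\log n)$ nodes along the root-to-leaf path gives $O(k \log^3 n)$ worst-case cost per retrieval outside of rebuilds. For a scapegoating rebuild of a subtree with $s$ leaves, I invoke Lemma~\ref{lemm:hit_disk} at each of its $O(s)$ nodes, for a total of $O(s \cdot k \log^2 n)$ time, which amortises to $O(k \log^2 n)$ per update via the standard scapegoating potential argument and so is subsumed by $O(k \log^3 n)$. The main subtlety to verify is that no stored hit subchain at an untouched node silently becomes stale: this holds because at an untouched node the leaf-set, and therefore both the edges in $\mathbb{E}^*(\nu)$ and the geometric regions that could witness the occupied condition within a band in the local slab, are unchanged, so the recorded status of every hit subchain there remains correct.
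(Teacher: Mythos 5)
Your proposal matches the paper's proof essentially step for step: the same observation that only the $O(\log n)$ nodes on the affected root-to-leaf path (plus rebuilt subtrees) can gain or lose hit subchains, the same invocation of Lemma~\ref{lemm:hit_disk} at $O(k \log^2 n)$ per node, the same concatenable-queue split/join maintenance as in the earlier corollaries, and the same scapegoating amortisation for rebuilds. The only slip is your intermediate claim that a rebuild amortises to $O(k \log^2 n)$ per update --- the standard scapegoat argument with per-node cost $O(k \log^2 n)$ gives amortised $O(k \log^3 n)$, exactly as the paper states --- but this does not affect the final $O(k \log^3 n)$ bound.
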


\begin{proof}
    For an update at a node $\mu$, only the $O(\log n)$ nodes on the root-to-leaf path of $\mu$ can gain or lose a contiguous subchain of $\mathbb{E}(\nu)$ that is hit in $F$.
    For each such node $\nu$ with children $x, y$, we invoke \cref{lemm:hit_disk} to find all $O(k)$ chains that are hit in $F$ and that are contiguous subchains of $\CH(\nu)$ but not of $\CH(x)$ or $\CH(y)$.
    We then maintain the balanced binary tree associated to each concatenable queue $\mathbb{E}^*(\nu)$  in an identical manner as previous corollaries.
    This takes $O(k \log^3 n)$ time per update. During scapegoating, we consider all $O(n)$ nodes $\nu$, so this takes amortized $O(k \log^3 n)$ time too. 
\end{proof}

Through \cref{corr:dividing_disk}, \ref{corr:hit_disk}, and \ref{corr:subchains_disk} we always have at the root of $\MCD(F)$ a doubly linked list of all edges of $\OCH(F)$ that are non-dividing,
and a doubly linked list of all continuous subchains of $\OCH(F)$ that are spanning in $F$, or hit in $F$.
We observe that Algorithm~\ref{alg:new_strategy} only considers occupied edges if all edges on $\OCH(F)$ are dividing,
in which case any occupied edge corresponds to a hit subchain.
This implies that we may execute Algorithm~\ref{alg:new_strategy} and so:

\begin{restatable}{theorem}{disks}
\label{thm:disks}
    Let $F$ be a family of $n$ pairwise disjoint disks with radii in $[1, k]$, or, a family of $n$ unit disks of ply $k$. 
    We can preprocess $F$ using $O(n \log n)$ space and $O(n ( k \log^3 n))$ time such that for $P \sim F$, we can reconstruct $\CH(P)$ in $O( r(F, P) \cdot k \log^3 n)$  time. 
\end{restatable}

\bibliography{refs}

@inproceedings{Kahan1991Model,
author = {Kahan, Simon},
title = {A model for data in motion},
year = {1991},
isbn = {0897913973},
publisher = {Association for Computing Machinery},
address = {New York, NY, USA},
doi = {10.1145/103418.103449},
booktitle = {ACM Symposium on Theory of Computing (STOC)},
pages = {265–277},
numpages = {13},
location = {New Orleans, Louisiana, USA}
}

@inproceedings{DBLP:conf/stoc/OvermarsV80,
  author       = {Mark H. Overmars and
                  Jan van Leeuwen},
  title        = {Dynamically Maintaining Configurations in the Plane (Detailed Abstract)},
  booktitle    = {Proceedings of the 12th Annual {ACM} Symposium on Theory of Computing,
                  April 28-30, 1980, Los Angeles, California, {USA}},
  pages        = {135--145},
  publisher    = {{ACM}},
  year         = {1980},
  url          = {https://doi.org/10.1145/800141.804661},
  doi          = {10.1145/800141.804661},
  timestamp    = {Tue, 06 Nov 2018 11:07:05 +0100},
  biburl       = {https://dblp.org/rec/conf/stoc/OvermarsV80.bib},
  bibsource    = {dblp computer science bibliography, https://dblp.org}
}

@article{buchin2011preprocessing,
  title={ {Preprocessing Imprecise Points for Delaunay Triangulation: Simplified and Extended}},
  author={Buchin, Kevin and L{\"o}ffler, Maarten and Morin, Pat and Mulzer, Wolfgang},
  journal={Algorithmica},
  year={2011},
  doi = {10.1007/s00453-010-9430-0}
}

@article{buchin2011delaunay,
  title= {{Delaunay Triangulations in {$O (\mathit{sort}(n))$} Time and More}},
  author={Buchin, Kevin and Mulzer, Wolfgang},
  journal={Journal of the ACM (JACM)},
  year={2011},
  doi = {10.1145/1944345.1944347}
}

@inproceedings{andersson1989improving,
  title={Improving partial rebuilding by using simple balance criteria},
  author={Andersson, Arne},
  booktitle={Workshop on Algorithms and Data Structures},
  pages={393--402},
  year={1989},
  organization={Springer}
}

@book{de2008computational,
  title={ {Computational Geometry: Algorithms and Applications, third edition}},
  author={de Berg, Mark and Cheong, Otfried and Van Kreveld, Marc and Overmars, Mark},
  year={2008},
  publisher={Springer}, 
  doi = {10.1007/978-3-540-77974-2}
}

@article{devillers2011delaunay,
  title={ {Delaunay Triangulation of Imprecise Points: Preprocess and Actually get a Fast Query Time}},
  author={Devillers, Olivier},
  journal={Journal of Computational Geometry (JoCG)},
  year={2011},
  doi = {10.20382/jocg.v2i1a3ff.
ffinria-00595823}
}

@article{evans2011possible,
  title={ {The Possible Hull of Imprecise Points}},
  author={Evans, William and Sember, Jeff},
  journal={Canadian Conference on Computational Geometry (CCCG)},
  year={2011},
  url = {https://www.cs.ubc.ca/~will/papers/possiblehull.pdf}
}

@inproceedings{chazelle1980detection,
  title={Detection is easier than computation},
  author={Chazelle, Bernard and Dobkin, David P},
  booktitle={ACM Symposium on Theory Of Computing (STOC) },
  year={1980}
}

@article{ezra2013convex,
  title={ {Convex Hull of Points Lying on Lines in $o(n\log n)$ Time after Preprocessing}},
  author={Ezra, Esther and Mulzer, Wolfgang},
  journal={Computational Geometry},
  year={2013},
  doi = {10.1016/j.comgeo.2012.03.004}
}

@inproceedings{haeupler2025bidirectional,
  title={Bidirectional Dijkstra’s Algorithm is Instance-Optimal},
  author={Haeupler, Bernhard and Hlad{\'\i}k, Richard and Rozho{\v{n}}, V{\'a}clav and Tarjan, Robert E and T{\v{e}}tek, Jakub},
  booktitle={Symposium on Simplicity in Algorithms (SOSA)},
  pages={202--215},
  year={2025},
  organization={SIAM},
    doi = {10.1137/1.9781611978315.16}
}

@inproceedings{van2025simpler,
  title={Simpler Optimal Sorting from a Directed Acyclic Graph},
  author={van der Hoog, Ivor and Rotenberg, Eva and Rutschmann, Daniel},
  booktitle={Symposium on Simplicity in Algorithms (SOSA)},
  year={2025},
  organization={SIAM},
    doi = {10.1137/1.9781611978315.26}
}

@inproceedings{van2024tight,
  title={Tight bounds for sorting under partial information},
  author={Van Der Hoog, Ivor and Rutschmann, Daniel},
  booktitle={Symposium on Foundations of Computer Science (FOCS)},
  year={2024},
  organization={IEEE},
    doi = {10.1109/FOCS61266.2024.00131}
}

@inproceedings{van2022preprocessing,
  title={Preprocessing Imprecise Points for the Pareto Front},
  author={van der Hoog, Ivor and Kostitsyna, Irina and L{\"o}ffler, Maarten and Speckmann, Bettina},
  booktitle={ACM-SIAM Symposium on Discrete Algorithms (SODA)},
  year={2022},
  organization={SIAM}
}

@article{afshani2017instance,
  title={{Instance-optimal Geometric Algorithms}},
  author={Afshani, Peyman and Barbay, J{\'e}r{\'e}my and Chan, Timothy M.},
  journal={Journal of the ACM (JACM)},
  year={2017}
}

@article{bruce2005efficient,
  title={ {Efficient Update Strategies for Geometric Computing with Uncertainty}},
  author={Bruce, Richard and Hoffmann, Michael and Krizanc, Danny and Raman, Rajeev},
  journal={Theory of Computing Systems (TOCS)},
  year={2005},
    doi = {10.1007/s00224-004-1180-4}
}

@article{held2008triangulating,
  title={ {Triangulating Input-constrained Planar Point Sets}},
  author={Held, Martin and Mitchell, Joseph},
  journal={Information Processing Letters (IPL)},
  year={2008},
  doi = {10.1016/j.ipl.2008.09.016}
}

@article{loffler2013unions,
  title={ {Unions of Onions: Preprocessing Imprecise Points for Fast Onion Decomposition}},
  author={L{\"o}ffler, Maarten and Mulzer, Wolfgang},
  journal={Journal of Computational Geometry (JoCG)},
  year = 2014,
  doi = {10.1007/978-3-642-40104-6_42}
}

@article{loffler2010delaunay,
  title={ {Delaunay Triangulation of Imprecise Points in Linear Time after Preprocessing}},
  author={L{\"o}ffler, Maarten and Snoeyink, Jack},
  journal={Computational Geometry},
  year={2010},
  doi = {10.1016/j.comgeo.2008.12.007}
}

@article{kahn_balancing_1984,
	title = {Balancing poset extensions},
	volume = {1},
	issn = {1572-9273},
	url = {https://doi.org/10.1007/BF00565647},
	doi = {10.1007/BF00565647},
	abstract = {We show that any finite partially ordered setP (not a total order) contains a pair of elementsx andy such that the proportion of linear extensions ofP in whichx lies belowy is between 3/11 and 8/11. A consequence is that the information theoretic lower bound for sorting under partial information is tight up to a multiplicative constant. Precisely: ifX is a totally ordered set about which we are given some partial information, and ife(X) is the number of total orderings ofX compatible with this information, then it is possible to sortX using no more thanC log2e (X) comparisons whereC is approximately 2.17.},
	language = {en},
	number = {2},
	urldate = {2024-02-22},
	journal = {Order},
	author = {Kahn, Jeff and Saks, Michael},
	month = jun,
	year = {1984},
	keywords = {06A10, 68E05, comparison, information theoretic bound, linear extension, Sorting},
	pages = {113--126},
	file = {Full Text PDF:/home/daru/Zotero/storage/3IFSBRJT/Kahn and Saks - 1984 - Balancing poset extensions.pdf:application/pdf},
}

@inproceedings{kahn_entropy_1992,
	address = {New York, NY, USA},
	series = {{STOC} '92},
	title = {Entropy and sorting},
	isbn = {978-0-89791-511-3},
	url = {https://dl.acm.org/doi/10.1145/129712.129731},
	doi = {10.1145/129712.129731},
	abstract = {We reconsider the old problem of sorting under partial information, and give polynomial time algorithms for the following tasks. (1) Given a partial order P, find (adaptively) a sequence of comparisons (questions of the form, “is x {\textless} y?”) which sorts (i.e. finds an unknown linear extension of) P using O(log(e(P))) comparisons in worst case (where e(P) is the number of linear extensions of P). (2) Compute (on line) answers to any comparison algorithm for sorting a partial order P which force the algorithm to use Ω(log(e(P))) comparisons. (3) Given a partial order P of size n, estimate e(P) to within a factor exponential in n. (We give upper and lower bounds which differ by the factor nn/n!.) Our approach, based on entropy of the comparability graph of P and convex minimization via the ellipsoid method, is completely different from earlier attempts to deal with these questions.},
	urldate = {2024-02-22},
	booktitle = {Proceedings of the twenty-fourth annual {ACM} symposium on {Theory} of {Computing}},
	publisher = {Association for Computing Machinery},
	author = {Kahn, Jeff and Kim, Jeong Han},
	month = jul,
	year = {1992},
	pages = {178--187},
	file = {Full Text PDF:/home/daru/Zotero/storage/MZLVMHPS/Kahn and Kim - 1992 - Entropy and sorting.pdf:application/pdf},
}

@article{loffler2025preprocessing,
  title={Preprocessing Disks for Convex Hulls, Revisited},
  author={L{\"o}ffler, Maarten and Raichel, Benjamin},
  journal={arXiv preprint arXiv:2502.03633},
  year={2025}
}

@Article{cardinal_sorting_2013,
author={Cardinal, Jean
and Fiorini, Samuel
and Joret, Gwena{\"e}l
and Jungers, Rapha{\"e}l 
and Munro, J. Ian},
title={Sorting under partial information (without the ellipsoid algorithm)},
journal={Combinatorica},
year={2013},
month={Dec},
day={01},
volume={33},
number={6},
pages={655-697},
abstract={We revisit the well-known problem of sorting under partial information: sort a finite set given the outcomes of comparisons between some pairs of elements. The input is a partially ordered set P, and solving the problem amounts to discovering an unknown linear extension of P, using pairwise comparisons. The information-theoretic lower bound on the number of comparisons needed in the worst case is log e(P), the binary logarithm of the number of linear extensions of P. In a breakthrough paper, Jeff Kahn and Jeong Han Kim (J. Comput. System Sci. 51 (3), 390--399, 1995) showed that there exists a polynomial-time algorithm for the problem achieving this bound up to a constant factor. Their algorithm invokes the ellipsoid algorithm at each iteration for determining the next comparison, making it impractical.},
issn={1439-6912},
doi={10.1007/s00493-013-2821-5},
url={https://doi.org/10.1007/s00493-013-2821-5}
}

@inbook{Haeupler25,
author = {Bernhard Haeupler and Richard Hladík and John Iacono and Václav Rozhoň and Robert E. Tarjan and Jakub Tětek},
title = {Fast and Simple Sorting Using Partial Information},
booktitle = {ACM-SIAM Symposium on Discrete Algorithms (SODA)},
year = 2025,
published = ACM,
chapter = {},
pages = {3953-3973},
doi = {10.1137/1.9781611978322.134},
URL = {https://epubs.siam.org/doi/abs/10.1137/1.9781611978322.134},
eprint = {https://epubs.siam.org/doi/pdf/10.1137/1.9781611978322.134},
    abstract = { Abstract We consider the problem of sorting n items, given the outcomes of m pre-existing comparisons. We present a simple and natural deterministic algorithm that runs in O(m + log T ) time and does O(log T ) comparisons, where T is the number of total orders consistent with the pre-existing comparisons. Our running time and comparison bounds are best possible up to constant factors, thus resolving a problem that has been studied intensely since 1976 (Fredman, Theoretical Computer Science). The best previous algorithm with a bound of O(log T ) on the number of comparisons has a time bound of O(n2.5) and is more complicated. Our algorithm combines three classic algorithms: topological sort, heapsort with the right kind of heap, and efficient search in a sorted list. It outputs the items in sorted order one by one. It can be modified to stop early, thereby solving the important and more general top-k sorting problem: Given k and the outcomes of some pre-existing comparisons, output the smallest k items in sorted order. The modified algorithm solves the top-k sorting problem in minimum time and comparisons, to within constant factors. }
}

@article{van2019preprocessing,
  title={ {Preprocessing Ambiguous Imprecise Points}},
  author={van der Hoog, Ivor and Kostitsyna, Irina and L{\"o}ffler, Maarten and Speckmann, Bettina},
  journal={International Symposium on Computational Geometry (SoCG)},
  year={2019},
  doi = {10.4230/LIPIcs.SoCG.2019.42}
}

@article{van2010preprocessing,
  title={ {Preprocessing Imprecise Points and Splitting Triangulations}},
  author={van Kreveld, Marc and L{\"o}ffler, Maarten and Mitchell, Joseph},
  journal={SIAM Journal on Computing (SICOMP)},
  year={2010},
  doi = {10.1137/090753620}
}

\appendix

\newpage
\section{Related work}
\label{app:related}

We elaborate on the two related works most closely aligned with this paper.
Bruce, Hoffmann, Krizanc, and Raman~\cite{bruce2005efficient} presented the first instance-optimal retrieval strategy. However, they did not provide an explicit retrieval program with an accompanying algorithmic running time. We elaborate on their retrieval strategy and discuss the complexity considerations that arise when implementing a corresponding retrieval program.

Recently and independently of our work, L\"{o}ffler and Raichel~\cite{loffler2025preprocessing} developed an algorithm for reconstructing the convex hull when $F$ is a set of unit disks. They restrict the overlap in $F$ to ply $k$. The number of retrievals performed by their algorithm is not instance-optimal. We discuss their approach and provide commentary on its running time.

\subparagraph{The algorithm by Bruce, Hoffmann, Krizanc, and Raman~\cite{bruce2005efficient} and its running time.}
Bruce et al.~\cite{bruce2005efficient} present an instance-optimal retrieval strategy based on categorising regions in $F$. The key idea is to partition the curent set of regions into four categories (see Figure~\ref{fig:partition}). Each region is assigned the highest applicable category, as follows:

\begin{itemize}
    \item \emph{Always}: A region $R \in F$ is \emph{always} if for all $p \in R$ and all $P' \sim (F - R)$, the point $p$ lies on the convex hull of $P' \cup \{p\}$.
    \item \emph{Partly}: A region $R \in F$ is \emph{partly} if there exists $p \in R$ such that for all $P' \sim (F - R)$, the point $p$ lies on the convex hull of $P' \cup \{p\}$.
    \item \emph{Dependent}: A region $R \in F$ is \emph{dependent} if there exists $p \in R$ and $P' \sim (F - R)$ such that $p$ lies on the convex hull of $P' \cup \{p\}$.
    \item \emph{Never}: A region $R \in F$ is \emph{never} if for all $p \in R$ and all $P' \sim (F - R)$, the point $p$ never lies on the convex hull of $P' \cup \{p\}$.
\end{itemize}

\noindent
A region is classified as \emph{partly} if and only if it is not \emph{always} and it appears on the outer convex hull of $F$. Let $F$ denote the current set of regions, where some regions $R_i$ have already been retrieved and replaced by their corresponding point $p_i$. The retrieval strategy then:

\begin{itemize}
    \item If there exists a region $R \in F$ that is \emph{partly}, then they show the existence of a constant-size \emph{witness} set $W$ such that $R \in W$. All regions in $W$ are retrieved, and the algorithm recurses.
    \item Otherwise, if there exists a region $R \in F$ that is \emph{dependent}, then -- by the absence of partly regions -- they again show the existence of a constant-size witness $W$ with $R \in W$. The regions in $W$ are retrieved, and the algorithm recurses.
    \item If no region is partly or dependent, then it follows that for all $P_1 \sim F$ and $P_2 \sim F$, the cyclic order $\precorder(\CH(P_1)) = \precorder(\CH(P_2))$, and the algorithm terminates.
\end{itemize}

While the strategy is instance-optimal, no explicit retrieval program or running time analysis is provided. Constructing such a program would be a substantial contribution in its own right. Nevertheless, we make two observations that any such retrieval program would need to address.
First, the algorithm requires maintaining, at all times, whether any partly or dependent region exists. A natural approach is to dynamically maintain two sets: $A \subset F$, the set of partly regions, and $B \subset F$, the set of dependent regions. However, these sets can exhibit $\Theta(n)$ \emph{recourse}: after retrieving a single region, up to $\Theta(n)$ other regions may appear on the outer convex hull and become partly. It remains unclear whether these sets can be maintained in amortised constant time.
Second, the program must compute, for each selected partly or dependent region $R$, a corresponding \emph{witness} set $W$. The definition of a witness can be found in~\cite{bruce2005efficient}. We note that the witness $W$ of a region $R$ cannot be precomputed: the set of regions that may be a witness for $R$ can change dramatically after a retrieval.

\begin{figure}[t]
    \centering
    \includegraphics[]{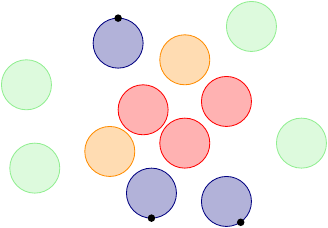}
    \caption{A set of (point) regions  $F$ and a partition of these regions into always (green), partly (blue), dependent (orange) and never (red) regions. For partly regions $R$, we illustrate a point $p \in R$ where $\forall P' \sim (F - R)$, $p$ lies on the convex hull of $P' \cup \{ p \}$. }
    \label{fig:partition}
\end{figure}

\subparagraph{The algorithm by L\"{o}ffler and Raichel~\cite{loffler2025preprocessing}.}
L\"{o}ffler and Raichel~\cite{loffler2025preprocessing} consider the original input set $F$ and partition it into three categories: always regions, never regions, and a third group consisting of partly and dependent regions. Their retrieval strategy is simple: they retrieve all regions in this third group.
Let $W(F) = \max_{P \sim F} r(F, P)$ denote the worst-case number of retrievals over all realisations $P \sim F$. Their algorithm retrieves $W(F)$ regions. As such, their result is not instance-optimal but \emph{worst-case optimal given $F$} -- a strictly weaker guarantee. This makes it easier to achieve, since the set of $W(F)$ regions can be identified during preprocessing.

There exist sets $F$ for which $W(F) \ll n$, and the strength of their contribution lies in designing a retrieval program whose running time depends on $W(F)$ rather than $n$. Specifically, when $F$ is a set of unit disks of ply $k$, their reconstruction algorithm runs in $O(k^3 W(F))$ time.
Thus, just as in this work, there exist sets $F$ for which their reconstruction takes sublinear time.
However, we also note that there exists sets $F$ and point sets $P \sim F$ where their reconstruction takes linear time, and ours constant.

\end{document}